\newcommand{\bra}[1]{\left \langle{#1} \right |}
\newcommand{\ket}[1]{\left |{#1} \right \rangle}
\definecolor{blue}{rgb}{0,0.2,1}
\definecolor{red}{rgb}{0.9,0,0}
\newcommand{\Ord}[1]{\mathcal{O}\left( #1 \right)}
\newcommand{\tOrd}[1]{\widetilde{\mathcal{O}}\left( #1 \right)}
\newcommand{\Tht}[1]{\Theta \left( #1 \right)}
\newcommand{\Omg}[1]{\Omega \left( #1 \right)}
\theoremstyle{plain}
\newtheorem{oracle}{Data Input}
\newtheorem{theorem}{Theorem}
\newtheorem{lemma}{Lemma}
\newtheorem{cor}{Corollary}
\newtheorem{fact}{Fact}
\def\be{\begin{eqnarray}}
\def\ee{\end{eqnarray}}
\definecolor{Pr}{rgb}{0.4,0.3,0.9}
\begin{document}

\title{Quantum algorithms for hedging and the learning of Ising models}
\author{Patrick Rebentrost}
\thanks{Centre for Quantum Technologies, National University of Singapore, Singapore 117543}
\email{cqtfpr@nus.edu.sg}
\author{Yassine Hamoudi}
\thanks{Universit\'e de Paris, IRIF, CNRS, F-75013 Paris, France.}
\email{yassine.hamoudi@irif.fr}
\author{Maharshi Ray}
\thanks{Centre for Quantum Technologies, National University of Singapore, Singapore 117543.}
\author{Xin Wang}
\thanks{Institute for Quantum Computing, Baidu Research, Beijing 100193, China.}
\author{Siyi Yang}
\thanks{Centre for Quantum Technologies, National University of Singapore, Singapore 117543.}
\author{Miklos Santha}
\thanks{Universit\'e de Paris, IRIF, CNRS, F-75013 Paris, France;  and Centre for Quantum Technologies and  MajuLab UMI 3654, National University of Singapore, Singapore 117543.}
\email{cqtms@nus.edu.sg}

\date{\today}
\begin{abstract}
A paradigmatic algorithm for online learning is the Hedge algorithm by Freund and Schapire.
An allocation into different strategies is chosen for multiple rounds and each round incurs corresponding losses for each strategy. The algorithm obtains a favorable guarantee for the total losses even in an adversarial situation.
This work presents quantum algorithms for such online learning in an oracular setting. For $T$ time steps and $N$ strategies, we exhibit run times of about $\Ord{{\rm poly} (T) \sqrt{N}}$ for  estimating the losses and for betting on individual strategies by sampling.
In addition, we discuss a quantum analogue of the Sparsitron, a machine learning algorithm based on the Hedge algorithm. The quantum algorithm inherits the provable learning guarantees from the classical algorithm and exhibits polynomial speedups.
The speedups may find relevance in finance, for example for hedging risks, and machine learning, for example for learning generalized linear models or Ising models.
\end{abstract}

\maketitle

\section {Introduction}

Optimization is a cornerstone of machine learning and artificial intelligence.  Examples are the training of support vector machines and neural networks for tasks such as the analysis of audio, texts, and images.
A great deal of quantum algorithm developments have been concerned with quantum speedups for optimization problems, and in particular for convex optimization problems. Generic convex optimization in the quantum oracle model was discussed in \cite{Chakrabarti2018,Apeldoorn2018}.
Special cases of convex programming are linear programs (LP) and semidefinite programs (SDP), which involve optimizing a linear function of a vector or matrix subject to linear contraints, respectively.
Several quantum algorithms have been discussed for these two convex programs using amplitude amplification and estimation and Gibbs sampling \cite{Brassard2002,AK16, BS17,vAG18}.
Linear programs can also be mapped to zero-sum games, for which a linear time classical solver was developed by Grigoriadis and Khachiyan~\cite{GK95}.
For zero-sum games,  quantum algorithms were obtained in~\cite{vAG19,LCW19}, which in turn also apply to LPs.
Beyond LPs and SDPs, sublinear algorithms for quadratic constraints
are discussed by Clarkson, Hazan and Woodruff~\cite{CHW12},
which includes for example the classification of data points with a margin, a kernel-based classification, the minimum enclosing ball problem, and $\ell_2$-margin support vector machines.
Reference~\cite{LCW19} provides corresponding quantum algorithms based on the amplitude amplification and estimation subroutines.

These quantum optimization algorithms assume a quantum oracle that can be queried in superposition. Such a setting dates back to early quantum  algorithms such as Grover's or Deutsch-Jozsa's. In machine learning, data are usually generated from an external source, such as users providing ratings to movies or products. In this case, quantum random access memory (QRAM) is discussed as a way to make such data available to a quantum algorithm \cite{giovannetti2008quantum,Lloyd2013}.
The oracle framework used in this work encompasses such a QRAM data access and also the case of access to a computable function. For a quantum algorithm's output, a possible way is to encode the output in a quantum state. A famous example is the work of Harrow, Hassidim and Lloyd (HHL)~\cite{HHL09} for solving linear system of equations. The solution to a linear system $Ax=b$ is provided by a quantum state $\ket x$ upon which measurements can be performed to obtain classically relevant information. Under some well-discussed conditions~\cite{HHL09,Aar15}, the HHL algorithm can achieve an exponential quantum speedup.
In contrast,  in the optimization works mentioned above and in the present work, the output of the algorithm is inherently classical. These algorithms are hybrid, that is partly of classical and partly of quantum nature, and designed in a modular way so that the quantum part of the algorithm can be treated as a separate building block.
The algorithms make quantum improvements on parts of the (best) available classical algorithm while keeping its overall structure intact.
In contrast to the HHL algorithm for example, here the quantum versus classical speedup is usually at most polynomial, in most cases at most quadratic in the domain size of the input function. The quantum algorithm can deliver some speedup in the dimension of the problem, while in other relevant parameters it might not necessarily achieve any speedup, sometimes it can even be worse than the best classical algorithm.

Consider a game with $T$ rounds, where we have the chance to play a mixture of $N$ different strategies at each round and observe the results of our choice in the next round.
The setting is ``online" in the sense that the results are unknown ahead of time, which also can be seen as an idealized version of  sports betting or stock market trading.
The Hedge algorithm by Freund and Schapire adaptively changes the mixture of strategies (a probability vector) via multiplicative weight updates \cite{Freund1997}.
This strategy allows for  losses after $T$ rounds that are not much worse than the minimum achievable ``offline" loss. Here, ``offline" means that the strategy is picked in advance without any adaptation. This difference of online and minimum achievable offline loss is often called ``regret", a slight misnomer as in the online setting it is impossible to compute the minimum offline loss in advance.
Precisely, it can be shown that the regret of the Hedge algorithm is not
worse than $\sqrt{2 T \log N} + \log N$. The classical complexity for this algorithm is $\Ord{T N}$: at each round we have to perform the multiplicative update, an effort that is proportional to $N$. While $T$ and $N$ are arbitrary, in most applications when applied in learning theory for example, $T$ is much smaller than $N$, e.g., $T= \Ord{\log N}$.

In this work, we provide quantum algorithms in the online learning/hedging scenario.
Assuming appropriate oracles for the online loss information, we first exhibit quantum speedups for two settings, which can be considered the passive and active setting. In the passive setting, we are interested in estimating the total loss after $T$ rounds without ever writing down the full probability vector and without making active decisions.
We obtain an $\epsilon$-accurate estimate of the total loss with high probability and with
a query complexity of
$\Ord{\frac{T^{2} \sqrt{N}}{ \epsilon}}$
and a gate complexity of $\tOrd{\frac{T^{2}\sqrt{N}}{\epsilon}}$ via amplitude estimation.
We use the notation $\tOrd{}$ to hide poly-logarithmic factors in any of the variables.
The improvement in $N$ and worsening in $T$ is acceptable in the common situation when $T$ is much smaller than $N$.
Furthermore, we consider the active setting where at each round the strategy is executed and incurs a transaction cost. For this active setting we prepare the relevant quantum state with amplitude amplification, sample from it, and bet on the outcome. In the case when every allocation into a particular strategy comes at a transaction cost, the sampling setting has the advantage of reducing such costs.
We again obtain a quantum speedup in $N$ while the loss of such a strategy remains close to the minimum loss with high probability.
We note work on quantum speedups for the Hedge algorithm and the related adaptive boosting technique in \cite{Wang2019_boost, Arunachalam2020}.

The main motivation for discussing the Hedge algorithm is its application in optimization and machine learning. In the second part of this work, we provide a quantum algorithm for the Sparsitron \cite{Klivans2017}. The Sparsitron is a supervised learning algorithm based on the multiplicative weights algorithm by Freund and Schapire. The algorithm can be used to learn a Generalized Linear Model (GLM) and Ising models from training examples. For each $N$-dimensional training example, a loss is computed which takes into account a non-linear, potentially \textit{non-convex}, activation function and an inner product between the training example and a weight vector. For guaranteed learning from the data with a certain accuracy $\epsilon$ (to be defined), the run time is about
$\tOrd{\frac{N}{\epsilon^4 }}$.
We present a classical approximate Sparsitron algorithm, which estimates the inner products instead of computing them exactly (Theorem \ref{theoremSparsitronApprox}). The runtime of this algorithm is  $\tOrd{\frac{N}{\epsilon^2 } + \frac{1}{\epsilon^6}}$, improving on the original algorithm if, say,  $N >\frac{1}{\epsilon^2}$.
Subsequently, we present a quantum algorithm called the Quantum Sparsitron (Theorem \ref{theoremQuantumSparsitron}).
The quantum algorithm uses quantum inner product estimation and achieves a run time of about
$\tOrd{\frac{\sqrt{N}}{\epsilon^7 }}$,  a polynomial quantum speedup compared to the classical algorithm with respect to the dimension of the data.
As a corollary, we derive a polynomial quantum speedup for the learning of Ising models (Corollary \ref{corollaryIsing}).

Regarding notation, we use $[N]$ to denote the set $\{1,\dots,N\}$, where $N \in \mathbbm Z_+$. We write a vector plainly as $x$ without any special furnishing, however we use $\vec 0$ and $\vec 1$ to denote the all $0$s and all $1$s vector, respectively.
The $\ell_1$-norm of a vector $x \in \mathbbm R^N$ is given by
$\Vert x \Vert_1 := \sum_{j=1}^N \left \vert x_j \right \vert$.
The maximum element of a vector $x \in \mathbbm R^N$ is given by
$\Vert x \Vert_{\max} := \max_{j\in [N]} \left \vert x_j \right \vert$, also sometimes denoted by  $x_{\max}$.
Equivalently, the maximum absolute element of a matrix $A$ is denoted by $\Vert A \Vert_{\max}$.
For $x,y \in \mathbbm R^N$, we write the inner product as $x \cdot y$ and the element-wise vector multiplication as $x  \odot y \in \mathbbm R^N$, where $(x  \odot y)_j = x_j y_j$. For $c \in \mathbbm R$ and  $x \in \mathbbm R^N$, $c^x \in \mathbbm R^N$ is understood element-wise as $(c^x)_j = c^{x_j}$.
We use $\ket{\bar 0}$ to denote the multi-qubit state $\ket{0}\otimes \dots \otimes \ket{0}$, where the number of qubits is clear from the context.
As mentioned, we use the notation $\tOrd{}$ to hide poly-logarithmic factors in any of the variables.

\section{Classical Hedge algorithms}

\subsection{Original algorithm}

We follow Ref.~\cite{Freund1997} for the discussion of the classical Hedge algorithm. We are given $N$ strategies for a game that takes $T$ rounds. Before each time $t\in[T]$, we choose an assignment (portfolio) of the $N$ strategies. This assignment shall be given by the weights
$
w^{(t)} = \left (w_1^{(t)},\dots, w_N^{(t)} \right ) ^\dagger \in \mathbbm [0,1]^N,
$
which form the probability vector
\be \label{eqProbabilities}
p^{(t)} = \left (p_1^{(t)},\dots, p_N^{(t)} \right ) ^\dagger = \frac{1}{\Vert w^{(t)} \Vert_1} \left (w_1^{(t)},\dots, w_N^{(t)} \right) ^\dagger.
\ee
The initial allocation is taken to be uniform, i.e.,
$
w^{(1)} = \left (1/N,\dots, 1/N\right)^\dagger$ and $p^{(1)} = \left (1/N,\dots, 1/N \right)^\dagger.
$
The algorithm considers an online learning setting, where information arrives over time and the weights are updated accordingly. Specifically, at each time $t\in[T]$, we observe the loss vector
\be
l^{(t)} = \left (l_1^{(t)},\dots, l_N^{(t)} \right) ^\dagger \in  [0,1]^N.
\ee
Algorithmically, we describe this as ``Receive loss vector $l^{(t)}$", which means we obtain  access to the loss vector.
To avoid further complexities, we assume that each loss $l^{(t)}_j$ takes a constant number of bits to specify.
The loss at time $t $ is given by
\be \label{eqStepLoss}
L^{(t)} := \sum_{i=1}^N p_i ^{(t)} l_i ^{(t)} \equiv p ^{(t)} \cdot l ^{(t)} \in [0,1].
\ee
Algorithmically,  this loss is taken into account with the statement ``Suffer loss $L^{(t)}$", which means we add $L^{(t)}$ to the overall loss amount.
A strategy to minimize losses was shown in Ref.~\cite{Freund1997}. Take $\beta \in (0,1)$. The strategy is based on multiplicative updates to the weights given the incoming loss information as
$w_j^{(t)} = \beta^{l^{(t-1)}_j} w_j^{(t-1)}$,
which for the full path up to $t$ is
$w_j^{(t)} = \beta^{\sum_{t'=1}^{t-1} l^{(t')}_j} w_j^{(1)}$.
We also write $w^{(t)}  \odot \beta^{l^{(t)}}$, using the notation for the element-wise vector multiplication, and $\beta^{l^{(t)}}$ is understood element-wise.

The original algorithm is given in Algorithm \ref{algSchapire}, setting the
flag to ``null".
The accumulated loss of this algorithm (denoted by $\mathcal H$ for hedge) over $T$ rounds is
\be
L_{ \mathcal H} := \sum_{t=1}^T  L^{(t)}.
\ee
On the other hand, consider the ``offline loss", $L^{(T)}_{\min} = \min_{j\in[N]}   \sum_{t=1}^T l_j^{(t)}$,
which gives the minimum loss achievable when choosing the same single strategy for all rounds of the game. 
Ref.~\cite{Freund1997} shows a ``regret" bound for the losses of the multiplicative update strategy.
\begin{theorem}[Regret bound \cite{Freund1997}]\label{thmRegret}
With the definitions above and $\beta = 1/(1+\sqrt{2 \log N /T})$, Algorithm \ref{algSchapire} with flag ``null'' achieves
\be
L_{ \mathcal H} - L^{(T)}_{\min} \leq \sqrt{2 T \log N} + \log N.
\ee
\end{theorem}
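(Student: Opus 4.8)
The plan is to follow the standard potential-function argument for multiplicative-weights, using the total weight $W^{(t)} := \Vert w^{(t)}\Vert_1 = \sum_{j=1}^N w_j^{(t)}$ as the potential. The idea is to bound $W^{(T+1)}$ from above by tracking how much it shrinks each round, and from below by the contribution of the single best offline strategy; comparing the two bounds yields the regret inequality.

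**First** I would derive the per-round upper bound. Using the update $w_j^{(t+1)} = \beta^{l_j^{(t)}} w_j^{(t)}$ and the elementary inequality $\beta^{x} \le 1 - (1-\beta)x$ valid for $\beta\in(0,1)$ and $x\in[0,1]$ (convexity of $x\mapsto\beta^x$ on $[0,1]$), I get
\be
W^{(t+1)} = \sum_{j=1}^N \beta^{l_j^{(t)}} w_j^{(t)} \le \sum_{j=1}^N \left(1 - (1-\beta) l_j^{(t)}\right) w_j^{(t)} = W^{(t)}\left(1 - (1-\beta) L^{(t)}\right),
\ee
where the last equality uses $\sum_j p_j^{(t)} l_j^{(t)} = L^{(t)}$ from \eqref{eqStepLoss}. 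Iterating from $t=1$ to $T$ and using $1-x \le e^{-x}$ gives $W^{(T+1)} \le W^{(1)} \prod_{t=1}^T \left(1-(1-\beta)L^{(t)}\right) \le N\cdot\frac1N\, e^{-(1-\beta)\sum_t L^{(t)}} = e^{-(1-\beta) L_{\mathcal H}}$, since $W^{(1)}=1$.

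**Next** I would derive the lower bound. For any fixed strategy $j$, $W^{(T+1)} \ge w_j^{(T+1)} = \frac1N \beta^{\sum_{t=1}^T l_j^{(t)}}$, and choosing $j$ to be the offline-optimal strategy gives $W^{(T+1)} \ge \frac1N \beta^{L^{(T)}_{\min}}$. Combining with the upper bound and taking logarithms yields $-\log N + L^{(T)}_{\min}\log\beta \le -(1-\beta) L_{\mathcal H}$, i.e.
\be
L_{\mathcal H} \le \frac{L^{(T)}_{\min}\,\ln(1/\beta) + \ln N}{1-\beta}.
\ee

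**The main obstacle** — really the only non-routine step — is turning this into the clean bound $\sqrt{2T\log N}+\log N$ by plugging in $\beta = 1/(1+\sqrt{2\log N/T})$. Here I would use $\ln(1/\beta) = \ln(1+\gamma) \le \gamma$ with $\gamma := \sqrt{2\log N/T}$, and $1-\beta = \gamma/(1+\gamma)$, so the bracket becomes $\le (L^{(T)}_{\min}\gamma + \ln N)(1+\gamma)/\gamma = L^{(T)}_{\min}(1+\gamma) + \ln N(1+\gamma)/\gamma$. Then bound $L^{(T)}_{\min} \le T$, so $L^{(T)}_{\min}\gamma \le T\gamma = \sqrt{2T\log N}$ and $\ln N/\gamma = \sqrt{T\log N/2}$; collecting terms and checking that $L^{(T)}_{\min}\gamma + \ln N/\gamma \cdot \gamma = $ (cross terms) sum to at most $\sqrt{2T\log N}+\log N$ after the algebra simplifies. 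The delicate part is bookkeeping the $(1+\gamma)$ factors so nothing is over-counted; I would keep $L^{(T)}_{\min}$ symbolic as long as possible and only substitute $L^{(T)}_{\min}\le T$ where it tightens the final expression, exactly as in Ref.~\cite{Freund1997}.
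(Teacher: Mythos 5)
The paper does not reprove this theorem; it is imported verbatim from Freund and Schapire, and your potential-function argument is precisely their proof up to the last step. The per-round bound $W^{(t+1)}\le W^{(t)}\bigl(1-(1-\beta)L^{(t)}\bigr)$ via $\beta^x\le 1-(1-\beta)x$, the telescoping with $1-x\le e^{-x}$, the lower bound $W^{(T+1)}\ge \frac1N\beta^{L^{(T)}_{\min}}$, and the resulting master inequality $L_{\mathcal H}\le\bigl(L^{(T)}_{\min}\ln(1/\beta)+\ln N\bigr)/(1-\beta)$ are all correct and standard.

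The gap is in the final optimization over $\beta$, which you flag as delicate but resolve with a tool that is too weak. With $\gamma=\sqrt{2\log N/T}$, your bound $\ln(1/\beta)=\ln(1+\gamma)\le\gamma$ yields $L_{\mathcal H}-L^{(T)}_{\min}\le L^{(T)}_{\min}\gamma+\ln N/\gamma+\ln N\le \sqrt{2T\log N}+\sqrt{T\log N/2}+\log N=\frac{3}{2}\sqrt{2T\log N}+\log N$, strictly weaker than the stated $\sqrt{2T\log N}+\log N$. The algebra does not ``simplify'' further: the two square-root terms you produce are $\sqrt{2T\log N}$ and $\tfrac12\sqrt{2T\log N}$, and they add rather than cancel. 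To recover the stated constant you need the sharper elementary inequality $-\ln\beta\le\frac{1-\beta^2}{2\beta}$ for $\beta\in(0,1]$ (Lemma~4 of Freund--Schapire), which gives $\frac{-L^{(T)}_{\min}\ln\beta}{1-\beta}\le L^{(T)}_{\min}\frac{1+\beta}{2\beta}=L^{(T)}_{\min}\bigl(1+\gamma/2\bigr)$; the cross term then contributes only $T\gamma/2=\sqrt{T\log N/2}$, which together with $\ln N/\gamma=\sqrt{T\log N/2}$ sums to exactly $\sqrt{2T\log N}$. (A minor point: the potential argument naturally produces $\ln N$, so the paper's $\log$ must be read as the natural logarithm for the constants to match.)
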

This bound is better than the naive bound $L_{ \mathcal H} -L^{(T)}_{\min}  \leq T$.
The run time of the classical algorithm is given by
$\Ord{T N}$.
At every step, the algorithm updates $N$ probabilities and there are $T$ steps overall.

A quantum version of this algorithm with the appropriate oracles for the loss vectors will be able to provide an estimate of the total loss $L_{ \mathcal H}$ with polynomial speedup in $N$, see Algorithm \ref{algQuantumEstimateLoss} below.
The ``deterministic" and ``sampled" flags will be explained in the next subsection.
\begin{figure}
\begin{algorithm}[H]
  \caption{Hedge algorithm by Freund and Schapire \cite{Freund1997} with transaction cost and sampling}
    \label{algSchapire}
  \begin{algorithmic}[1]
    \Require{Number of strategies $N$, number of rounds $T$, parameter $\beta \in (0,1)$, flag $\in \{$null, deterministic, sampled$\}$, fixed transaction cost per strategy $C_0$.}
    \State $w^{(1)} \gets   \vec 1/N$.
    \For{$t \gets 1 \textrm{ to } T$}
    \State $p^{(t)} \gets w^{(t)}/\Vert w^{(t)} \Vert_1$. \Comment Update probabilities
     \If{flag = null}
     \State $q^{(t)} \gets p^{(t)} $.
      \State $C^{(t)}  \gets  0$. \Comment No transaction cost
     \ElsIf{flag = deterministic}
     \State $q^{(t)} \gets p^{(t)} $.
     \State Allocate portfolio according to $q^{(t)}$.
    \State $C^{(t)} \gets  N \times C_0$.
    \ElsIf{flag = sampled}
       \State Prepare sampling data structure for $p^{(t)}$ via Fact \ref{theoremSampling}.
    \State $j \gets$ Sample from $p^{(t)}$.
    \State $q^{(t)} \gets \vec 0$, $q^{(t)}_j \gets 1$.
    \State Allocate portfolio according to $q^{(t)}$.
    \State $C^{(t)}  \gets  C_0$.
    \EndIf
    \State Receive loss vector $l^{(t)}$.
    \State Suffer loss $L^{(t)} \gets q^{(t)} \cdot l^{(t)}$.
    \State $w^{(t+1)} \gets w^{(t)}  \odot \beta^{l^{(t)}}$.  \Comment Multiplicative weight update
    \EndFor
     \Ensure { $\sum_{t=1}^T L^{(t)}$, $\sum_{t=1}^T  C^{(t)}$.
  }
  \end{algorithmic}
\end{algorithm}
\end{figure}
\subsection{Active betting by sampling}

In a simple extension to the original algorithm, we model the cost of allocating individual bets in the portfolio.
This allocation/transaction cost serves to illustrate the benefits of a sampling strategy over the deterministic strategy.
The allocation costs are taken to be as follows. Each allocation based on the elements of the vector $p^{(t)}$, however small, will come with a cost $C_0$. For example, buying even a single stock on the stock market comes with the cost of contacting a broker and a broker fee.
This cost is counted separately for the algorithm in a total transaction cost $C$. We keep these costs separate from the losses $ L_{\mathcal H}$, and it is straightforward to combine them.
Algorithm \ref{algSchapire} formalizes this additional feature.
With ``Allocate portfolio" we denote the step of concretely betting on respective strategies which incurs a cost of $C_0$ per strategy allocated. The flag ``deterministic" leads to an allocation into all the strategies with corresponding transaction cost.
To minimize the transaction cost, the flag ``sampled" can be used, for which at each time $t$  an index $j^{(t)}$ is sampled according to $p^{(t)}$ and an investment is made on that single outcome. This sampling requires preparing a sampling data structure, as explained in Fact \ref{theoremSampling} in Appendix \ref{appendixSampling}. In the quantum case, see Algorithm \ref{algorithmHedgeSampleQuantum}, a quantum state $\ket{p^{(t)}}$ is prepared from which samples are obtained.

The run time of the algorithm is the same for all three flags.
A simple statement about the total transaction cost is as follows.
\begin{fact}  \label{thmRuntimeTransactionCost}
The transaction cost $C:=\sum_{t=1}^T  C^{(t)}$ of Algorithm \ref{algSchapire} with the different flags is
\be
{\rm null } &:& C= 0. \\
{\rm deterministic } &:& C = N \times  T \times C_0 , \\
{\rm sampled }&:& C =  T  \times C_0.
\ee
\end{fact}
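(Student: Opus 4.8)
The plan is to read the per-round cost $C^{(t)}$ directly off the conditional block in Algorithm~\ref{algSchapire} and then sum over the $T$ rounds. First I would note that the three flag values select three mutually exclusive branches, so in each iteration $t\in[T]$ exactly one assignment to $C^{(t)}$ is executed, and in all three branches the value assigned is a constant that depends neither on $t$ nor on the realized loss vectors $l^{(t)}$. Hence $C=\sum_{t=1}^{T}C^{(t)}$ collapses to $T$ times the common per-round value in each case, and it remains only to identify that value for each flag.

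For the ``null'' flag the branch sets $C^{(t)}\gets 0$, giving $C=0$. For the ``deterministic'' flag the portfolio is allocated according to $q^{(t)}=p^{(t)}$, which in general has all $N$ coordinates nonzero, so all $N$ strategies are funded and the branch sets $C^{(t)}\gets N\times C_0$, giving $C=N\times T\times C_0$. For the ``sampled'' flag one draws a single index $j$ from $p^{(t)}$ and sets $q^{(t)}\gets\vec 0$ followed by $q^{(t)}_j\gets 1$; thus exactly one strategy is funded irrespective of which index is drawn, the ``Allocate portfolio'' step is invoked once, and the branch sets $C^{(t)}\gets C_0$, giving $C=T\times C_0$. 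Summing over $t$ establishes the three claimed identities.

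I expect essentially no obstacle here: the statement is a bookkeeping identity about the output register $\sum_{t=1}^{T}C^{(t)}$, which the algorithm maintains separately from the loss accumulator $L_{\mathcal H}=\sum_t L^{(t)}$, so the three values hold verbatim regardless of the realized losses. The only point I would spell out explicitly is that in the sampled case the per-round cost is $C_0$ and not a quantity depending on the random draw — this is immediate because $q^{(t)}$ is a single-coordinate indicator, so the cost is deterministic even though the allocation itself is random.
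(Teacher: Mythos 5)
Your proposal is correct and matches the paper's own argument: both simply read the per-round cost off the relevant branch of Algorithm~\ref{algSchapire} (noting that in the sampled case the cost is $C_0$ per round independent of $N$ and of the random draw) and multiply by $T$. No further comment is needed.
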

\begin{proof}
For ``deterministic",  we have $T$ iterations, at each of which the investment cost is $N \times C_0$.
For ``sampled", at each step we prepare the data structure according to Fact \ref{theoremSampling}, sample, and invest in the single sampled strategy.  The transaction cost per step is $C_0$, independent of $N$.
\end{proof}
Furthermore, the output of the sampled algorithm is an unbiased estimator of $L_{\mathcal H}$ and satisfies a regret bound.
\begin{theorem} \label{theoremHedgeCorrectnessSampling}
Algorithm \ref{algSchapire} with flag ``sampled" and $\beta = 1/(1+\sqrt{2 \log N /T})$  outputs $L_{\rm samp} := \sum_{t=1}^T l_{j^{(t)}}^{(t)}$ such that $\mathbbm E[L_{\rm samp} ] =  L_{\mathcal H}$. For $\delta \in (0,1)$, the regret bound is
$L_{\rm samp} - \min_{j\in[N]} L_j \leq 3\sqrt{T \log (N/\delta)} +\log N$ with probability at least $1- \delta$.
\end{theorem}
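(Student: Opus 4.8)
The plan is to prove the two assertions separately: first that $L_{\rm samp}$ is an unbiased estimator of $L_{\mathcal H}$, and then the high-probability regret bound. For unbiasedness, the point I would stress is that in Algorithm~\ref{algSchapire} the multiplicative update $w^{(t+1)}\gets w^{(t)}\odot\beta^{l^{(t)}}$ uses the \emph{entire} loss vector $l^{(t)}$ irrespective of the flag, so the weight trajectory $w^{(1)},w^{(2)},\dots$ --- and hence the probability vectors $p^{(t)}$ --- is exactly the one produced by the flag-``null'' run and in particular does not depend on the sampled indices $j^{(1)},\dots,j^{(t-1)}$. Introducing the filtration $\mathcal F_t:=\sigma(l^{(1)},j^{(1)},\dots,l^{(t)},j^{(t)})$ and $\mathcal G_t:=\sigma(\mathcal F_{t-1},l^{(t)})$, the vector $p^{(t)}$ is $\mathcal F_{t-1}$-measurable, and since $j^{(t)}$ is drawn from $p^{(t)}$ one has $\mathbb E\!\left[l^{(t)}_{j^{(t)}}\mid\mathcal G_t\right]=\sum_i p^{(t)}_i l^{(t)}_i=L^{(t)}$. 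Summing over $t$ and taking total expectation gives $\mathbb E[L_{\rm samp}]=\sum_{t=1}^T L^{(t)}=L_{\mathcal H}$.

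For the concentration statement I would set $X_t:=l^{(t)}_{j^{(t)}}-L^{(t)}$. By the identity above $\mathbb E[X_t\mid\mathcal G_t]=0$, and because $\mathcal G_{t-1}\subseteq\mathcal F_{t-1}\subseteq\mathcal G_t$ the sequence $(X_t)_{t\le T}$ is a martingale difference sequence adapted to $(\mathcal G_t)$. Since $l^{(t)}_{j^{(t)}},L^{(t)}\in[0,1]$, each $X_t$ is bounded by $1$ in absolute value, so the Azuma--Hoeffding inequality yields $\Pr\!\big[\sum_{t=1}^T X_t\ge\lambda\big]\le\exp(-\lambda^2/(2T))$. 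Choosing $\lambda=\sqrt{2T\log(1/\delta)}$ makes the right-hand side equal to $\delta$, so with probability at least $1-\delta$,
\begin{equation}
L_{\rm samp}=L_{\mathcal H}+\sum_{t=1}^T X_t\le L_{\mathcal H}+\sqrt{2T\log(1/\delta)}.
\end{equation}

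Finally, I would combine this with the deterministic regret bound of Theorem~\ref{thmRegret}, which applies verbatim because the weight trajectory is unchanged by the ``sampled'' flag, giving $L_{\mathcal H}-L^{(T)}_{\min}\le\sqrt{2T\log N}+\log N$ for the stated value of $\beta$ (with $L^{(T)}_{\min}=\min_{j\in[N]}L_j$). On the event of probability $\ge 1-\delta$ above this implies
\begin{equation}
L_{\rm samp}-\min_{j\in[N]}L_j\le\sqrt{2T\log N}+\sqrt{2T\log(1/\delta)}+\log N.
\end{equation}
To reach the stated form I would use $\sqrt{a}+\sqrt{b}\le\sqrt{2(a+b)}$ on the first two terms: $\sqrt{2T\log N}+\sqrt{2T\log(1/\delta)}\le\sqrt{2(2T\log N+2T\log(1/\delta))}=2\sqrt{T\log(N/\delta)}\le 3\sqrt{T\log(N/\delta)}$, which gives the claimed $3\sqrt{T\log(N/\delta)}+\log N$.

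The main obstacle is entirely in the bookkeeping of the probability space: one must argue cleanly that $p^{(t)}$ is measurable with respect to the past and independent of the fresh sample $j^{(t)}$, which is what makes $(X_t)$ a genuine martingale difference sequence and lets Azuma--Hoeffding apply. This is guaranteed precisely because the algorithm updates its weights from the observed loss vector $l^{(t)}$ rather than from the realized single-strategy loss $l^{(t)}_{j^{(t)}}$, so the sampling randomness never feeds back into the weights; once this is in place, the rest is a routine concentration bound and constant-chasing.
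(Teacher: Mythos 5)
Your proof is correct and follows essentially the same route as the paper: unbiasedness via conditioning on $p^{(t)}$ (which is unaffected by the sampling because the weight update uses the full loss vector), a Hoeffding-type concentration bound on $L_{\rm samp}-L_{\mathcal H}$, and then the deterministic regret bound of Theorem~\ref{thmRegret}. The only difference is that you phrase the concentration step as Azuma--Hoeffding for a martingale difference sequence (a slightly more careful bookkeeping that would also cover adaptively chosen losses), whereas the paper applies plain Hoeffding to the independent samples $l^{(t)}_{j^{(t)}}$; the constants land at the same $3\sqrt{T\log(N/\delta)}+\log N$ either way.
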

\begin{proof}
The expectation value of $L_{\rm samp}$ is
\be
\mathbbm E\left [L_{\rm samp} \right] &=&   \sum_{t=1}^T  \mathbbm E \left [l_{j^{(t)}}^{(t)}\right ] = \sum_{t=1}^T p^{(t)} \cdot l^{(t)} \equiv L_{ \mathcal H}.
\ee
Hoeffding's inequality is $P\left [ \frac{L_{\rm samp}}{T} - \frac{L_{ \mathcal H}}{T} \geq r \right] \leq  e^{-2 T r^2 }$
 for $r \in \mathbbm R_+$ and  hence
\be
P\left [  L_{\rm samp} -  L_{ \mathcal H} \geq s \sqrt{T} \right] \leq  e^{-2 s^2},
\ee using $r = s / \sqrt T$.
Now, bound the difference to the minimum loss using Theorem \ref{thmRegret} as
\be L_{\rm samp} - \min_{j\in[N]}  L_j &=& L_{\rm samp} -  L_{ \mathcal H} +  L_{ \mathcal H} - \min_{j\in[N]}  L_j \nonumber \\
&\leq& L_{\rm samp} - L_{ \mathcal H} + \sqrt{2T \log N} +\log N \nonumber \\
&\leq& s \sqrt{T}+ \sqrt{2T \log N} +\log N,
\ee
with probability at least $1-e^{-2 s^2}$.
Setting $s = \sqrt{ \ln(1/\delta) /2} \leq  \sqrt{ \log( N/\delta)} $ leads to
\be
 L_{\rm samp} - \min_{j\in[N]}  L_j \leq  3 \sqrt{T \log (N/\delta)} +\log N
\ee
with probability at least $1- \delta$.
\end{proof}
Hence the sampling Hedge algorithm achieves the correct loss in expectation and the regret bound up to a scaling factor with high probability.
\section{Quantum Hedge algorithms}

We now turn to quantum algorithms in the Hedge setting. We provide two simple algorithms, one for estimating the losses, one for the active betting scenario, before discussing the Sparsitron.
The algorithms are based on quantum minimum finding, see Lemma \ref{lemmaMin}, amplitude amplification and estimation, see Lemma \ref{lemmaU}, and quantum inner product estimation, see Lemma \ref{lemmaInnerProduct1}, all in Appendix \ref{appendixAmp}. We first discuss the rescaling of relevant quantities. We then discuss the quantum data input model and state preparation subroutines. In a passive setting, we use amplitude estimation to estimate the total loss of the Hedge algorithm given the data input, see Algorithm \ref{algQuantumEstimateLoss}.
In an active setting, we discuss the allocation of a portfolio via amplitude amplification and sampling, see Algorithm \ref{algorithmHedgeSampleQuantum}.

One of the quantities estimated via a quantum algorithm is the  $\ell_1$-norm $\left \Vert w^{(t)} \right \Vert_1$ for all $t$. Recall that the algorithm starts  with a uniform initial weight vector $w^{(1)} = \vec 1/N$. As we are decreasing each weight by at most $\beta$ per step, the minimum weight achievable after $T$ steps is
${\beta^T/N}$.
In a situation where also the maximum weight $w^{(T)}_{\max} \equiv \left \Vert w^{(T)} \right \Vert_{\max}$ is $\Ord{\beta^T/N}$, the $\ell_1$-norm is small, i.e., $\left \Vert w^{(T)} \right\Vert_1 \sim \beta^T\sim 1/2^T$.  To overcome this inconvenient worst case, we rescale the weights in the estimation. For each $1 \leq t \leq T$, define similar to before the minimum offline loss up to $t$,
$L^{(t)}_{\min} := \min_{j\in[N]}  \sum_{t'=1}^{t} l_j^{(t')}  \leq t$.
Note that the maximum element of $w^{(t)}$
is
\be \label{eqWeightMax}
w^{(t)}_{\max} =  \beta^{L^{(t-1)}_{\min}}/N.
\ee
We consider the rescaled weights $\frac{w^{(t)}_j} {w^{(t)}_{\max} } \leq 1$,
which keep the expected loss $L^{(t)} \equiv  \frac{w^{(t)} \cdot l ^{(t)}}{ \left \Vert w^{(t)}\right \Vert_1}$ the same because the $w^{(t)}_{\max}$ factor cancels out. However, we now have the lower bound for the $\ell_1$-norm
$\left \Vert \frac{w^{(t)}}{w^{(t)}_{\max} } \right \Vert_1 \geq 1$.
In the quantum context, we find $L^{(t-1)}_{\min}$ via the minimum finding algorithm \cite{Durr1996} in run time $\tOrd{\sqrt{N}}$ with high success probability. See Lemma \ref{lemmaMin} in Appendix \ref{appendixAmp} for the statement of the minimum finding algorithm. With Eq.~(\ref{eqWeightMax}) we can then compute the maximum weight.

\subsection{Quantum data input model}

We translate the online learning setting into the quantum domain. The input data for the quantum algorithms are the losses experienced at every step $t$. First, we assume $T$ different oracles, where the sequential access to these oracles embodies the online setting.
\begin{oracle}[Loss oracles]\label{oracle1}
Assume that $\Ord{1}$ bits are sufficient to specify the losses $l_j^{(t)}$.
For $t\in[T]$ and $j\in[N]$, assume unitaries $U_{l^{(t)}}$  such that $U_{l^{(t)}} \ket j \ket {\bar 0} = \ket j \ket {l^{(t)}_j}$, operating on $\Ord{\log N}$ quantum bits.
\end{oracle}
To simplify the notation, we have used the $\ket{\bar 0}$ initial state in the second register. In fact, the precise assumption here is $U_{l^{(t)}} \ket j \ket {c} = \ket j \ket {c\oplus l^{(t)}_j}$, where $c$ is any bit string allowed in the second register and $\oplus$ is the bit-wise addition modulo $2$. With this precise assumption, we have $\left (U_{l^{(t)}} \right )^2 \ket j \ket {c} = \ket j \ket {c} $, hence we can uncompute the result with another query.
This assumption shall hold also for the other data inputs below.
The oracles allow us to perform the following computations.
\begin{lemma} \label{lemmaUpdateBit}
Let $t\in [T]$ and  $\beta\in (0,1)$.
Let the set of unitaries $U_{l^{(t')}}$ for $t' \in [t-1]$ be given as in Data Input \ref{oracle1}. There exists unitaries performing the computations $\ket j \ket{\bar 0} \to \ket j \ket{ \sum_{t'=1}^{t-1} l_j^{(t')} }$, $\ket j \ket{\bar 0} \to \ket j \ket{ w^{(t)}_j}$, and, if knowledge of $w_{\max}^{(t)}$ is given, $\ket j \ket{\bar 0} \to \ket j \ket{  \frac{w^{(t)}_j}{w^{(t)}_{\max}}}$ to sufficient accuracy $\Ord{1/N}$. 
These computations take $\Ord{T}$ queries to the data input and  $\Ord{T+\log N}$ qubits and quantum gates. 
\end{lemma}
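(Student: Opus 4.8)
The plan is to build the three requested unitaries in sequence, each using the previous one as a subroutine, and to track the accuracy and resource costs at every stage. First I would construct the map $\ket j \ket{\bar 0} \to \ket j \ket{\sum_{t'=1}^{t-1} l_j^{(t')}}$. By Data Input \ref{oracle1}, each $U_{l^{(t')}}$ writes $l_j^{(t')}$ into a fresh register conditioned on $\ket j$; so I would apply $U_{l^{(1)}},\dots,U_{l^{(t-1)}}$ into $t-1$ fresh $\Ord{1}$-bit registers, add them reversibly into an accumulator register, and then uncompute the scratch registers by applying each $U_{l^{(t')}}$ a second time (using the $(U_{l^{(t')}})^2 = \mathbbm 1$ property noted after the oracle). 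Since each $l_j^{(t')}$ takes $\Ord{1}$ bits, the running sum is bounded by $t-1 \le T$ and fits in $\Ord{\log T}$ bits, so the accumulator and arithmetic cost $\Ord{T + \log T}$ qubits and gates, and the procedure uses $\Ord{T}$ oracle queries. (One can reuse a single scratch register sequentially to reduce qubit count, but the stated bound $\Ord{T + \log N}$ is already loose enough.)

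Next I would compute $\ket j \ket{\bar 0}\to \ket j \ket{w_j^{(t)}}$. Recall $w_j^{(t)} = \beta^{\sum_{t'=1}^{t-1} l_j^{(t')}}/N$ (with $w_j^{(1)} = 1/N$ for $t=1$, the empty sum). So I would run the first unitary to obtain the exponent $e_j := \sum_{t'=1}^{t-1} l_j^{(t')}$ in an ancilla register, then apply a reversible arithmetic circuit computing $\beta^{e_j}/N$ to $\Ord{\log N}$ bits of precision, and finally uncompute the exponent register by running the first unitary in reverse. Because $\beta \in (0,1)$ is a fixed known constant and $e_j \le T$, the quantity $\beta^{e_j}/N$ lies in $[\beta^T/N, 1/N]$; evaluating it to additive accuracy $\Ord{1/N}$ — indeed one can get relative accuracy, which is what is really needed downstream — takes $\mathrm{poly}(\log N, T)$ gates and no extra queries beyond the $\Ord{T}$ already spent. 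This accounts for the $\Ord{1/N}$ accuracy claim in the statement. Finally, the rescaled map $\ket j \ket{\bar 0}\to \ket j \ket{w_j^{(t)}/w_{\max}^{(t)}}$ follows by the same template: run the $w_j^{(t)}$ unitary, then, using the externally supplied value $w_{\max}^{(t)}$ (as described in the rescaling discussion, obtained via minimum finding on the offline losses), perform reversible division by $w_{\max}^{(t)}$ to $\Ord{\log N}$ bits, and uncompute $w_j^{(t)}$. Since $w_{\max}^{(t)} = \beta^{L^{(t-1)}_{\min}}/N$, the ratio lies in $(0,1]$ and again is represented to accuracy $\Ord{1/N}$.

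The one genuinely delicate point — and the step I expect to be the main obstacle — is \emph{precision control in the finite-bit arithmetic}, specifically making sure that evaluating $\beta^{e_j}$ (and the subsequent division) to $\Ord{1/N}$ \emph{absolute} error actually suffices for the later amplitude-estimation and inner-product steps, where what matters is the error in the normalized loss $L^{(t)} = (w^{(t)}\cdot l^{(t)})/\Vert w^{(t)}\Vert_1$. Because $w_{\max}^{(t)}$ can be as small as $\beta^T/N$, an absolute error of $1/N$ in an unscaled weight would be catastrophic, which is precisely why one works with the rescaled weights $w_j^{(t)}/w_{\max}^{(t)} \le 1$ whose $\ell_1$-norm is at least $1$: there, absolute error $\Ord{1/N}$ in each of the $N$ entries contributes $\Ord{1}$ total, which (after further reducing the per-entry precision to $\Ord{1/(N\,\mathrm{poly}(T,1/\epsilon)))}$ bits at only logarithmic extra cost) is harmless for an $\epsilon$-accurate final estimate. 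So I would state the lemma with the $\Ord{1/N}$ figure as written, noting that the bit-precision is a tunable parameter entering only logarithmically, and defer the exact propagation of these errors to the analysis of Algorithm \ref{algQuantumEstimateLoss}. The rest — counting $\Ord{T}$ queries (the forward and backward oracle passes) and $\Ord{T + \log N}$ qubits and gates (accumulator of size $\Ord{\log T}$, index register of size $\Ord{\log N}$, fixed-point arithmetic of size $\Ord{\log N}$, scratch of size $\Ord{T}$ or $\Ord{1}$ if reused) — is routine bookkeeping.
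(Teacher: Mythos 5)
Your proposal is correct and follows essentially the same route as the paper's proof: query each $U_{l^{(t')}}$ into scratch registers, accumulate the sum, uncompute the scratch with a second round of queries, and then obtain $w^{(t)}_j = \beta^{\sum_{t'} l^{(t')}_j}/N$ and $w^{(t)}_j/w^{(t)}_{\max}$ by reversible arithmetic. Your added discussion of why $\Ord{1/N}$ precision on the \emph{rescaled} weights (rather than the raw ones, which can be as small as $\beta^T/N$) is the right target is a useful elaboration of what the paper compresses into the phrase ``to sufficient accuracy.''
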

Hence, given the loss unitaries, we can compute the weights with overhead about $\Ord{T}$. This computation allows the estimation of norms and inner products and preparation of weight quantum states.

\subsection{Quantum algorithm to obtain the total loss of the Hedge algorithm}

Our first quantum algorithm is simple. At each time step, we receive a loss oracle according to Data Input \ref{oracle1}. From this input, we estimate the total loss of the multiplicative weight update method. We never fully exhibit the full weight vector but rather only the total loss at each step given the weight vectors.
The quantum algorithm is given in Algorithm \ref{algQuantumEstimateLoss}.
\begin{figure}
\begin{algorithm}[H]
  \caption{Quantum estimation of the total loss of the Hedge algorithm}
    \label{algQuantumEstimateLoss}
  \begin{algorithmic}[1]
    \Require{Number of strategies $N$, number of rounds $T$, parameter $\beta \in (0,1)$, error $\epsilon\in (0, 1]$, success probability $1-\delta \in (0,1)$.}
    \For{$t = 1 \textrm{ to } T$}
    \State Construct unitary for $w^{(t)} = w^{(1)} \beta^{l^{(1)}}\cdots \beta^{l^{(t-1)}}$ using oracles $\{U_{l^{(t')}} : t' \in [t-1] \}$ and Lemma \ref{lemmaUpdateBit}.
    \State Receive loss oracle $U_{l^{(t)}}$.
    \State $L^{(t)} \gets$ Quantum estimate $\frac{w^{(t)}}{\Vert w^{(t)}\Vert_1} \cdot l^{(t)}$ to relative accuracy ${\epsilon}$ with success probability $1-\frac{\delta}{T}$ via Lemma \ref{lemmaInnerProduct1}.
      \EndFor
     \Ensure {$\sum_{t=1}^T {L^{(t)} }$.}
  \end{algorithmic}
\end{algorithm}
\end{figure}

We use the Lemma \ref{lemmaInnerProduct1} for estimating the inner product with relative accuracy and hence obtain a relative accuracy for the total loss.
We obtain a statement on the accuracy of the loss, the run time, and the success probability of the quantum algorithm.
\begin{theorem} \label{theoremQuantumEstimate}
Let $\epsilon \in (0, 1]$, $\delta \in (0,1)$, and $\beta = (0,1)$. Algorithm \ref{algQuantumEstimateLoss} provides an estimate $\sum_{t=1}^T  L^{(t)}$ of the total loss $ L_{\mathcal H}$ such
that
$\left \vert \sum_{t=1}^T  L^{(t)} - L_{ \mathcal H}\right \vert \leq \epsilon L_{ \mathcal H}$ with probability at least $1-\delta$.
This quantum algorithm requires $\Ord{\frac{T^2 \sqrt{N}}{ \epsilon} \log \left(\frac{T}{\delta}\right)}$ queries to the oracles and $\tOrd{\frac{T^2\sqrt{N}}{\epsilon} \log \left(\frac{1}{\delta}\right)}$ gates.
\end{theorem}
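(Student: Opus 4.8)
The plan is to analyze Algorithm~\ref{algQuantumEstimateLoss} round by round, tracking three things: the correctness of the per-round inner product estimate, the accumulated query and gate cost, and the union bound over failure probabilities. First I would fix a round $t$ and argue that, given the oracles $\{U_{l^{(t')}}: t' \in [t-1]\}$ together with the freshly received $U_{l^{(t)}}$, Lemma~\ref{lemmaUpdateBit} furnishes a unitary computing $\ket{j}\ket{\bar 0}\to\ket{j}\ket{w^{(t)}_j/w^{(t)}_{\max}}$ at cost $\Ord{T}$ queries and $\Ord{T+\log N}$ gates, where $w^{(t)}_{\max}$ is obtained beforehand by quantum minimum finding (Lemma~\ref{lemmaMin}) applied to the partial losses $\sum_{t'=1}^{t-1} l_j^{(t')}$, at cost $\tOrd{\sqrt N}$. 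The key point, already set up in the rescaling discussion preceding the theorem, is that using the rescaled weights leaves $L^{(t)} = \frac{w^{(t)}\cdot l^{(t)}}{\Vert w^{(t)}\Vert_1}$ unchanged while guaranteeing $\Vert w^{(t)}/w^{(t)}_{\max}\Vert_1 \ge 1$, so the inner product estimation of Lemma~\ref{lemmaInnerProduct1} has a well-controlled normalization and yields a \emph{relative} $\epsilon$-accurate estimate of $L^{(t)}$ with success probability $1-\delta/T$, at a cost of $\tOrd{\sqrt N/\epsilon}$ applications of the state-preparation unitary (times a $\log(T/\delta)$ factor for the success probability boosting).

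Next I would multiply the per-round costs through. Each invocation of Lemma~\ref{lemmaInnerProduct1} at round $t$ makes $\Ord{\sqrt N/\epsilon \cdot \log(T/\delta)}$ calls to the weight-preparation unitary, and each such call costs $\Ord{T}$ oracle queries and $\Ord{T+\log N}$ gates by Lemma~\ref{lemmaUpdateBit}; hence round $t$ costs $\Ord{\frac{T\sqrt N}{\epsilon}\log(T/\delta)}$ queries and $\tOrd{\frac{T\sqrt N}{\epsilon}\log(1/\delta)}$ gates (the minimum-finding step contributes only a lower-order $\tOrd{\sqrt N}$). Summing over $t=1,\dots,T$ gives the claimed $\Ord{\frac{T^2\sqrt N}{\epsilon}\log(T/\delta)}$ queries and $\tOrd{\frac{T^2\sqrt N}{\epsilon}\log(1/\delta)}$ gates.

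For correctness of the final estimate, I would do the error propagation: if each $L^{(t)}$ satisfies $|L^{(t)} - L^{(t)}_{\text{true}}| \le \epsilon L^{(t)}_{\text{true}}$, then by the triangle inequality $\bigl|\sum_t L^{(t)} - \sum_t L^{(t)}_{\text{true}}\bigr| \le \epsilon \sum_t L^{(t)}_{\text{true}} = \epsilon L_{\mathcal H}$, so relative accuracy is preserved under summation with no loss. The failure probability is handled by a union bound: each of the $T$ inner product estimations fails with probability at most $\delta/T$, and the minimum-finding subroutines can be boosted into the same budget, so the total failure probability is at most $\delta$.

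The main obstacle I anticipate is not the cost bookkeeping but making the chain of \emph{relative}-accuracy and \emph{additive}-accuracy guarantees fit together cleanly — in particular, justifying that the $\Ord{1/N}$ additive error in computing the rescaled weights $w^{(t)}_j/w^{(t)}_{\max}$ (Lemma~\ref{lemmaUpdateBit}) does not spoil the relative-error guarantee of the inner product estimate, which requires checking that this polynomially small perturbation of the amplitude-encoding state perturbs the estimated inner product by a relatively negligible amount. One must also be slightly careful that minimum finding returns $L^{(t-1)}_{\min}$ \emph{exactly} (it is an integer-like quantity up to the $\Ord{1}$-bit loss representation), so that $w^{(t)}_{\max} = \beta^{L^{(t-1)}_{\min}}/N$ is computed without introducing further error; this is why the algorithm can afford to use the rescaling at all. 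Once these points are dispatched, the rest is the routine multiplication and union bound sketched above.
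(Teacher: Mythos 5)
Your proposal follows essentially the same route as the paper's proof: per-round relative-accuracy estimation via Lemma \ref{lemmaInnerProduct1} (with the rescaling by $w^{(t)}_{\max}$ guaranteeing $\Vert w^{(t)}/w^{(t)}_{\max}\Vert_1 \ge 1$), the $\Ord{T}$ query overhead per oracle call from Lemma \ref{lemmaUpdateBit} multiplied by the $\Ord{\sqrt{N}/\epsilon}$ amplitude-estimation cost and summed over $T$ rounds, the triangle inequality showing relative error is preserved under summation, and a union bound with per-step failure probability $\delta/T$. The additional care you flag about additive weight-encoding errors and exact minimum finding is sensible but is absorbed into the hypotheses of Lemmas \ref{lemmaU} and \ref{lemmaInnerProduct1} in the paper.
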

\begin{proof}
Regarding the correctness, assuming all subroutines succeed,
we have for all $t\in[T]$ that
$\left \vert  L^{(t)} - \frac{w^{(t)}}{\Vert w^{(t)}\Vert_1} \cdot l^{(t)} \right \vert \leq  \epsilon \frac{w^{(t)}}{\Vert w^{(t)}\Vert_1} \cdot l^{(t)}$,
and thus
$\left \vert \sum_{t=1}^T  L^{(t)} - L_{ \mathcal H} \right \vert \leq \epsilon L_{ \mathcal H}$.

The run times for computing the entries $w_j$ are given via Lemma \ref{lemmaUpdateBit},  amounting to
$\Ord{T}$ queries to the Data Input \ref{oracle1} and  $\Ord{T+  \log (TN/\epsilon)}$ quantum gates.
Lemma \ref{lemmaInnerProduct1} at each step requires
$\Ord{\frac{ \sqrt{N} }{\epsilon} \log \left (\frac{T}{\delta} \right )  }$ queries and $\tOrd{\frac{ \sqrt{N} }{\epsilon} \log \left (\frac{T}{\delta} \right )  }$ quantum gates. These complexities are multiplied by $T$ as we are proceeding for $T$ steps. From Lemma \ref{lemmaUpdateBit}, the complexities are multiplied by another factor $\Ord{T}$ for the query complexity and a factor $\Ord{T+  \log (TN/\epsilon)}$ for the gate complexity.
Each single-step, single-estimate success probability is  $1-\frac{\delta}{T}$.
Hence the overall success probability of the algorithm is $\left (1-\frac{\delta}{T}\right)^{T} \geq 1-\delta$.
\end{proof}

\subsection{Active betting with quantum sampling}

Next, we provide a quantum version of Algorithm \ref{algSchapire} with the ``sampled" flag. Instead of obtaining the sample classically, we prepare a corresponding quantum state and measure it.
We approximately prepare the quantum states of square-root probabilities
$\ket{ p^{(t)} } = \sum_{j=1}^N \sqrt{ p_j^{(t)}} \ket j$,
for every time $t=1,\dots,T$, with the probabilities given in Eq.~(\ref{eqProbabilities}).
The quantum algorithm is given in Algorithm \ref{algorithmHedgeSampleQuantum}.
We obtain a biased estimate of the loss where the bias is set to $\sqrt{T\log N}$ to obtain the usual regret bound up to constant factors. 
\begin{figure}
\begin{algorithm}[H]
  \caption{Active Hedge algorithm with transaction cost and quantum sampling}
    \label{algorithmHedgeSampleQuantum}
  \begin{algorithmic}[1]
    \Require{Number of strategies $N$, number of rounds $T$, parameter $\beta \in (0,1)$, transaction cost $C_0$, success probability $1-\delta \in(0,1)$.  }
    \For{$t = 1 \textrm{ to } T$}
    \State $L_{\min}^{(t-1)} \gets$ Find $\min_{j\in[N]}  \sum_{t'=1}^{t-1} l_j^{(t')}$ using oracles $\{U_{l^{(t')}} : {t'}\in [t-1] \}$ with success probability $1-\frac{\delta}{2T}$ using Lemma \ref{lemmaUpdateBit} and Lemma \ref{lemmaMin}.
    \State $w^{(t)}_{\max} \gets  \beta^{L^{(t-1)}_{\min}}/N$.
     \State Prepare an approximation $\ket {\tilde p^{(t)}}$ of the quantum state $\ket {p^{(t)}}$ with $p^{(t)} = w^{(t)}/\Vert w^{(t)}\Vert_1$ using $w^{(t)}_{\max}$,
     Lemma \ref{lemmaUpdateBit}, and Lemma \ref{lemmaU} (iii)  with accuracy $\xi = \sqrt{\log (N)/T}$ and  success probability $1-\frac{\delta}{2T}$.
      \State $j^{(t)}\gets$ Measure $\ket {\tilde p^{(t)}}$ in the computational basis.
      \State Allocate portfolio in $j^{(t)}$-th strategy at cost $C_0$.
      \State Receive loss oracle $U_{l^{(t)}}$.
      \State $l_{j^{(t)}}^{(t)}\gets$ Query $U_{l^{(t)}}$ with $\ket {j^{(t)}} \ket 0$.
      \State Suffer loss $l_{j^{(t)}}^{(t)}$.
    \EndFor
     \Ensure {$L^Q_{\rm samp} := \sum_{t=1}^T l_{j^{(t)}}^{(t)}$.}
  \end{algorithmic}
\end{algorithm}
\end{figure}

\begin{theorem}\label{theoremSampleQuantum}
Let $\delta \in (0,1)$. Algorithm \ref{algorithmHedgeSampleQuantum} outputs $ L^Q_{\rm samp} := \sum_{t=1}^T l_{j^{(t)}}^{(t)}$ with a bias $\left \vert \mathbbm E[L^Q_{\rm samp} ] -  L_{ \mathcal H} \right \vert \leq \sqrt{T\log N} $ with success probability at least $1-\delta$.
With $\beta = 1/(1+\sqrt{2 \log N /T})$, the regret bound is $L^Q_{\rm samp} - \min_{j\in[N]}  L_j \leq  4\sqrt{T \log (N/\delta)} + \log N$ with success probability at least $1-2\delta$. This quantum algorithm requires $\Ord{T^2 \sqrt{N} \log \left(\frac{T}{\delta}\right)}$ queries to the oracles and $\tOrd{T^2\sqrt{N} \log \left(\frac{1}{\delta}\right)}$ gates. The transaction cost is $T\times C_0$.
\end{theorem}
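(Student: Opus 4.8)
The plan is to mirror the proof of Theorem~\ref{theoremHedgeCorrectnessSampling}, replacing exact classical sampling from $p^{(t)}$ by measurement of the approximately prepared state $\ket{\tilde p^{(t)}}$, and to bundle all the probabilistic failure modes of the quantum subroutines into one ``good event''. Let $G$ be the event that for every $t\in[T]$ both the minimum finding of $L^{(t-1)}_{\min}$ (Lemma~\ref{lemmaMin}) and the state preparation of $\ket{\tilde p^{(t)}}$ (Lemma~\ref{lemmaU}(iii)) succeed. There are $2T$ such subroutines, each run with failure probability $\delta/(2T)$, so a union bound gives $\Pr[G]\ge 1-\delta$. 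On $G$ the quantity $w^{(t)}_{\max}=\beta^{L^{(t-1)}_{\min}}/N$ is exact, hence each $\ket{\tilde p^{(t)}}$ is a genuine $\xi$-approximation of the \emph{correct} $\ket{p^{(t)}}$; here I use that $p^{(t)}=w^{(t)}/\Vert w^{(t)}\Vert_1$ is deterministic (the weight recursion uses only the loss vectors, not past samples) and the rescaling remark that $\Vert w^{(t)}/w^{(t)}_{\max}\Vert_1\ge 1$, which is what keeps the state preparation cost $\Ord{\sqrt N}$ rather than $\Ord{\sqrt N\,2^{T/2}}$.

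First I would bound the bias. Conditioned on $G$, the measured index $j^{(t)}$ is drawn from a distribution $\tilde p^{(t)}$ with $\Vert\tilde p^{(t)}-p^{(t)}\Vert_1\le\xi$ (with the accuracy convention of Lemma~\ref{lemmaU}(iii), into which the $\Ord{1/N}$ arithmetic error of Lemma~\ref{lemmaUpdateBit} is absorbed), so since $l^{(t)}\in[0,1]^N$ we get $|\tilde p^{(t)}\cdot l^{(t)}-p^{(t)}\cdot l^{(t)}|\le\xi$ for each $t$. Summing and using $\mathbbm E[L^Q_{\rm samp}\mid G]=\sum_{t=1}^T\tilde p^{(t)}\cdot l^{(t)}$ against $L_{\mathcal H}=\sum_{t=1}^T p^{(t)}\cdot l^{(t)}$ gives $|\mathbbm E[L^Q_{\rm samp}\mid G]-L_{\mathcal H}|\le T\xi=\sqrt{T\log N}$, the claimed bias (holding on $G$, an event of probability $\ge 1-\delta$). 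For the regret bound I repeat the Hoeffding step of Theorem~\ref{theoremHedgeCorrectnessSampling}: conditioned on $G$ the summands $l_{j^{(t)}}^{(t)}\in[0,1]$ are independent, so $\Pr[L^Q_{\rm samp}-\mathbbm E[L^Q_{\rm samp}\mid G]\ge s\sqrt T\mid G]\le e^{-2s^2}$; combining with the bias bound and Theorem~\ref{thmRegret} yields $L^Q_{\rm samp}-\min_{j\in[N]}L_j\le\sqrt{T\log N}+s\sqrt T+\sqrt{2T\log N}+\log N$, and setting $s=\sqrt{\ln(1/\delta)/2}\le\sqrt{\log(N/\delta)}$ bounds the first three terms by $(2+\sqrt2)\sqrt{T\log(N/\delta)}\le 4\sqrt{T\log(N/\delta)}$. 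This holds on $G$ intersected with the Hoeffding event, of probability $\ge(1-\delta)^2\ge 1-2\delta$.

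For the complexity I would tally one round at a time. Minimum finding of $L^{(t-1)}_{\min}$ with failure probability $\delta/(2T)$ costs $\Ord{\sqrt N\log(T/\delta)}$ evaluations of $j\mapsto\sum_{t'<t}l_j^{(t')}$ by Lemma~\ref{lemmaMin}, each costing $\Ord{T}$ oracle queries and $\Ord{T+\log N}$ gates by Lemma~\ref{lemmaUpdateBit}. State preparation via Lemma~\ref{lemmaU}(iii) to accuracy $\xi$ and failure probability $\delta/(2T)$ uses $\Ord{\sqrt N\log(T/\delta)}$ calls to the unitary computing the rescaled weights (using $\Vert w^{(t)}/w^{(t)}_{\max}\Vert_1\ge 1$ for the amplitude lower bound), again $\Ord{T}$ queries and $\tOrd{T}$ gates each; the final query of $U_{l^{(t)}}$ on $\ket{j^{(t)}}$ is $\Ord{1}$. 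One round is thus $\Ord{T\sqrt N\log(T/\delta)}$ queries and $\tOrd{T\sqrt N\log(1/\delta)}$ gates, and multiplying by $T$ rounds gives $\Ord{T^2\sqrt N\log(T/\delta)}$ queries and $\tOrd{T^2\sqrt N\log(1/\delta)}$ gates. The transaction cost is immediate: exactly one allocation at cost $C_0$ per round, totalling $T\times C_0$, consistent with Fact~\ref{thmRuntimeTransactionCost}.

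The step I expect to need the most care is the bias analysis: matching the accuracy guarantee of Lemma~\ref{lemmaU}(iii) to an $\ell_1$ (equivalently, total-variation) bound on the sampled distribution $\tilde p^{(t)}$ so that the per-round bias is \emph{exactly} at most $\xi$ and the telescoping sum is precisely $\sqrt{T\log N}$, and verifying that the $\Ord{1/N}$-accurate arithmetic of Lemma~\ref{lemmaUpdateBit} and the state-preparation error together stay within this budget. One must also check that conditioning on the global good event $G$ preserves the independence of the per-round samples used in Hoeffding, which it does because the weight vectors (hence the targets $p^{(t)}$ and the successful-branch distributions $\tilde p^{(t)}$) do not depend on past measurement outcomes.
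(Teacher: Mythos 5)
Your proposal is correct and follows essentially the same route as the paper's proof: bound the per-round bias by $\xi$ via the $\ell_1$-distance $\Vert \tilde p^{(t)} - p^{(t)}\Vert_1 \leq \xi$ from Lemma~\ref{lemmaU}(iii), set $\xi = \sqrt{\log(N)/T}$ so the total bias is $\sqrt{T\log N}$, combine Hoeffding with the Theorem~\ref{thmRegret} regret bound to get the factor $4$, and tally the $\Ord{T\sqrt N}$ per-round cost over $T$ rounds. Your explicit good-event conditioning and the remark that the weight recursion (hence $p^{(t)}$) is independent of past measurement outcomes are slightly more careful than the paper's treatment but do not change the argument.
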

\begin{proof}
For every step $t\in[T]$, amplitude amplification produces an erroneous output state, see Lemma \ref{lemmaU}, with probabilities denoted by $\tilde p_j$.
Let $\xi \in (0,1]$ as in Lemma \ref{lemmaU} (iii), where by construction we have  $\left \Vert p^{(t)} - \tilde p^{(t)}\right \Vert_1 \leq \xi$.
The expectation value using these probabilities is
\be
\tilde L_{\mathcal H} := \mathbbm E_{\tilde p} \left [L^Q_{\rm samp} \right] = \sum_{t=1}^T  \tilde p^{(t)}  \cdot l^{(t)}.
\ee
Note that $\left \vert l^{(t)} \cdot \left( p^{(t)} - \tilde p^{(t)}\right) \right \vert \leq \left \Vert  p^{(t)} -  \tilde p^{(t)} \right \Vert_1 \left \Vert l^{(t)} \right \Vert_{\max} \leq \xi$.
Thus, we have the bias
\be
\left \vert \tilde L_{ \mathcal H} - L_{ \mathcal H} \right \vert \leq T \xi .
\ee
The success probability for obtaining this bias is $\left (1-\frac{\delta}{2T}\right)^{2T}  \geq (1-\delta)$.

For the difference to the minimum loss strategy, we find using Theorem \ref{thmRegret} that
\be
L^Q_{\rm samp} &-& \min_{j\in[N]}  L_j \leq  \nonumber \\ &\leq& L^Q_{\rm samp} -  \tilde L_{ \mathcal H}  + \left \vert \tilde L_{ \mathcal H} -L_{ \mathcal H} \right  \vert+L_{ \mathcal H} - \min_{j\in[N]}  L_j \nonumber \\
&\leq&  L^Q_{\rm samp} - \tilde L_{ \mathcal H}  + T \xi   + \sqrt{ 2 T \log N}  + \log N \nonumber\\
&\leq& T \xi + 3\sqrt{T \log (N/\delta)} + \log N.
\ee
Here, we used that $ L^Q_{\rm samp} - \tilde L_{ \mathcal H} \leq \sqrt{ T \log (N/\delta)} $ with probability $1-\delta$ from Hoeffding's inequality as in Theorem \ref{theoremHedgeCorrectnessSampling}. We have a total success probability for this regret bound of $(1-\delta)^2 \geq 1-2\delta$. Finally, set $\xi = \sqrt{\log(N)/T}$ to obtain the stated regret bound.

The minimum findings take a total run time of $\Ord{T^2 \sqrt N \log \left(\frac{T}{ \delta} \right)}$ via Lemma \ref{lemmaMin}.
  Lemma \ref{lemmaU} with $u_j = w_j^{(t)}/w^{(t)}_{\max}$ takes
 $\Ord{T^2 \sqrt{N} \log \left(\frac{T}{\delta}\right)}$ queries to the oracles and $\tOrd{T^2\sqrt{N} \log \left(\frac{1}{\delta}\right)}$ gates.
\end{proof}
We complete this section with comparison of the run time of Theorem \ref{theoremSampleQuantum} with the run time of the classical algorithm from Theorem  \ref{theoremHedgeCorrectnessSampling}.
Choose $\delta= \Tht{1}$. Then Theorem \ref{theoremSampleQuantum} achieves a regret bound of $\Ord{\sqrt{T \log N}}$ with a run time of $\tOrd{T^{2}\sqrt{N}}$ and high success probability.
The classical algorithm discussed in Theorem \ref{theoremHedgeCorrectnessSampling} achieves a similar regret bound $\Ord{\sqrt{T \log N}}$ with a run time of $\tOrd{TN}$
and high success probability.

\section{Sparsitron}
\label{sectionSparsitron}

At a high level, statistical classification in machine learning is performed with two main approaches. Let $X$ denote the observable variables (features) and $Y$ the target variables (labels). The first approach is using a \textit{discriminative model}. In this approach, one  models the conditional distribution $P\left[Y | X =x\right ]$, i.e., the probability of the labels given an input example $x$. On the other hand, a \textit{generative model}  constructs a joint distribution $P[X,Y]$ of variables and labels.  In this more general approach, inferences in both directions features $\to$ label and label $\to$ features can be made.

Undirected graphical models, or Markov random fields, are a powerful, modern statistical tool for modeling  high-dimensional  probability distributions \cite{Bresler2015,Klivans2017}. The joint probability distribution of such a model depends on an underlying graph, where the presence of an edge gives conditional dependence and the absence of an edge gives conditional independence.
The Ising model is a special type of Markov random field which uses binary variables and pairwise interactions.
Consider here $N$ binary variables $Z_j\in\{-1,1\}$ for $j\in[N]$. Associated with these variables is an undirected dependency graph. The graph enters a probability distribution
\be \label{eqProbGibbs}
P[Z=z] \propto \exp\left( \sum_{i,j:i\neq j} A_{ij} z_i z_j + \sum_i \theta_i z_i\right),
\ee
where $A\in \mathbbm R^{N\times N}$ is the graph adjacency matrix and $\theta \in \mathbbm R^N$ describes bias terms.
The width of an Ising model is defined as $\lambda(A,\theta)=\max_i \left(\sum_j \left\vert A_{ij} \right \vert + \vert \theta_i \vert\right)$.
An important task in a machine learning context is the following \textit{unsupervised} learning task:
Given samples from the distribution Eq.~(\ref{eqProbGibbs}) on $\{-1,1\}^N$,  learn the matrix $A$. This problem is also known as the inverse Ising problem, in contrast to modelling a physical system  (e.g., a spin system) and studying its behavior and effects such as phase transitions.

We provide a short overview of the hardness of the Ising model learning and the complexities of known algorithms. We refer the reader to \cite{Bresler2015, Klivans2017} as entry points into the literature, from which also the following overview derives.
An unconditional lower bound on the sample complexity of learning Ising models with $N$ vertices was given by Santhanam and Wainwright \cite{SW12}. Even if the weights of the underlying graph are known and greater than $\eta >0$,
any algorithm for learning the graph structure must use $\Omg{\frac{2^{\lambda(A,\theta) /4} \log N}{\eta 2^{3\eta} }}$ samples.
A famous early example of an efficient algorithm is due to Chow and Liu from 1968 \cite{CL68}  for when the underlying graph is a tree.
For learning Ising models on general graphs with $N$ nodes of degree at most $d$, the best known method is based on exhaustive search costing $N^d$.
Subsequent works addressed the inverse Ising model problem for restricted classes of graphs or restricted nature of interactions between the variables. Some of these include learning Ising models when the underlying graph structure is a polytree~\cite{polytree}, hypertree~\cite{hypertree}, tree mixture~\cite{mixtree}, or when the underlying graph has the correlation decaying property (which, informally means that two variables are asymptotically independent as the graph distance between them increases)~\cite{correlationdecay}.
With modest assumptions, Bresler gave a simple greedy algorithm \cite{Bresler2015} which reconstructs arbitrary Ising models on $N$ nodes of maximum degree $d$ in time $\tOrd{N^2}$ but has a sample complexity that depends doubly exponentially on $d$, whereas a singly exponential dependence on $d$ is necessary. He also gave evidence that this run-time is optimal by showing that beating the $\tOrd{N^2}$ run-time bound would imply an improved run-time algorithm for 
the well-studied \emph{light-bulb} problem~\cite{valiant88}.
Subsequently, the work of Bresler was subsumed by Klivans and Meka~\cite{Klivans2017}, who gave a $\Ord{N^2}$ algorithm with optimal sample complexity. Their main tool is a multiplicative weight update method algorithm (called \emph{Sparsitron}) for learning an hypothesis class called the \emph{generalized linear models} (discussed below). As an application of the Sparsitron algorithm, they show the Ising model can be learned efficiently in both time and sample complexity sense. We discuss their algorithm in more details below and eventually turn it into a quantum algorithm with the same sample complexity but with an improved run-time.
In terms of the nature of algorithms used, besides exhaustive searching and greedy algorithms, several works have used convex optimization techniques such as in~\cite{RLW10,LGK06}. A very recent work~\cite{daskalakis2020treestructured} gives a time and sample efficient algorithm to learn Ising models where the underlying graph is a tree, without making any further assumptions on the underlying distribution (such as bounds on the strengths of the model’s edges/hyper-edges).
In the quantum setting, to the best of our knowledge our work provides the first quantum algorithm for the inverse Ising model problem.

We continue with a more detailed discussion, which provides the connection to the following algorithm, the Sparsitron \cite{Klivans2017}. Focusing on a particular variable $Z_j$ in Eq.~(\ref{eqProbGibbs}), and setting the other variables to $x \in \{-1,1\}^{[N]\setminus \{j\}}$, we have for the conditional probability of the Ising model
\be
P[Z_j&=&-1 | Z_{\neq j} = x ] = \frac{P[Z_j=-1, Z_{\neq j} = x ]}{P[Z_{\neq j} = x ]} \nonumber
\\ &=&
\frac{P[Z_j=-1, Z_{\neq j} = x ]}{P[Z_j=-1, Z_{\neq j} = x ] + P[Z_j= 1, Z_{\neq j} = x ]} \nonumber
\\&=&
 \frac{1}{1+ \exp(4 \sum_{k\neq j} A_{jk} x_k +2  \theta_j) } \nonumber
 \\&=& \sigma(w \cdot x - 2\theta_j).
\ee
Here, we have used the sigmoid function $\sigma(z) = 1/(1+e^{-z})$ and the vector $w \in \mathbbm R^{[N]\setminus \{j\}}$ with $w_k = - 4 A_{jk}$.
This single-variable conditional probability suggests a way to turn the original unsupervised learning problem into a \textit{supervised} learning problem.
Setting $X = (Z_k, k\neq j)$ and $Y = (1-Z_j)/2$ turns all variables except the $j$-th one into features and the $j$-th variable into a label.
Note that $\mathbbm E[Y|X=x]=P[Z_j=-1 | Z_{\neq j} = x] = \sigma(w \cdot x -2 \theta_j)$.
In addition, we can redefine the vectors to absorb the bias term \cite{Klivans2017}. Increase the dimension of  $w$ by $1$ and change $x \leftarrow [x_1, \cdots,x_{N-1}, 1]^T$. This change straightforwardly  allows the learning of the bias $-2\theta_j$. We have the following problem statement for learning such a generalized linear model (GLM).
Given samples $(X,Y)$ from a distribution $\mathcal D$ with the conditional mean function $\mathbbm E[Y|X=x] = \sigma(w \cdot x)$, learn $w$.

Reference \cite{Klivans2017} developed the Sparsitron, an efficient classical method to learn GLMs based on the Hedge algorithm by Freund and Schapire. One assumption is that a true
$w$ with $\Vert w\Vert_1 \leq \lambda$ exists where $\lambda\geq 0$ is known. Without loss of generality one can take $w\geq 0$ and that $\Vert w\Vert_1 = \lambda$, see \cite{Klivans2017}.
A square loss function, or ``risk", for any vector $v \in [-1,1]^N$ is given by
\be
\varepsilon(v) := \mathbbm E_{(X,Y )\sim \mathcal D} \left [\left(\sigma\left (v\cdot X\right) - \sigma\left(w \cdot X\right)\right)^2\right].
\ee
The learning task here is defined in terms of finding a $v$ such that this risk is $\epsilon$ small with high probability.
The classical algorithm assumes access to two training sets, one with $T$ samples which will be used for constructing predictions ($\lambda p^{(t)}$ in the algorithm) and the other with $M$ samples for finding the best prediction among them. The second training set is used to evaluate the risk in an empirical manner, as we do not have access directly to the distribution $\mathcal D$. The ``empirical risk" for any vector $v \in [-1,1]^N$ and given samples $(a^{(m)},b^{(m)}) \in [-1,1]^N \times [0,1]$ for $m\in[M]$ from $\mathcal D$ is given by
\be \label{eqRiskEmp}
\widehat \varepsilon(v) := \frac{1}{M} \sum_{m=1}^M\left (\sigma \left ( v \cdot a^{(m)} \right) - b^{(m)}\right )^2.
\ee
We first show the original classical algorithm, then an approximate classical algorithm, then the quantum algorithm.
The classical algorithm for the Sparsitron is given in Algorithm \ref{algSparsitron}.
\begin{figure}
\begin{algorithm}[H]
  \caption{Sparsitron \cite{Klivans2017} with risk approximation}
    \label{algSparsitron}
  \begin{algorithmic}[1]
    \Require{Parameter $\beta \in (0,1)$, norm $\lambda\geq 0$,
    training set $(x^{(t)},y^{(t)}) \in [-1,1]^N \times [0,1]$ for $t\in[T]$,
    training set $(a^{(m)},b^{(m)}) \in [-1,1]^N \times [0,1]$ for $m\in[M]$,
    flag $\in \{$original, approximate$\}$.}
    \State $w^{(1)} \gets \vec 1 / N$.
    \For{$t = 1 \textrm{ to } T$}
    \State $p^{(t)} \gets  \frac{w^{(t)}}{\Vert w^{(t)} \Vert_1}$.
   \State $l^{(t)} \gets \frac{1}{2} \left(\vec 1 +\left (\sigma\left (\lambda p^{(t)} \cdot x^{(t)}\right ) - y^{(t)}\right ) x^{(t)}\right)$.
    \State $w^{(t+1)} \gets w^{(t)}  \odot \beta^{l^{(t)}}$.
    \If {flag = original}
     \For{$m=1 \textrm{ to } M$}
    \State $z^{(t,m)} \gets p^{(t)} \cdot a^{(m)}$.
    \EndFor
    \ElsIf {flag = approximate}
    \State Prepare sampling data structure for $p^{(t)}$  via Fact \ref{theoremSampling}.
     \For{$m=1 \textrm{ to } M$}
    \State $z^{(t,m)}  \gets$ Estimate $p^{(t)} \cdot a^{(m)}$ to accuracy $\frac{\epsilon}{16\lambda}$ with success probability $1-\frac{\delta}{MT}$.
    \EndFor
    \EndIf
        \State $\widehat \varepsilon^{(t)} \gets \frac{1}{M} \sum_{m=1}^M\left (\sigma\left (\lambda z^{(t,m)} \right ) - b^{(m)}\right )^2$.
    \EndFor
     \Ensure {$v=\lambda p^{(t')}$ for $t' =\arg \min_{t\in[T]}\widehat \varepsilon^{(t)}$. }
  \end{algorithmic}
\end{algorithm}
\end{figure}
The algorithm consists of one main loop that goes over the first set of training examples and is equivalent to the Hedge algorithm.
For each training example $x^{(t)}$, a prediction vector $\lambda p^{(t)}$ is constructed.
From this vector one can compute the predicted label for the training example by using the activation function as $\sigma\left (\lambda p^{(t)} \cdot x^{(t)}\right)$.
A measure of the quality of the prediction is given by $\sigma \left (\lambda p^{(t)} \cdot x^{(t)}\right ) - y^{(t)}$ using the given label $y^{(t)}$ from the training set.
In the loop, a loss vector $l^{(t)} = \frac{1}{2} \left (\vec 1 + \left (\sigma \left (\lambda p^{(t)} \cdot x^{(t)}\right ) - y^{(t)}\right ) x^{(t)} \right ) \in [0,1]^N$ is computed, which has  entries close to $1/2$ in the case of correct prediction. In the last step of the loop, the second training set is used to compute the empirical risk $\widehat \varepsilon\left (\lambda p^{(t)}\right)$ to determine the quality of each prediction vector $p^{(t)}$.
The algorithm finally returns the vector $v=\lambda p^{(t')}$ which performs best on this set.
The provable learning guarantee and the run time is summarized in the following theorem. Plugging in $T$ and $M$, the run time can be expressed as $\tOrd{N \frac{\lambda^2}{ \epsilon^4} \log^2\frac{1}{\delta}}$.
\begin{theorem}[Sparsitron \cite{Klivans2017}]\label{theoremSparsitron}
Let $\mathcal D$ be a distribution on $[-1,1]^N \times \{0,1\}$ where for $(X,Y) \sim \mathcal D$, $\mathbbm E\left [Y|X = x\right] = \sigma(w \cdot x)$ for a non-decreasing 1-Lipschitz function $\sigma : \mathbbm R \to [0, 1]$. Suppose that $\Vert w\Vert_1\leq \lambda$ for a known $\lambda \geq 0$. Let $\epsilon,\delta \in (0, 1)$.  Given $T+M =\Ord{\lambda^2 \log (N/\delta \epsilon)/ \epsilon^2}$ independent samples from $\mathcal D$ and $\beta = 1-\sqrt{\log N /T}$, Algorithm \ref{algSparsitron} with flag ``original" produces a vector $v \in \mathbbm R^N$ such that with probability at least $1-\delta$,
\be
\varepsilon(v) \leq \epsilon.
\ee
The run time of the algorithm is $\Ord{N\times T\times M }$, where $T = \Ord{\lambda^2 \log (N/\delta \epsilon)/\epsilon^2}$ and $M = \Ord{ \log(T/\delta)/\epsilon^2}$. Moreover, the algorithm can be run in an online manner.
\end{theorem}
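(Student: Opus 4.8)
The plan is to recognize that the main loop of Algorithm~\ref{algSparsitron} with flag ``original'' runs exactly the Hedge algorithm (Algorithm~\ref{algSchapire} with flag ``null'') on the loss vectors $l^{(t)}=\frac12\bigl(\vec 1+(\sigma(\lambda p^{(t)}\cdot x^{(t)})-y^{(t)})\,x^{(t)}\bigr)$, which lie in $[0,1]^N$ since $\sigma(\cdot)-y^{(t)}\in[-1,1]$ and $x^{(t)}\in[-1,1]^N$. Writing $v^{(t)}:=\lambda p^{(t)}$ and invoking the quoted reduction to $w\geq 0$, $\Vert w\Vert_1=\lambda$ (so that $w/\lambda$ is a probability vector), the Hedge regret bound (Theorem~\ref{thmRegret}) together with the elementary inequality $\sum_j (w/\lambda)_j\sum_t l_j^{(t)}\geq\min_{j\in[N]}\sum_t l_j^{(t)}=L^{(T)}_{\min}$ gives $L_{\mathcal H}=\sum_t p^{(t)}\cdot l^{(t)}\leq\sum_t (w/\lambda)\cdot l^{(t)}+\Ord{\sqrt{T\log N}+\log N}$. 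Multiplying by $2\lambda$, substituting the explicit $l^{(t)}$, and cancelling the $\vec 1$-terms (both $p^{(t)}$ and $w/\lambda$ sum to $1$) rearranges this into the correlation inequality $\sum_t(\sigma(v^{(t)}\cdot x^{(t)})-y^{(t)})(v^{(t)}\cdot x^{(t)}-w\cdot x^{(t)})\leq\Ord{\lambda\sqrt{T\log N}+\lambda\log N}$.

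Next I would strip off the label noise. Since $\mathbbm E[y^{(t)}\mid x^{(t)}]=\sigma(w\cdot x^{(t)})$, write $y^{(t)}=\sigma(w\cdot x^{(t)})+\zeta^{(t)}$ with $\{\zeta^{(t)}\}$ a martingale difference sequence, $|\zeta^{(t)}|\leq 1$. Because $|v^{(t)}\cdot x^{(t)}|,|w\cdot x^{(t)}|\leq\lambda$, the partial sums of $\zeta^{(t)}(v^{(t)}\cdot x^{(t)}-w\cdot x^{(t)})$ form a martingale with increments at most $2\lambda$, so Azuma--Hoeffding bounds their total by $\Ord{\lambda\sqrt{T\log(1/\delta)}}$ with probability $1-\delta$. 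Substituting this, and then using that $\sigma$ is non-decreasing and $1$-Lipschitz --- hence $(\sigma(a)-\sigma(b))(a-b)\geq(\sigma(a)-\sigma(b))^2$ for all $a,b\in\mathbbm R$ --- converts the left-hand side into a sum of squares,
\[
\sum_{t=1}^T\bigl(\sigma(v^{(t)}\cdot x^{(t)})-\sigma(w\cdot x^{(t)})\bigr)^2\ \leq\ \Ord{\lambda\sqrt{T\log(N/\delta)}+\lambda\log N}.
\]

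I would then pass from these empirical squares to the population risk. Each $v^{(t)}$ depends only on the first $t-1$ samples and is therefore independent of $x^{(t)}$, so $\mathbbm E[(\sigma(v^{(t)}\cdot x^{(t)})-\sigma(w\cdot x^{(t)}))^2\mid v^{(t)}]=\varepsilon(v^{(t)})$; a second Azuma--Hoeffding on these $[0,1]$-valued martingale differences yields $\frac1T\sum_t\varepsilon(v^{(t)})\leq\frac1T\sum_t(\sigma(v^{(t)}\cdot x^{(t)})-\sigma(w\cdot x^{(t)}))^2+\Ord{\sqrt{\log(1/\delta)/T}}$. Chaining with the previous display, $\min_{t\in[T]}\varepsilon(v^{(t)})\leq\Ord{\lambda\sqrt{\log(N/\delta)/T}+\lambda\log N/T}$, which is at most $\epsilon/2$ once $T=\Tht{\lambda^2\log(N/\delta\epsilon)/\epsilon^2}$. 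The second training set converts this existence statement into a guarantee on the \emph{returned} vector: for each fixed $v^{(t)}$, $\widehat\varepsilon(v^{(t)})=\frac1M\sum_m(\sigma(v^{(t)}\cdot a^{(m)})-b^{(m)})^2$ is an average of $M$ i.i.d.\ $[0,1]$-valued terms of mean $\varepsilon(v^{(t)})$, so Hoeffding and a union bound over the $T$ candidates give $|\widehat\varepsilon(v^{(t)})-\varepsilon(v^{(t)})|\leq\epsilon/4$ for all $t$ with probability $1-\delta$ provided $M=\Tht{\log(T/\delta)/\epsilon^2}$; since the output is $v=v^{(t')}$ with $t'=\arg\min_{t}\widehat\varepsilon^{(t)}$, the chain $\varepsilon(v)\leq\widehat\varepsilon(v)+\epsilon/4\leq\min_t\widehat\varepsilon(v^{(t)})+\epsilon/4\leq\min_t\varepsilon(v^{(t)})+\epsilon/2\leq\epsilon$ closes correctness after a union bound over the constantly many failure events, each charged a constant fraction of $\delta$.

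For the run time, each of the $T$ rounds performs an $\Ord{N}$ multiplicative weight update and $M$ inner products of $N$-dimensional vectors, i.e.\ $\Ord{N\times T\times M}$ in total, and the procedure is online because both sample streams are consumed sequentially while only the running weight vector and the running empirical risks need to be stored. The step I expect to be the main obstacle is the careful bookkeeping of adaptivity: the predictions $v^{(t)}$ are built from earlier samples, so one must pin down precisely which random variable is independent of which sample in order to license each martingale-concentration step (and, symmetrically, that the second sample set is independent of all the $v^{(t)}$). The analytic core, the pointwise inequality $(\sigma(a)-\sigma(b))(a-b)\geq(\sigma(a)-\sigma(b))^2$, is by contrast immediate from the stated hypotheses on $\sigma$.
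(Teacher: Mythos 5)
Your proposal is correct and follows essentially the same argument as the cited Klivans--Meka proof, whose skeleton this paper reproduces (in generalized form, with estimated inner products) in the proof of Theorem \ref{theoremQuantumSparsitron}: the Hedge regret bound applied to the losses $l^{(t)}$, martingale concentration via Azuma--Hoeffding, the pointwise inequality $(\sigma(a)-\sigma(b))(a-b)\geq(\sigma(a)-\sigma(b))^2$, and validation on the second sample set with a union bound over the $T$ candidates. The only cosmetic difference is that you split the concentration into two Azuma--Hoeffding applications (one for the label noise, one to pass from empirical squared errors to the population risk $\varepsilon(v^{(t)})$), whereas the paper applies it once to $Q^{(t)}-\mathbbm E[Q^{(t)}\,\vert\,\mathcal F_{t-1}]$ and absorbs both effects there; the resulting bounds are the same up to constants.
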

While the sample complexity of this algorithm is near-optimal \cite{SW12}, it is interesting to investigate classical run time improvements.
To this end, we now discuss our approximate Sparsitron algorithm, i.e., Algorithm \ref{algSparsitron} with the ``approximate" flag.
The difference to the original algorithm is that most inner products are estimated instead of computed exactly.

The provable learning guarantee for the approximate algorithm follows from the original work but relies on a few additional ideas.
First,
Fact \ref{theoremSampling} discusses the construction of a data structure to sample from a probability vector $p$. Given this data structure,  inner products $ p\cdot x$ for $x\in[-1,1]^N$ can be determined efficiently to additive accuracy $\epsilon$ and success probability $1-\delta$. The corresponding result is Lemma \ref{lemmaSampleInner} in Appendix \ref{appendixSampling}.
Since here $\Vert x \Vert_{\max} \leq 1$, the run time for a single inner product estimation is $\Ord{\frac{\Vert x \Vert_{\max}^2}{\epsilon^2} \log \frac{1}{\delta}}=\Ord{\frac{1}{\epsilon^2} \log \frac{1}{\delta}}$.
A scaling with $1/\epsilon^2$ is obtained, in contrast to quantum estimation which scales with $1/\epsilon$.

Second, as mentioned before the empirical risk is an approximation to the true risk as only $M$ training examples are used. The inner product estimation leads to an approximate empirical risk. For any vector $v \in [-1,1]^N$ and given samples $(a^{(m)},b^{(m)}) \in [-1,1]^N \times [0,1]$ for $m\in[M]$ from $\mathcal D$, let $z^{(m)}$ be the estimates of the inner product $v \cdot a^{(m)}$. The approximate empirical risk is defined as
\be \label{eqRiskApprox}
\widetilde {\varepsilon} (v) := \frac{1}{M} \sum_{m=1}^M\left (\sigma \left ( z^{(m)} \right) - b^{(m)}\right )^2.
\ee
We can use this approximate risk to bound the true risk via the triangle inequality. Third, each inner product estimation is probabilistic, hence we bound the overall success probability of the algorithm with a union bound together with the probabilistic steps of the original algorithm.

 As we now show, the run time of the approximate classical algorithm is about $\tOrd{T(N +\frac{\lambda^4 M }{\epsilon^2} \log \frac{1}{\delta})}$. The multiplicative updates and maintaining of a sampling data structure still cost $\tOrd{N}$.
Plugging in $T$ and $M$, the run time can be expressed as
$\tOrd{N \frac{\lambda^2}{\epsilon^2} \log \frac{1}{\delta} +\frac{\lambda^6 }{\epsilon^6} \log^3 \frac{1}{\delta}}$. In some ranges of parameters, this run time can be considered an improvement over the original Sparsitron which has a run time of $\tOrd{N \frac{\lambda^2}{ \epsilon^4} \log^2\frac{1}{\delta}}$.
\begin{theorem}[Approximate Sparsitron] \label{theoremSparsitronApprox}
With the same assumptions on $\mathcal D$, $\sigma$, $\epsilon$, $\delta$, $\lambda$, $T$, $M$, and $\beta$ as in Theorem \ref{theoremSparsitron},
Algorithm \ref{algSparsitron} with flag ``approximate" produces a vector $v \in \mathbbm R^N$ such that with probability at least $1-\delta$,
\be
\varepsilon(v) \leq \epsilon.
\ee
The run time of the algorithm is $\tOrd{T\left (N +\frac{M \lambda^4}{\epsilon^2} \log \frac{1}{\delta}\right)}$. Again, the algorithm can be run in an online manner.
\end{theorem}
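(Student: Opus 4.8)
The plan is to reduce the correctness claim to Theorem~\ref{theoremSparsitron} by showing that the approximate empirical risk $\widetilde\varepsilon^{(t)}$ computed from the estimated inner products $z^{(t,m)}$ stays uniformly close (over all $t\in[T]$) to the exact empirical risk $\widehat\varepsilon(\lambda p^{(t)})$ used in the original algorithm, so that the argmin over $t$ selects a prediction vector of essentially the same quality. First I would record the elementary Lipschitz estimate: since $\sigma$ is $1$-Lipschitz and each $z^{(t,m)}$ approximates $p^{(t)}\cdot a^{(m)}$ to additive accuracy $\epsilon/(16\lambda)$ (Lemma~\ref{lemmaSampleInner}, invoked inside Algorithm~\ref{algSparsitron}), we get $|\sigma(\lambda z^{(t,m)}) - \sigma(\lambda p^{(t)}\cdot a^{(m)})| \le \lambda\cdot\frac{\epsilon}{16\lambda} = \epsilon/16$ for every $m$. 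Both $\sigma(\lambda z^{(t,m)})-b^{(m)}$ and $\sigma(\lambda p^{(t)}\cdot a^{(m)})-b^{(m)}$ lie in $[-1,1]$, and $|a^2-c^2| = |a-c||a+c| \le 2|a-c|$ for $a,c\in[-1,1]$, so each summand of $\widetilde\varepsilon^{(t)}$ differs from the corresponding summand of $\widehat\varepsilon(\lambda p^{(t)})$ by at most $2\cdot\epsilon/16 = \epsilon/8$; averaging over $m$ gives $|\widetilde\varepsilon^{(t)} - \widehat\varepsilon(\lambda p^{(t)})| \le \epsilon/8$ for every $t$.

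Next I would run the argmin comparison. Let $t^\ast$ be the index chosen by the approximate algorithm and $t^\circ$ the index the original algorithm would choose, i.e.\ $\widehat\varepsilon(\lambda p^{(t^\circ)}) = \min_t \widehat\varepsilon(\lambda p^{(t)})$. Then
\be
\widehat\varepsilon(\lambda p^{(t^\ast)}) \le \widetilde\varepsilon^{(t^\ast)} + \tfrac{\epsilon}{8} \le \widetilde\varepsilon^{(t^\circ)} + \tfrac{\epsilon}{8} \le \widehat\varepsilon(\lambda p^{(t^\circ)}) + \tfrac{\epsilon}{4},
\ee
so the output $v = \lambda p^{(t^\ast)}$ has exact empirical risk within $\epsilon/4$ of the best over the $T$ candidates. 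The proof of Theorem~\ref{theoremSparsitron} in~\cite{Klivans2017} already shows that, with the stated $T,M$ and with probability $\ge 1-\delta$, the candidate minimizing the \emph{exact} empirical risk has true risk $\varepsilon(\cdot)\le \epsilon$; since their analysis passes through a bound on $\min_t \widehat\varepsilon(\lambda p^{(t)})$ together with a uniform convergence bound $|\widehat\varepsilon(v)-\varepsilon(v)|$ small for the $O(T)$ relevant vectors, one need only track an extra additive $O(\epsilon)$ slack from the displayed inequality and, if desired, rescale $\epsilon$ by a constant at the outset (which only affects the constants hidden in $T$ and $M$). The probability budget: the original algorithm's randomness (sample draws) contributes failure $\le \delta/2$ after halving $\delta$, and the $MT$ inner-product estimations each fail with probability $\le \delta/(2MT)$ (as set in the algorithm), so a union bound gives total failure $\le\delta$.

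For the run time, I would account per outer iteration $t$: the multiplicative-weight bookkeeping (forming $p^{(t)}$, the loss vector $l^{(t)}$, and the update $w^{(t+1)}$) costs $\tOrd{N}$, and preparing the sampling data structure for $p^{(t)}$ via Fact~\ref{theoremSampling} also costs $\tOrd{N}$. Then $M$ inner-product estimates are performed, each to accuracy $\epsilon/(16\lambda)$ with success probability $1-\delta/(MT)$; by Lemma~\ref{lemmaSampleInner} (using $\Vert a^{(m)}\Vert_{\max}\le 1$) one estimate costs $\Ord{\frac{(16\lambda)^2}{\epsilon^2}\log\frac{MT}{\delta}} = \tOrd{\frac{\lambda^2}{\epsilon^2}\log\frac1\delta}$; wait---one factor of $\lambda^2$ here, but the inner product feeds into $\sigma(\lambda z^{(t,m)})$, so to keep $|\sigma(\lambda z)-\sigma(\lambda\,p\cdot a)|\le\epsilon/16$ we actually need the $z$-accuracy $\epsilon/(16\lambda)$ as stated, giving cost $\tOrd{\lambda^2/\epsilon^2}$ per estimate. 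Hmm, the theorem claims $\frac{M\lambda^4}{\epsilon^2}$ total over the $M$ estimates in one iteration; I would reconcile this by noting the empirical risk itself must be estimated to accuracy $O(\epsilon/\lambda^2)$ rather than $O(\epsilon)$ somewhere in Klivans--Meka's chain (because the risk bound is driven through a factor $\lambda^2$), forcing the per-inner-product accuracy down to $\epsilon/(16\lambda^2)$ and the per-estimate cost up to $\tOrd{\lambda^4/\epsilon^2}$ — this is the one place where I would have to re-examine~\cite{Klivans2017} carefully rather than quote it. Summing, iteration $t$ costs $\tOrd{N + \frac{M\lambda^4}{\epsilon^2}\log\frac1\delta}$, and multiplying by $T$ gives the claimed $\tOrd{T\bigl(N + \frac{M\lambda^4}{\epsilon^2}\log\frac1\delta\bigr)}$. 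The online claim is inherited verbatim from Theorem~\ref{theoremSparsitron} since the estimation step only reads $p^{(t)}$ and the $m$-th sample of the held-out set and does not revisit earlier data.

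The main obstacle is pinning down the exact accuracy $\epsilon/(16\lambda)$ versus $\epsilon/(16\lambda^2)$ required for the inner-product estimates so that the $\lambda$-dependence in the final run time is correct and consistent with the argmin-transfer argument; this requires re-deriving, rather than merely citing, the step in the Klivans--Meka analysis that converts an empirical-risk bound into a bound on $\varepsilon(v)$, and tracking where the $\lambda^2$ amplification enters.
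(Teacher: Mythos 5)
Your proof follows essentially the same route as the paper's: the $1$-Lipschitz bound $|\sigma(\lambda z^{(t,m)})-\sigma(\lambda p^{(t)}\cdot a^{(m)})|\le \epsilon/16$, the Lipschitz constant $2$ of $x^2$ on $[-1,1]$ yielding $|\widetilde\varepsilon^{(t)}-\widehat\varepsilon(\lambda p^{(t)})|\le\epsilon/8$, an argmin-transfer via the triangle inequality, and a union bound over the $MT$ estimations together with the original algorithm's probabilistic steps. The paper phrases the transfer directly as $\varepsilon(v)\le\frac{\epsilon}{4}+\min_{t}\widetilde\varepsilon(\lambda p^{(t)})\le\frac{\epsilon}{2}+\min_{t}\varepsilon(\lambda p^{(t)})\le\epsilon$ (after rescaling $\epsilon\to\epsilon/2$ in the Hedge guarantee) rather than comparing against the exact algorithm's argmin, but that difference is cosmetic. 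The one thing to clear up is the ``obstacle'' you flag at the end: it is not one. The accuracy $\epsilon/(16\lambda)$ you derived is exactly what is needed and exactly what the algorithm specifies, so each estimate costs $\Ord{\frac{\lambda^2}{\epsilon^2}\log\frac{MT}{\delta}}$ by Lemma \ref{lemmaSampleInner}. The $\lambda^4$ in the theorem statement arises only because the paper coarsely absorbs the factor $\log(MT/\delta)$ (which contains $\log T=\Ord{\log(\lambda^2\log(N/\delta\epsilon)/\epsilon^2)}$) when ``simplifying'' to $\lambda^4\log(1/\delta)$; it is a deliberately loose upper bound in $\lambda$, not evidence that a finer accuracy $\epsilon/(16\lambda^2)$ is required anywhere in the Klivans--Meka chain. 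Your tighter per-iteration count $\tOrd{N+\frac{M\lambda^2}{\epsilon^2}\log\frac{1}{\delta}}$ already implies the claimed run time, so no re-examination of the risk-conversion step is needed.
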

\begin{proof}
In the original work \cite{Klivans2017}, due to the guarantees of the Hedge algorithm, it was derived that for the true risk and all computed $p^{(t)}$ it holds that
\be
\min_{t\in[T]} \varepsilon\left (\lambda p^{(t)}\right)  \leq  \epsilon.
\ee
We can easily let $\epsilon \to \epsilon/2$.
Our algorithm selects $v= \lambda p^{(t')}$ for $t' =\arg \min_{t\in[T]} \widetilde \varepsilon(\lambda p^{(t)})$, where $\widetilde \varepsilon$ is the empirical risk estimated from the imprecise inner products. We need to show that for $v$ we have a similar guarantee.

The closeness of the true and the empirical risk is achieved as in the original work by choosing $M = C\log(T/\delta)/\epsilon^2$, with a constant $C$,  such that for all $t\in[T]$ with probability $1-\delta$
\be
\left \vert \varepsilon\left(\lambda p^{(t)}\right) -  \widehat \varepsilon\left (\lambda p^{(t)}\right) \right \vert \leq \frac{\epsilon}{8}.
\ee
In addition, the error of inner products used for computing the empirical risk is set to $\left \vert  z^{(t,m)}- p^{(t)} \cdot a^{(m)} \right \vert \leq \frac{\epsilon}{16\lambda}$, hence the induced error in the empirical risk is bounded as
\be
\left \vert \widetilde \varepsilon\left (\lambda p^{(t)}\right) - \widehat \varepsilon\left(\lambda p^{(t)}\right) \right\vert &\leq& \frac{2 \lambda }{M}  \sum_{m=1}^M \left \vert z^{(t,m)} - p^{(t)} \cdot a^{(m)} \right \vert \nonumber \\
 &\leq&  \frac{\epsilon}{8},
\ee
since $b^{(m)}\in[0, 1]$ and the Lipschitz constant of $x^2$ on $[-1,1]$ is $2$.
Hence we also have that
\be
\left \vert \widetilde \varepsilon\left(\lambda p^{(t)}\right) - \varepsilon\left(\lambda p^{(t)}\right) \right\vert \leq \frac{\epsilon}{4}.
\ee
Therefore, the true risk of $v$ can be bounded as
\be \label{eqRiskBound}
\varepsilon(v) &\leq& \frac{\epsilon}{4}  +  \widetilde  \varepsilon( v) = \frac{\epsilon}{4} + \min_{t\in[T]} \widetilde \varepsilon\left(\lambda p^{(t)}\right) \nonumber \\
&=& \frac{\epsilon}{4} + \min_{t\in[T]}\left( \widetilde \varepsilon\left(\lambda  p^{(t)}\right) - \varepsilon\left(\lambda  p^{(t)}\right) + \varepsilon\left(\lambda  p^{(t)}\right)\right )\nonumber \\ &\leq& \frac{\epsilon}{2} + \min_{t\in[T]} \varepsilon\left(\lambda  p^{(t)}\right) \leq \epsilon.
\ee
Next, we discuss  the run time.
Computing the loss vector and maintaining the $ p^{(t)}$ sampling data structure costs $\tOrd{N}$. Computing the multiplicative update costs
$\Ord{N}$.
Estimating the $M$ inner products for risk estimation at every step to accuracy $\epsilon/(16 \lambda)$ with success probability $1-\frac{\delta}{MT}$ costs $\tOrd{M \frac{\lambda^2}{\epsilon^2} \log \frac{MT }{\delta}}$.
Hence the total run time is $\tOrd{T \left(N + M \frac{\lambda^2}{\epsilon^2} \log \frac{MT }{\delta}\right)}$ which can be simplified to $\tOrd{T \left(N + \frac{M \lambda^4}{\epsilon^2} \log \frac{1}{\delta}\right)}$.
The success probability of the inner loop is $\left (1-\frac{\delta}{MT}\right)^M \geq 1- \frac{\delta}{T}$.
The success probability over all steps is $\left (1-\frac{\delta}{T}\right)^{T}  \geq 1-\delta$.
Together with the success probability of the martingale estimation and the risk estimation of the original algorithm this gives a total success probability of the algorithm of at least $1-3\delta$. For the theorem statement, let $\delta \to \delta /3$.
\end{proof}
To achieve this run time, we were able to compute the first inner product $p^{(t)} \cdot x^{(t)}$ exactly and only estimate the $M$ inner products $p^{(t)} \cdot a^{(m)}$. In the quantum case, we will also estimate the first inner product, which increases the amount  of error analysis, as will be shown now.
\section{Quantum Sparsitron}

In this section, we construct a quantum algorithm for the Sparsitron. The algorithm is again based on quantum minimum finding, see Lemma \ref{lemmaMin}, amplitude amplification and estimation, see Lemma \ref{lemmaU}, and inner product estimation, see  Lemma \ref{lemmaInnerProduct2}, all in Appendix \ref{appendixAmp}.
Similar to the quantum Hedge algorithms above, the core idea is to never explicitly store the weight vector $w^{(t)}$. Rather, norms and inner products are estimated and stored.  In the iteration, access to these quantities and the new training datum allow to prepare a new loss oracle and a new weight quantum state. This  state preparation can then in turn be used to compute the new norm and inner products. We can expect to obtain a quantum speedup in the dimension $N$. On the other hand, we do not expect a quantum speedup in the number of samples $T$ as the provable learning guarantees are classical and invoke the Hedge algorithm. In fact, we obtain again a worsening of the performance in $T$. If  $\lambda$ and $1/\epsilon$ are  ${\rm poly} \log N$, this worsening is however tolerable as $T$ is then also $ {\rm poly} \log N$.

We first specify the input model. Here, we assume quantum access to the training data. The access model can be turned into an online setting by providing sequential access to the unitaries.
\begin{oracle}[Training sets]\label{inputQuantumSparsitron}
Let $j\in [N]$, $t\in [T]$, and $m \in [M]$.
Assume $\Ord{1}$ bits are sufficient to store $x_j^{(t)}$ and $a_j^{(m)}$. Assume to be given access to $T$ unitaries $U_{\rm train 1}^{(t)}$ and $M$ unitaries $U_{ \rm train 2}^{(m)}$ on $\Ord{\log N }$ qubits that perform the operations $\ket j \ket{\bar 0} \to \ket j \ket{ x_j^{(t)} }$ and $\ket j \ket{\bar 0} \to \ket j \ket{ a_j^{(m)}}$, respectively.
\end{oracle}
The same discussion regarding the $\ket {\bar 0}$ state as in Data Input \ref{oracle1} applies.
The data access allows to arithmetically compute the desired losses in quantum superposition.
\begin{lemma}[Loss quantum circuits] \label{lemmaSparsitronLossUnitaries}
Given Input \ref{inputQuantumSparsitron}
and classical access to the  numbers $\lambda \geq 0$, $h \in[-1,1]$ and $y \in \mathbbm [0,1]$.
For $t\in [T]$, the quantum operation $\ket j \ket{\bar 0} \to \ket j \ket{\frac{1}{2} \left (1 + \left (\sigma \left ( \lambda h\right) - y\right) x_j^{(t)}\right)}$  for $j\in [N]$ can be constructed on $\Ord{\log N }$ qubits, where the result is encoded to constant additive accuracy. The run time is $\Ord{1}$.  We denote these quantum circuits by $U_{\tilde l^{(t)}}$.
\end{lemma}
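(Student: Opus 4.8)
The plan is to build $U_{\tilde l^{(t)}}$ by a single query to the training-set oracle $U_{\rm train 1}^{(t)}$ of Data Input~\ref{inputQuantumSparsitron} to load $x_j^{(t)}$ into an ancilla register, followed by reversible fixed-point arithmetic on $\Ord 1$-bit numbers to form the target value, and finally a second query to $U_{\rm train 1}^{(t)}$ to uncompute the $x_j^{(t)}$ register and leave a clean output.

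First, observe that the prefactor $\tfrac12(\sigma(\lambda h)-y)$ depends only on the classically available numbers $\lambda$, $h$, and $y$, so I would precompute it classically before touching the quantum register. Since $\sigma$ is $1$-Lipschitz, it suffices to evaluate $\lambda h$ and then $\sigma$ to constant additive accuracy, which is $\Ord 1$ classical work; write the resulting constant as $c$, stored in $\Ord 1$ bits. Because $\sigma(\lambda h)\in[0,1]$ and $y\in[0,1]$, we have $c\in[-\tfrac12,\tfrac12]$, and consequently the target value $\tfrac12\bigl(1+(\sigma(\lambda h)-y)x_j^{(t)}\bigr)$ lies in $[0,1]$ for $x_j^{(t)}\in[-1,1]$, which is consistent with it being an entry of a loss vector.

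For the quantum part, start from $\ket j\ket{\bar 0}\ket{\bar 0}$ and apply $U_{\rm train 1}^{(t)}$ on the first two registers to obtain $\ket j\ket{x_j^{(t)}}\ket{\bar 0}$. Since $x_j^{(t)}$ is held in $\Ord 1$ bits and $c$ is a classically known $\Ord 1$-bit constant, a standard reversible fixed-point multiply-and-add circuit writes $\tfrac12+c\,x_j^{(t)}$ into the third register using $\Ord 1$ gates and $\Ord 1$ extra ancillas; the index register of $\Ord{\log N}$ qubits is left untouched, so the whole circuit uses $\Ord{\log N}$ qubits. Then apply $U_{\rm train 1}^{(t)}$ a second time: by the $\oplus$ convention spelled out after Data Input~\ref{oracle1} (which is assumed there to hold for the other data inputs as well), the oracle is its own inverse on these registers and restores $\ket{x_j^{(t)}}\mapsto\ket{\bar 0}$, leaving $\ket j\ket{\bar 0}\ket{\tfrac12(1+(\sigma(\lambda h)-y)x_j^{(t)})}$. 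Relabelling the last register as the output register yields exactly $U_{\tilde l^{(t)}}$, using $\Ord 1$ oracle queries and $\Ord 1$ additional gates, with the value encoded to constant additive accuracy.

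There is no serious obstacle here; the only points needing a little care are keeping every step reversible so that no garbage is left behind — handled by the compute/uncompute pattern and by the oracle's self-inverse property for the second query — and checking that the constant additive accuracy of the classical evaluation of $c$ and of the $\Ord 1$-bit arithmetic propagates to a constant additive accuracy in the output register, which it does since $x_j^{(t)}\in[-1,1]$ and $c\in[-\tfrac12,\tfrac12]$ so the errors are not amplified. The claimed $\Ord 1$ run time is then immediate.
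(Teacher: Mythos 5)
Your proposal is correct and follows essentially the same route as the paper's proof: precompute the prefactor $\sigma(\lambda h)-y$ classically, query the training oracle once to load $x_j^{(t)}$, perform the $\Ord{1}$-bit reversible arithmetic, and uncompute the data register with a second query. The extra remarks on the range of the output and on error propagation are fine but not needed beyond what the paper states.
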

\begin{proof}
Compute $z:= \sigma(\lambda h) - y$ classically.
Use the quantum access to
$\ket j \ket{ x_j^{(t)}}$ in superposition.
Then we compute $\ket j \ket{ x_j^{(t)}} \ket{ \frac{1}{2} \left ( 1+ z x_j^{(t)} \right ) }$,
using the well-known quantum circuits for basic arithmetic operations. Uncompute the
second register via another query to obtain  $\ket j \ket{ \frac{1}{2} \left ( 1+ z x_j^{(t)}\right ) }$. 
\end{proof}
Hence, we are able to construct the loss unitaries, which in the previous quantum Hedge algorithm were assumed to be given in Input \ref{oracle1}. With these loss unitaries, we can compute the weights via Lemma \ref{lemmaUpdateBit} and perform minimum finding and $\ell_1$-norm estimation, as before. Computing the  inner products is the core step in the Sparsitron. Consider the inner product $h^{(t)} := \frac{w^{(t)}}{\Vert w^{(t)} \Vert} \cdot x^{(t)}$. Instead of this inner product, we estimate a shifted inner product because the $x_j^{(t)}$ are $[-1,1]$ and there can be cancelation effects which make the inner product zero or very close to zero.  Lemma \ref{lemmaInnerProduct2} discusses the quantum estimation of the inner product.
As a byproduct this lemma also provides  $ w^{(t)}_{\max}$ and an estimate of
$ \left\Vert \frac{ w^{(t)}}{ w_{\max}^{(t)} } \right \Vert_1$, hence we do not describe these steps separately.

Note that it is not important that the weights $w^{(t)}$ follow exactly the original Sparsitron weights. It is only important that we have a final guarantee from the Hedge algorithm. If the inner product estimations are accurate enough we only obtain a small additional error to the Hedge error bound.
We note related work which analyzes the regret bounds with noisy estimates of the gradients in the bandit setting \cite{FKM2005}.

We proceed with the algorithm  for the Quantum Sparsitron, see Algorithm  \ref{algSparsitronQuantum}.
\begin{figure*}
\begin{minipage}{\linewidth}
\begin{algorithm}[H]
  \caption{Quantum Sparsitron}
    \label{algSparsitronQuantum}
  \begin{algorithmic}[1]
    \Require{Error $\epsilon\in (0,1)$, probability $\delta \in(0,1)$, parameter $ \beta \in (0,1)$, norm $\lambda\geq 0$, quantum access to training set $(x^{(t)}, y^{(t)}) \in [-1,1]^N \times [0,1]$ for $t\in[T]$ and training set
     $(a^{(m)}, b^{(m)}) \in [-1,1]^N \times [0,1]$ for $m\in[M]$.
     }
    \For{$t = 1 \textrm{ to } T$}
    \State Construct unitary for $w^{(t)} = w^{(1)} \beta^{\tilde l^{(1)}}\cdots \beta^{\tilde l^{(t-1)}}$  via Lemma \ref{lemmaUpdateBit} using unitaries $\{U_{\tilde l^{(t')}} : t' \in [t-1] \}$ from Lemma \ref{lemmaSparsitronLossUnitaries}.
     \State $ h^{(t)} \gets$  Estimate $\frac{w^{(t)}}{ \Vert w^{(t)} \Vert_1} \cdot x^{(t)}$ to additive accuracy $ \frac{\epsilon}{8\lambda^2}$ with success probability $1-\frac{\delta}{2T}$ via Lemma \ref{lemmaInnerProduct2}. From this subroutine, also store $w^{(t)}_{\max}$ and the estimate of $ \left\Vert \frac{ w^{(t)}}{ w_{\max}^{(t)} } \right \Vert_1$ .
      \For{$m=1 \textrm{ to } M$}
     \State $z^{(t,m)} \gets$ Estimate $\frac{ w^{(t)} }{\Vert w^{(t)}\Vert_1} \cdot a^{(m)}$ to additive accuracy $\frac{\epsilon}{16 \lambda}$ with success probability $1-\frac{\delta}{2MT}$ via Lemma \ref{lemmaInnerProduct2}.
    \EndFor
    \State $\widetilde \varepsilon^{(t)} \gets \frac{1}{M} \sum_{m=1}^M \left(\sigma \left(\lambda  {z^{(t,m)}}\right ) - b^{(m)}\right)^2$.
    \State Construct unitary $U_{l^{(t)}}$ that prepares $\ket j \ket {\tilde l^{(t)}_j}$ with $\tilde l^{(t)} = \frac{1}{2}\left ( \vec 1 +\left(\sigma\left (\lambda { h^{(t)}}\right ) - y^{(t)}\right) x^{(t)}\right)$ for the next step.
    \EndFor
    \State $t' =\arg \min_{t\in[T]}\widetilde \varepsilon^{(t)}$.
     \Ensure {$\left ({h^{(1)}}, \cdots, {h^{(t')}} , \Gamma^{(t')}, w^{(t')}_{\max} \right)$. }
  \end{algorithmic}
\end{algorithm}
\end{minipage}
\end{figure*}
The output of the algorithm are inner product estimates and a norm estimate. The output is not a full classical vector, which would take
$\Ord{N}$ time and space to write down. The inner products estimates allow the preparation of a quantum state proportional to the desired vector $q$ from which one can take samples or compute inner products with other quantum states. Using $T$ and $M$, the run time is given by $\tOrd{\frac{\lambda^6 \sqrt{N}}{\epsilon^7 }\log^4 \left(\frac{1}{\delta} \right) }$, compared to the run time of
$\tOrd{N \frac{\lambda^2}{\epsilon^2} \log \frac{1}{\delta} +\frac{\lambda^6 }{\epsilon^6} \log^3 \frac{1}{\delta}}$ of the approximate Sparsitron and the run time of $\tOrd{N \frac{\lambda^2}{ \epsilon^4} \log^2\frac{1}{\delta}}$ of the original Sparsitron. The statement is as follows.

\begin{theorem}[Quantum Sparsitron]\label{theoremQuantumSparsitron}
Let the same assumptions on $\mathcal D$, $\sigma$, $\epsilon$, $\delta$, $\lambda$, and $\beta$ hold as in Theorem \ref{theoremSparsitron}.  Given $T+M = \Ord{\lambda^2\log (N/\delta \epsilon)/ \epsilon^2}$ independent samples from $\mathcal D$ accessed via Data Input \ref{inputQuantumSparsitron},
Algorithm \ref{algSparsitronQuantum} returns $\left ({h^{(1)}}, \cdots, {h^{(t')}} , \Gamma^{(t')}, w^{(t')}_{\max} \right)$, i.e, inner product estimates, a norm estimate, and a maximum weight for some $t'\in[T]$.
The run time of the algorithm to obtain this output is $\tOrd{\frac{\lambda^2 T^2 M \sqrt{N}}{\epsilon }\log \left(1/\delta \right) }$,
where $M = \Ord{ \log(T/\delta)/\epsilon^2}$.
Again, the algorithm can be run in an online manner.
Given this output of Algorithm  \ref{algSparsitronQuantum}, there exists a vector $q\in \mathbbm R^N$ such that its coordinates $q_j$ can be constructed separately in time $\tOrd{T}$ and  $q$ satisfies with probability at least $1-\delta$ that
\be
\varepsilon(q) \leq \epsilon.
\ee
In addition, an approximation to the quantum state $ \ket q = \sum_{j=1}^N \sqrt{ q_j/\Vert q\Vert_1} \ket j$ can be prepared in time $\tOrd{T \sqrt N \log\left(1/\delta \right)\log\left(1/\xi \right)}$ with success probability at least $1-\delta$, where the accuracy is $\xi \in(0,1)$.
\end{theorem}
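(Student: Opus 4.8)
The plan is to regard Algorithm \ref{algSparsitronQuantum} as a \emph{perturbed} run of the approximate Sparsitron of Theorem \ref{theoremSparsitronApprox} in which even the prediction inner product $p^{(t)}\cdot x^{(t)}$ is only estimated (by $h^{(t)}$), and to show that the Klivans--Meka \cite{Klivans2017} learning guarantee survives this extra perturbation for the accuracies hard-coded in the algorithm. I would fix the output vector to be $q_j := \lambda\,(w^{(t')}_j/w^{(t')}_{\max})/\Gamma^{(t')}$, that is $\lambda p^{(t')}$ with $\Vert w^{(t')}\Vert_1$ replaced by the byproduct norm estimate $\Gamma^{(t')}$ returned by Lemma \ref{lemmaInnerProduct2}; note that the normalized vector $q/\Vert q\Vert_1$ equals $p^{(t')}$ exactly, so the state $\ket q$ in the theorem is just $\ket{p^{(t')}}$ and the norm estimate is irrelevant for it.

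For correctness I would proceed as in \cite{Klivans2017} but track the estimation slack. The weights $w^{(t)}$ evolve by the multiplicative update with the loss vectors $\tilde l^{(t)}=\frac12(\vec 1+(\sigma(\lambda h^{(t)})-y^{(t)})x^{(t)})\in[0,1]^N$, so the Hedge regret bound of Theorem \ref{thmRegret} applies verbatim to this sequence: with $p^*=w^*/\lambda$ the rescaled (WLOG nonnegative, $\ell_1$-norm $\lambda$) true vector, $\sum_t(p^{(t)}-p^*)\cdot\tilde l^{(t)}\le\sqrt{2T\log N}+\log N$. Expanding $\tilde l^{(t)}$ and using $\Vert p^{(t)}\Vert_1=\Vert p^*\Vert_1=1$ this becomes $\sum_t(\sigma(\lambda h^{(t)})-y^{(t)})(u^{(t)}-v^{(t)})\le 2\lambda(\sqrt{2T\log N}+\log N)$ where $u^{(t)}=\lambda p^{(t)}\cdot x^{(t)}$ and $v^{(t)}=w^*\cdot x^{(t)}$ have absolute value at most $\lambda$. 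Swapping $\sigma(\lambda h^{(t)})$ for $\sigma(u^{(t)})$ costs at most $\sum_t|\sigma(\lambda h^{(t)})-\sigma(u^{(t)})|\,|u^{(t)}-v^{(t)}|\le T\cdot\tfrac{\epsilon}{8\lambda}\cdot 2\lambda=\tfrac{T\epsilon}{4}$, using that $\sigma$ is $1$-Lipschitz and $|h^{(t)}-p^{(t)}\cdot x^{(t)}|\le\epsilon/(8\lambda^2)$. Taking conditional expectations over the online samples, using $\mathbbm E[y^{(t)}|x^{(t)}]=\sigma(v^{(t)})$ and $(\sigma(u)-\sigma(v))^2\le(\sigma(u)-\sigma(v))(u-v)$ (Lipschitz and non-decreasing), the remaining sum lower-bounds $\sum_t\mathbbm E[\varepsilon(\lambda p^{(t)})]$, and a bounded-martingale concentration as in \cite{Klivans2017} and the proof of Theorem \ref{theoremSparsitronApprox} removes the expectation up to an $\Ord{\lambda\sqrt{T\log(1/\delta)}}$ slack. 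Dividing by $T$, with $T=\Ord{\lambda^2\log(N/\delta\epsilon)/\epsilon^2}$ (adjusted constant) this yields $\min_t\varepsilon(\lambda p^{(t)})\le\epsilon/2$. The selection step is then as in Theorem \ref{theoremSparsitronApprox}: with $M=\Ord{\log(T/\delta)/\epsilon^2}$ the empirical risk $\widehat\varepsilon(\lambda p^{(t)})$ is within a small constant fraction of $\epsilon$ of $\varepsilon(\lambda p^{(t)})$ for all $t$, the estimates $z^{(t,m)}$ to accuracy $\epsilon/(16\lambda)$ make $\widetilde\varepsilon^{(t)}$ equally close to $\widehat\varepsilon(\lambda p^{(t)})$ (Lipschitz of $x\mapsto x^2$ on $[-1,1]$), and a sufficiently accurate $\Gamma^{(t')}$ keeps $\Vert q-\lambda p^{(t')}\Vert_1$ small; balancing the constants so the slacks total $\epsilon/2$ gives $\varepsilon(q)\le\epsilon$ with probability $1-\Ord{\delta}$.

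For the resources I would chain the building blocks. Each $U_{\tilde l^{(t')}}$ costs $\Ord{1}$ by Lemma \ref{lemmaSparsitronLossUnitaries} (it needs only the classical numbers $\lambda,h^{(t')},y^{(t')}$ and one call to Data Input \ref{inputQuantumSparsitron}), so by Lemma \ref{lemmaUpdateBit} a circuit for the rescaled weights $w^{(t)}_j/w^{(t)}_{\max}$ to sufficient accuracy costs $\Ord{T}$ queries and $\tOrd{T}$ gates; on top of it, Lemma \ref{lemmaInnerProduct2} estimates $h^{(t)}$ to accuracy $\epsilon/(8\lambda^2)$ (also returning $w^{(t)}_{\max}$ via Lemma \ref{lemmaMin} and $\Gamma^{(t)}$) in $\tOrd{\lambda^2 T\sqrt N/\epsilon}$, and each $z^{(t,m)}$ to accuracy $\epsilon/(16\lambda)$ in $\tOrd{\lambda T\sqrt N/\epsilon}$, each boosted to the stated success probability. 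Summing over the $T$ outer iterations and $M$ inner estimates gives $\tOrd{\lambda^2 T^2 M\sqrt N/\epsilon}$, which is $\tOrd{\lambda^6\sqrt N/\epsilon^7}$ after substituting $T$ and $M$. A union bound over the $T$ estimates of $h^{(t)}$ ($\delta/(2T)$ each), the $TM$ estimates $z^{(t,m)}$ ($\delta/(2MT)$ each), and the classical martingale and empirical-risk concentrations ($\Ord{\delta}$) gives success $1-\Ord{\delta}$, rescaled. Finally, given $(h^{(1)},\dots,h^{(t')},\Gamma^{(t')},w^{(t')}_{\max})$ one rebuilds $U_{\tilde l^{(1)}},\dots,U_{\tilde l^{(t'-1)}}$ from the classical training data and Lemma \ref{lemmaSparsitronLossUnitaries}, and then by Lemma \ref{lemmaUpdateBit} computes any single $w^{(t')}_j/w^{(t')}_{\max}$, hence $q_j$, in $\tOrd{T}$ time; and Lemma \ref{lemmaU} (iii) applied with $u_j=w^{(t')}_j/w^{(t')}_{\max}\le1$ prepares an $\xi$-approximation of $\sum_j\sqrt{u_j/\Vert u\Vert_1}\ket j=\ket q$ using $\tOrd{\sqrt N\log(1/\xi)}$ calls to that $\tOrd{T}$ circuit and $\log(1/\delta)$ repetitions, i.e.\ $\tOrd{T\sqrt N\log(1/\delta)\log(1/\xi)}$ in all.

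I expect the main obstacle to be the perturbation argument of the second paragraph: replacing the exact prediction inner product $p^{(t)}\cdot x^{(t)}$ by $h^{(t)}$ inside the loss vector changes the \emph{entire} weight trajectory, not just one evaluation, and one has to verify this inflates the Klivans--Meka risk bound only by $\Ord{\epsilon}$ — which is exactly why the prediction inner products are estimated one factor of $\lambda$ more tightly ($\epsilon/(8\lambda^2)$) than the risk inner products ($\epsilon/(16\lambda)$).
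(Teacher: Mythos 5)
Your proposal is correct and follows essentially the same route as the paper's proof: the quantity $\sum_t (p^{(t)}-w/\lambda)\cdot\tilde l^{(t)}$ you bound via the Hedge regret is exactly the paper's $\sum_t Q^{(t)}$, the Lipschitz swap of $\sigma(\lambda h^{(t)})$ for $\sigma(\lambda p^{(t)}\cdot x^{(t)})$ with total slack $T\epsilon/4$ matches the paper's $\zeta^{(t)}$ argument with $\epsilon_{px}=\epsilon/(8\lambda^2)$, and the Azuma-Hoeffding step, selection via $\widetilde\varepsilon^{(t)}$, runtime chaining, and Lemma~\ref{lemmaU}(iii) state preparation all coincide. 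Your explicit remark that the accuracy of $\Gamma^{(t')}$ must be tracked so that $\Vert q-\lambda p^{(t')}\Vert_1$ stays small is a point the paper passes over silently, but it does not change the argument.
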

\begin{proof}

\emph{Correctness}.
We generalize the analysis of \cite{Klivans2017} to the case where the inner products needed in the Sparsitron algorithm are only estimated to some accuracy (instead of being computed exactly). Recall that the loss vector at every step is given by
\be \label{eqLossEst}
\tilde l^{(t)} = \frac{1}{2} \left (\vec 1 +\left (\sigma \left( \lambda h^{(t)} \right) - y^{(t)}\right) x^{(t)}\right).
\ee
Define the random variable
\be
Q^{(t)} :=  p^{(t)} \cdot \tilde l^{(t)} - w \cdot \tilde l^{(t)}/\lambda.
\ee
For this random variable, we can establish three useful facts, generalizing the original work with respect to the estimated inner products.
The first fact is that since the weights are updated with the loss vector $\tilde l^{(t)}$, the resulting probability vector $ p^{(t)}$ satisfies a regret bound.
From Theorem \ref{thmRegret} it follows that
\be \label{eqthmhedge}
\sum_{t=1}^T p^{(t)} \cdot \tilde l^{(t)}\leq \min_{j\in[N]}  \tilde L_j + \sqrt{2T \log N} + \log N,
\ee
where $\tilde L_j = \sum_{t=1}^T \tilde l_j^{(t)}$.
The second fact is that the sequence $Q^{(t)}$ is not too far from its expectation value.
As in the original work, for the bounded  martingale difference sequence
$ Q^{(t)} - \mathbbm E_{(x^{(t)},y^{(t)})}\left[ Q^{(t)} \Big \vert (x^{(1)},y^{(1)}), \cdots, (x^{(t-1)},y^{(t-1)}) \right]$,
the Azuma-Hoeffding inequality implies with probability at least $1-\delta$ that
\be \label{eqAzuma}
\sum_{t=1}^T &&\mathbbm E_{(x^{(t)},y^{(t)})}\left[ Q^{(t)} \Big \vert (x^{(1)},y^{(1)}), \cdots, (x^{(t-1)},y^{(t-1)}) \right] \leq \nonumber \\
&& \sum_{t=1}^T Q^{(t)} + \Ord{\sqrt {T \log (1/\delta)}}.
\ee
The third fact involves lower-bounding the expectation value of $Q^{(t)}$, taking into account that the inner products are estimated. Consider first
\be
\zeta^{(t)} :=  p^{(t)} \cdot  x^{(t)} - w \cdot  x^{(t)} /\lambda,
\ee
for which
\be \label{eqZetaBound}
 \left \vert \zeta^{(t)} \right \vert \leq  \left \Vert  p^{(t)} - w/\lambda \right \Vert_1 \leq 2,
\ee
since $\left \Vert x^{(t)} \right\Vert_{\max} \leq 1$.
For the expectation value of $Q^{(t)}$ we obtain
\begin{eqnarray} \label{eqexpvalueq}
&\mathbbm E_{\left (x^{(t)},y^{(t)}\right)} \left[ Q^{(t)} \Big \vert \left (x^{(1)},y^{(1)}\right), \cdots, \left (x^{(t-1)},y^{(t-1)}\right) \right] = \nonumber \\
&= \frac{1}{2} \mathbbm E_{(x^{(t)},y^{(t)})}\left[ \zeta^{(t)} \left (\sigma\left ( \lambda h^{(t)} \right) - \sigma\left (w\cdot x^{(t)}\right ) \right)\right],
\end{eqnarray}
using that $  p^{(t)} \cdot \vec 1 = 1$ and $w \cdot \vec 1/\lambda = 1$.
Let $h^{(t)}$ be an estimate of $p^{(t)} \cdot x^{(t)}$ with error $\left \vert h^{(t)} -  p^{(t)} \cdot x^{(t)} \right \vert \leq \epsilon_{px}$.
From the $1$-Lipschitz property, it is easy to see that $\left \vert \sigma \left (\lambda h^{(t)} \right ) -\sigma\left (\lambda{ p^{(t)} \cdot x^{(t)}}\right ) \right\vert \leq \lambda \epsilon_{px} $.
Using this fact and Eq.~(\ref{eqZetaBound}), we obtain
\be \label{eqErrorTemp}
\zeta^{(t)} \sigma\left( \lambda h^{(t)}\right) &\geq& \zeta^{(t)} \sigma \left(\lambda p^{(t)} \cdot x^{(t)}\right) - 2 \lambda \epsilon_{px}.
\ee
Using Eq.~(\ref{eqErrorTemp}) in Eq.~(\ref{eqexpvalueq}), we obtain the lower bound
\be \label{eqExpLower}
&\mathbbm E_{\left (x^{(t)},y^{(t)}\right)} \left[ Q^{(t)} \Big \vert \left (x^{(1)},y^{(1)}\right), \cdots \right] \nonumber \\
&\quad \geq  \frac{1}{2\lambda} \mathbbm E_{(x^{(t)},y^{(t)})}\left[ \left (\sigma \left ( \lambda  p^{(t)} \cdot x^{(t)}\right)  - \sigma\left(w\cdot x^{(t)} \right) \right )^2\right] \nonumber \\ & -  \lambda \epsilon_{px}.
\ee
Here we have used that for all $a,b \in \mathbbm R$ that $(a-b)(\sigma(a)-\sigma(b)) \geq (\sigma(a)-\sigma(b))^2$.

These three facts related to $Q^{(t)}$ are now combined to derive the guarantee of the algorithm.
The first term in Eq.~(\ref{eqExpLower}) is by definition the risk $\frac{1}{2\lambda}  \varepsilon\left(\lambda  p^{(t)}\right)$. Hence, for the risk we have
\be \frac{\varepsilon\left(\lambda p^{(t)}\right ) }{2\lambda}  &\leq&
\mathbbm E_{(x^{(t)},y^{(t)})}\left[ Q^{(t)} \Big \vert (x^{(1)},y^{(1)}), \cdots \right] \nonumber \\
&& + \lambda \epsilon_{px}.
\ee
Using the result derived from the Azuma-Hoeffding inequality, Eq.~(\ref{eqAzuma}), we obtain with probability at least $1-\delta$ that
\be
\frac{1}{2\lambda} \sum_{t=1}^T \varepsilon\left(\lambda p^{(t)}\right) &\leq&\sum_{t=1}^T Q^{(t)} + \Ord{\sqrt{T\log (1/\delta)}} \nonumber \\
&& + \lambda \epsilon_{px} T.
\ee
Using the Hedge regret bound Eq.~(\ref{eqthmhedge}), we can bound
\be
\sum_{t=1}^T Q^{(t)} &\leq& \min_{j\in[N]}  \tilde L_j + \sqrt{2T \log N}\nonumber  + \log N  - \sum_{t=1}^T
\frac{w \cdot \tilde l^{(t)}}{\lambda} \\
&\leq& \sqrt{2T \log N} + \log N.
\ee
The second inequality follows from $\min_{j\in[N]}  \tilde L_j - \sum_{t=1}^T w \cdot \tilde l^{(t)}/\lambda \leq 0$ since $\lambda = \Vert w\Vert_1$.
From the upper bound for the sum, we obtain an upper bound for the minimum element since $\min_{t\in [T]} \varepsilon\left(\lambda  p^{(t)}\right) \leq \frac{1}{T} \sum_{t=1}^T \varepsilon\left(\lambda  p^{(t)}\right)$.
Similar to the original work,
setting $T > C' \lambda^2 \log(N/\delta)/\epsilon^2$ with a constant $C'$ and $\epsilon_{px} =\frac{ \epsilon}{8\lambda^2} $ we obtain
\be
\min_{t\in[T]} && \varepsilon\left(\lambda p^{(t)}\right)  \leq  \nonumber \\ &&\leq  \Ord{\lambda} \frac{\sqrt{2T \log N} + \log N + \sqrt{T \log 1/\delta} }{T}
\nonumber + 2 \lambda^2 \epsilon_{px}  \\
&&\leq \frac{\epsilon}{4} +  \frac{\epsilon}{4} = \frac{\epsilon}{2}.
\ee
Note that the algorithm selects $v= \lambda  p^{(t')}$ for $t' =\arg \min_{t\in[T]} \widetilde \varepsilon\left(\lambda p^{(t)}\right)$, where $\widetilde \varepsilon$ is the empirical risk estimated from the imprecise inner products. 
The final risk bound is achieved as in the original work and the approximate Algorithm \ref{algSparsitron}. As in Eq.~(\ref{eqRiskBound}), the true risk of $v$ is bounded as
\be
\varepsilon(v) \leq \epsilon.
\ee
\emph{Run time.}
We discuss the inner product estimation and the total run time.
Fix  $t \in[T]$.
Assume we are given the Input \ref{inputQuantumSparsitron}, the Sparsitron inner product estimates up to time $t-1$, ${h^{(1)}}, \cdots, {h^{(t-1)}}$, and the corresponding unitaries $U_{\tilde l^{(1)}},\cdots,U_{\tilde l^{(t-1)}}$ from Lemma \ref{lemmaSparsitronLossUnitaries}.
Together with the weight computation Lemma \ref{lemmaUpdateBit} and quantum access to the training data, construct the operation that computes $\ket j  \ket{ w_j^{(t)}} $ by basic arithmetic operations and uncomputing unnecessary registers.

Using Lemma \ref{lemmaInnerProduct2}, we obtain an estimate $ {h^{(t)}}$ of the inner product $\frac{ w^{(t)}}{\Vert  w^{(t)}\Vert_1} \cdot x^{(t)}$. The additive accuracy is $ \frac{\epsilon}{8\lambda^2}$ and the success probability is $1-\frac{\delta}{2T}$.
This step uses in total $\tOrd{\frac{\lambda^2 T \sqrt{N}}{\epsilon} \log\left (\frac{1}{\delta} \right)}$ quantum gates.

For the $M$ inner products for the risk estimation at every step,
we again use Lemma \ref{lemmaInnerProduct2} to obtain an estimate $z^{(t,m)}$ of the inner product $\frac{ w^{(t)}}{\Vert  w^{(t)}\Vert_1} \cdot a^{(m)}$. The additive accuracy is $\frac{\epsilon}{16 \lambda}$ and the success probability is $1-\frac{\delta}{2MT}$.
This step uses in total $\tOrd{\frac{\lambda T \sqrt{N}}{\epsilon} \log\left (\frac{1}{\delta} \right)}$ quantum gates.

The total cost of the algorithm is thus  $\tOrd{\frac{\lambda^2 T^2\sqrt{N}}{\epsilon }\log \left(\frac{1}{\delta} \right) + \frac{ \lambda  T^2 M \sqrt{N} }{\epsilon} \log (\frac{1}{\delta} )}$ which can be simplified to $\tOrd{\frac{\lambda^2 T^2 M \sqrt{N}}{\epsilon }\log \left(\frac{1}{\delta} \right) }$.
The success probability of the inner loop is $(1-\frac{\delta}{2MT})^{M} \geq 1- \frac{\delta}{2T}$. The  success probability of all the probabilistic steps in the algorithm is $\left (1-\frac{\delta}{2T}\right)^{2T}\geq 1-\delta$.
This leads to a total success probability of at least $1-3\delta$. Finally, let $\delta \to \delta/3$.

\emph{Output.}
From the output of the algorithm, the construction of a single element $q_j$ of a classical vector takes time $\tOrd{T}$. That is because
\be
q_j = \lambda \frac{ \beta^{\sum_{t=1}^{t'-1} \frac{1}{2} \left (1 +\left (\sigma\left ( {h^{(t)}}\right) - y^{(t)}\right) x_j^{(t)}\right)} } { w^{(t')}_{\max} \Gamma^{(t')}},
\ee
where the computation of the sum takes time $\Ord{T}$.
For the quantum state preparation, use Lemma \ref{lemmaU} with $u_j = q_j/q_{\max}$, where $q_{\max} = \max_j q_j$, and the efficient computability of $q_j$ in $\Ord{T}$. The maximum finding can be done as before.
By Lemma \ref{lemmaU} (iii), preparing $ \ket q = \sum_{j=1}^N \sqrt{ q/\Vert q\Vert_1} \ket j$ takes a run time of $\tOrd{T \sqrt N \log\left(\frac{1}{\delta} \right) \log\left(\frac{1}{\xi} \right)}$.
\end{proof}

We continue with remarks on the practicality of the output of the quantum algorithm.
The main scenario is the classification of a new sample. Let the new sample be  $x_{\rm new} \in [-1,1]^N$ and quantum access to the sample be given similar to Data Input \ref{inputQuantumSparsitron}. With the same amplitude estimation methods as above, estimate the inner product $q\cdot x_{\rm new}$ of the new sample and the vector $q$. Then apply the Lipschitz function to the estimate as $\sigma (q\cdot x_{\rm new})$ to obtain an estimate of the label $y_{\rm new} \in [0, 1]$. These steps only add a proportional overhead to the run time of the Sparsitron and provide classification information on new samples with correctness guarantees.
One can envision other scenarios. First, we may not need classical knowledge of all the elements $q_j$ but rather only a small known subset of them. Hence we compute the $q_j$ only on that subset, which can be done efficiently.
Second, we may use the quantum state $\ket q$ to obtain knowledge about the large elements of the vector. If $\ket q$ is sparse, then after a small number of measurements in the computational basis we obtain the positions $j$ of the large elements and use this knowledge to compute them.
Finally, we may have measurement operators $O_i$ providing information about global properties of $w$. We can measure these operators on the state $\ket q$ to provide information about the true $w$ more efficiently than a classical algorithm.
In summary, the algorithm may find practical use in these and other scenarios. Similar to many quantum algorithms for machine learning, one main obstacle for practical use is quantum data access. Functioning large-scale quantum data input devices remain yet to be developed, hence our algorithm may show practical use only in the longer term once such devices are available.

Finally, we  show the application of the Quantum Sparsitron to Ising models as a corollary. Please refer to the beginning of Section \ref{sectionSparsitron} for a brief introduction to the problem. Recall that the width of an Ising model is defined as $\lambda(A,\theta)=\max_i \left (\sum_j \vert A_{ij} \vert + \vert \theta_i \vert \right)$, see Eq.~(\ref{eqProbGibbs}) for the definition of $A$ and $\theta$.
The algorithm is a combination of the algorithm in \cite{Klivans2017} with the Quantum Sparsitron discussed above.
\begin{cor}[Quantum Learning of Ising models]\label{corollaryIsing}
Given an $N$-variable Ising model with width $\leq \lambda$ for $\lambda \geq 0$. Given quantum query access to the entries of the samples from the Ising model.  Given $\epsilon, \delta \in (0,1)$, and $T = \Ord{\lambda \exp(\Ord{\lambda})/\epsilon^4)\log(N/\delta\epsilon)}$ independent samples from the Ising distribution, there exists a quantum algorithm that produces classical inner product estimates and norms such that every element of a matrix $A^\ast$ can be computed in time $\tOrd{T}$. For the matrix $A^\ast$ it holds that $\Vert A-A^\ast \Vert_{\max} \leq\epsilon$ with probability at least $1-\delta$.
The run time of the algorithm is $\tOrd{\frac{\lambda^2 T^2 M N^{3/2}}{\epsilon }\log \left(\frac{1}{\delta} \right) }$, where $M = \Ord{ \log(T/\delta)/\epsilon^2}$.
Quantum states of the columns/rows of the matrix $A^\ast$ with $\epsilon$ distance can be prepared in time $\tOrd{\frac{T \sqrt N}{\epsilon}}$.
Again, the algorithm can be run in an online manner.
\end{cor}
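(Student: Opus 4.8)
The plan is to invoke the reduction of Klivans and Meka~\cite{Klivans2017} from learning an Ising model to learning $N$ generalized linear models, one per node, and to solve each of the resulting GLM instances as a black box with the Quantum Sparsitron of Theorem~\ref{theoremQuantumSparsitron}. For each node $j\in[N]$ I would form, from the given quantum access to the Ising samples, the supervised problem with feature vector $X=(Z_k:k\neq j)$, padded by a constant $1$ to absorb the bias, and label $Y=(1-Z_j)/2$; as recalled in Section~\ref{sectionSparsitron}, $\mathbbm E[Y\,|\,X=x]=\sigma(w^{(j)}\cdot x)$ with $\sigma$ the $1$-Lipschitz non-decreasing sigmoid, $w^{(j)}_k=-4A_{jk}$, and a bias coordinate $-2\theta_j$. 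Since the width is $\leq\lambda$ we have $\Vert w^{(j)}\Vert_1=\Ord{\lambda}$, and the standard dimension-doubling that reduces to $w^{(j)}\geq 0$ keeps the norm $\Ord{\lambda}$ and the dimension $\widetilde N=2N+\Ord{1}$. The feature oracle required by Data Input~\ref{inputQuantumSparsitron} is obtained from the Ising sample oracle by relabelling coordinates, negating, and appending constants ($\Ord{1}$ queries each); the classical labels $y^{(t)},b^{(m)}$ cost one query per sample to extract.

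Next I would run Algorithm~\ref{algSparsitronQuantum} for every $j$ with norm parameter $\lambda'=\Ord{\lambda}$, failure probability $\delta/N$, and target square risk $\epsilon'=\Theta(\epsilon^2 e^{-\Theta(\lambda)})$. This $\epsilon'$ is dictated by the Klivans--Meka conversion: a bound $\varepsilon(q^{(j)})\leq\epsilon'$ on the square loss implies, using that $\sigma'$ is bounded below by $e^{-\Theta(\lambda)}$ on the $\Ord{\lambda}$-bounded domain, that $\mathbbm E_X[((q^{(j)}-w^{(j)})\cdot X)^2]\leq \epsilon'\, e^{\Theta(\lambda)}$; then, using that each Ising conditional $P[Z_j=\pm1\,|\,Z_{\neq j}]$ lies in $[e^{-\Theta(\lambda)},1-e^{-\Theta(\lambda)}]$ so that $\mathrm{Var}(X_i\,|\,X_{\neq i})\geq e^{-\Theta(\lambda)}$, one gets $(q^{(j)}_i-w^{(j)}_i)^2\leq\epsilon'\, e^{\Theta(\lambda)}$ for each $i$, i.e. $\Vert q^{(j)}-w^{(j)}\Vert_{\max}\leq 4\epsilon$ and hence $A^\ast_{jk}:=-q^{(j)}_k/4$ (after recombining the sign-split coordinates) obeys $|A_{jk}-A^\ast_{jk}|\leq\epsilon$. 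By Theorem~\ref{theoremQuantumSparsitron} each $q^{(j)}_k$, hence each $A^\ast_{jk}$, is computable in time $\tOrd{T}$ from the stored inner-product and norm estimates. Substituting $\lambda'$ and $\epsilon'$ into the Sparsitron sample bound $\Ord{\lambda'^2\log(\widetilde N/\delta\epsilon')/\epsilon'^2}$ yields the stated $T=\Ord{\lambda\exp(\Ord{\lambda})/\epsilon^4\,\log(N/\delta\epsilon)}$, with the validation size $M=\Ord{\log(T/\delta)/\epsilon^2}$ lower order; these $T+M$ Ising samples may be re-used across all $N$ node problems.

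For the running time I would sum the $N$ invocations: each costs $\tOrd{\lambda^2 T^2 M\sqrt{\widetilde N}/\epsilon\,\log(1/\delta)}$ by Theorem~\ref{theoremQuantumSparsitron} and $\sqrt{\widetilde N}=\tOrd{\sqrt N}$, so the total is $\tOrd{\lambda^2 T^2 M N^{3/2}/\epsilon\,\log(1/\delta)}$. The union bound over the $N$ failure events of probability $\delta/N$ gives overall success probability $\geq 1-\delta$, at the cost of only poly-logarithmic overhead. The quantum-state claim follows immediately from the last part of Theorem~\ref{theoremQuantumSparsitron}: for each $j$ the state $\ket{q^{(j)}}\propto\sum_k\sqrt{q^{(j)}_k}\,\ket k$, which encodes (up to an $\Ord{1}$ rescaling and the sign-splitting) the $j$-th row/column of $A^\ast$, is preparable to accuracy $\xi$ in time $\tOrd{T\sqrt N\log(1/\delta)\log(1/\xi)}$; taking $\xi=\epsilon$ gives $\tOrd{T\sqrt N/\epsilon}$. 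As in Algorithm~\ref{algSparsitronQuantum}, everything uses only sequential access to the sample oracles, so the procedure is online.

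The main obstacle I expect is not the quantum part -- which is just $N$ black-box calls to Theorem~\ref{theoremQuantumSparsitron} -- but checking the bookkeeping of the reduction: one must confirm that the \emph{approximate} inner products internal to the Quantum Sparsitron (already absorbed in the statement of Theorem~\ref{theoremQuantumSparsitron}) still leave the Klivans--Meka conversion from small square risk to small $\ell_\infty$ error on $A$ intact after the $e^{\Theta(\lambda)}$ blow-up, that re-using a single sample set across the $N$ node problems only loses a union-bound factor, and that the padded, sign-doubled feature oracles and classical labels are consistent with Data Input~\ref{inputQuantumSparsitron}. These are all routine once the parameter choices $\lambda'=\Theta(\lambda)$, $\epsilon'=\Theta(\epsilon^2 e^{-\Theta(\lambda)})$ and $\delta\to\delta/N$ are made explicit.
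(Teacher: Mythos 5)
Your proposal is correct and follows essentially the same route as the paper: invoke the Klivans--Meka reduction to run the Quantum Sparsitron once per node, multiply the per-node cost of Theorem~\ref{theoremQuantumSparsitron} by $N$, and inherit the element-wise reconstruction and state-preparation claims from that theorem. The paper's own proof is far terser -- it delegates the square-risk-to-$\Vert A-A^\ast\Vert_{\max}$ conversion, the $\epsilon'=\Theta(\epsilon^2 e^{-\Theta(\lambda)})$ parameter choice, and the $\delta/N$ union bound entirely to the citation of~\cite{Klivans2017} -- so your write-up simply makes explicit the bookkeeping the paper leaves implicit.
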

\begin{proof}
Apply the Quantum Sparsitron $\Ord{N}$ times for each node of the Ising model, hence the run time is
$\tOrd{\frac{\lambda^2 T^2 M N^{3/2}}{\epsilon }\log \left(\frac{T}{\delta} \right) }$.
At this cost, the Quantum Sparsitron has the same guarantee as the classical Sparsitron.
It was shown in \cite{Klivans2017} that the obtained guarantees for the GLMs can be translated to the guarantee $\Vert A-A^\ast \Vert_{\max} \leq\epsilon$ for the matrix of the Ising model.
Hence we obtain the guarantee for the Ising model learning.
The rows/columns of $A^\ast$ can be reconstructed as in Theorem \ref{theoremQuantumSparsitron}, either classically element by element or as quantum states.
\end{proof}
The classical algorithm by Klivans and Meka~\cite{Klivans2017} has a run time of $\Ord{N^2 T}$  with the same near-optimal sample complexity. Corollary \ref{corollaryIsing} can hence be considered an improvement in the $N$ dependency, which comes at the price of a worsening of the run time in $T$ and $\epsilon$.  
\section{Discussion and Conclusion}

In the main part of this work, we have presented a quantum machine learning algorithm with both provable learning guarantee and provable quantum speedup over the best known classical algorithm.
The starting point is a classical algorithm called the Sparsitron, which is a
dimensionally sample-optimal algorithm for generalized linear models under modest assumptions. Generalized linear models have a large range of applications, including logistic and Poisson regression, and they appear also in a variety of problems such as the learning of Ising models and Markov Random Fields (MRFs).

The run time of our quantum algorithm, the Quantum Sparsitron, shows a speedup polynomial in the dimension of the problem and a slowdown in the error dependency, while the sample complexity remains the same as for the classical algorithm.
The setting here is the standard quantum gate model, i.e.,  many logical quantum bits with the physical errors being kept under control via error correction. In addition, we assume the availability of unitaries (oracles) which provide access to the training examples. The training examples can be given via efficient quantum circuits or via classical data collected from sampling the true distribution and quantum RAM access to these data.
The main quantum subroutines are the well-known amplitude amplification and estimation algorithms, which are here applied in a way that the provable learning guarantee and success probability of the classical algorithm are preserved.
Due to the use of amplitude amplification and oracles, the algorithm can be considered more far-term in nature, requiring significantly more resources than the presently available 50-100  noisy qubits.

The optimization problem here is in principle non-convex. The Lipschitz function defining the generalized linear model can be a non-convex function (such as the \emph{quasi}-convex sigmoid function), which leads to hardness results for learning even a single neuron \cite{AHW96,Klivans2017}. In the sense that the classical Sparsitron solves this non-convex problem, the quantum algorithm solves the same non-convex problem. While many of the recent quantum algorithms are for convex problems, such as LPs and SDPs \cite{BS17,vAG18}, this work can be seen as an extension of the same underlying quantum techniques to non-convex problems.

On the classical side, we have shown that the Sparsitron run time (but not the sample complexity) can be sped up via inner product estimation techniques. Randomized linear algebra is a well-studied area which has recently also found application in the discussion of dimension-efficient classical algorithms for various problems considered for quantum machine learning \cite{Tang2018,Tang2018_2,Chia2018_linear,Gilyen2018_regression,Chia2019_SDP,Chia2019_framework}. Our work relates to these results in the sense that we have started at a near-optimal classical machine learning algorithm.
It may be interesting to take  near-optimal versions of \cite{Tang2018,Tang2018_2,Chia2018_linear,Gilyen2018_regression,Chia2019_SDP,Chia2019_framework} as starting points  and exhibit quantum speedups for the various problems.

\section{Acknowledgements}

This work was supported by the Singapore National Research Foundation, the Prime Minister's Office, Singapore, the Ministry of Education, Singapore under the Research Centres of Excellence programme under research grant R 710-000-012-135, and Baidu-NUS Research Project Nr.~2019-03-07.
This work was also partially funded by QuantERA ERA-NET Cofund project QuantAlgo and the ANR project ANR-18-CE47-0010 QUDATA.

\appendix

\section{Classical sampling} \label{appendixSampling}

We mention a result for the efficient sampling from a probability vector.
\begin{fact} [$\ell_1$-sampling \cite{Vos91,Wal74}]\label{theoremSampling}
Given an $N$-dimensional probability vector $p$. There exists a data structure to sample an index $j\in[N]$ with probability $p_j$ which can be constructed in time $\tOrd{N}$. One sample can be obtained in time $\tOrd{1}$.
\end{fact}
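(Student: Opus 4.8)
The plan is to invoke the \emph{alias method}: Walker's observation that any distribution on $[N]$ can be written as an even mixture of $N$ two-point distributions, together with Vose's linear-time (sorting-free) construction of the tables. The data structure is a pair of length-$N$ arrays, a cutoff array $U\in[0,1]^N$ and an alias array $K\in[N]^N$. Sampling is then: draw $j\in[N]$ uniformly, draw a fresh uniform $r\in[0,1]$ to $\tOrd{1}$ bits, and output $j$ if $r\le U_j$ and $K_j$ otherwise. Each sample costs one $\lceil\log N\rceil$-bit uniform draw, one $\tOrd{1}$-bit uniform draw, a comparison, and at most two array lookups, hence $\tOrd{1}$ time.

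First I would describe the construction. Rescale to $\bar p_j:=Np_j$, so that $\sum_j\bar p_j=N$ and the values average to $1$. Make one pass to form worklists $\mathrm{Small}=\{j:\bar p_j\le 1\}$ and $\mathrm{Large}=\{j:\bar p_j>1\}$. While both are nonempty: pop $s$ from $\mathrm{Small}$ and $\ell$ from $\mathrm{Large}$; set $U_s\gets\bar p_s$, $K_s\gets\ell$; decrement $\bar p_\ell\gets\bar p_\ell-(1-\bar p_s)$ and push $\ell$ onto $\mathrm{Small}$ or $\mathrm{Large}$ according to whether the new $\bar p_\ell$ is $\le 1$ or $>1$. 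When one list empties, set $U_j\gets 1$ (alias irrelevant) for each remaining index $j$. Every iteration permanently finalizes the index $s$ and decreases the number of un-finalized indices by one, so there are at most $N$ iterations and $\Ord{N}$ arithmetic operations on $\tOrd{1}$-bit numbers, i.e. $\tOrd{N}$ preprocessing and $\Ord{N}$ space.

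For correctness, the key point is a loop invariant: at the start of each iteration the scaled masses of the un-finalized indices sum to exactly the number of such indices. Since an average of $1$ forces at least one value $\le 1$, $\mathrm{Small}$ is nonempty whenever an un-finalized index remains, so the procedure never deadlocks and every index receives a $(U_j,K_j)$ pair. A second induction shows that the finished tables satisfy, for every $j$,
\[
U_j+\sum_{k\,:\,K_k=j}(1-U_k)=\bar p_j .
\]
Consequently the probability that a draw outputs $j$ --- the chance $\frac1N U_j$ of selecting bucket $j$ and keeping it, plus the chance $\frac1N\sum_{k:K_k=j}(1-U_k)$ of selecting a bucket $k$ aliased to $j$ and switching --- equals $\frac1N\bar p_j=p_j$, as required.

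The main obstacle is bookkeeping rather than any idea: one must state the two inductions precisely (the un-finalized masses stay balanced, and a finalized bucket is never revisited), and check that the subtractions $\bar p_\ell\gets\bar p_\ell-(1-\bar p_s)$ are carried out with enough bits that accumulated rounding is negligible. Since each $p_j$ is given to $\tOrd{1}$ bits and there are $\Ord{N}$ updates, $\tOrd{1}$ bits of working precision suffice, keeping all costs inside the claimed $\tOrd{N}$ construction time and $\tOrd{1}$ per-sample time.
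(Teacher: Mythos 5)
Your proposal is correct and is exactly the construction the paper relies on: Fact~\ref{theoremSampling} is stated by citation to Walker's alias method and Vose's linear-time table construction, with only a remark that $\Ord{\log N}$-bit pointers force the $\tOrd{\cdot}$ rather than $\Ord{\cdot}$ bounds. You have simply written out the cited algorithm in full (Small/Large worklists, the mass-balance invariant, the identity $U_j+\sum_{k:K_k=j}(1-U_k)=\bar p_j$) and your accounting of the $\log N$-bit pointer and precision overheads matches the paper's remark, so there is nothing to add.
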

Regarding this result, Ref.~\cite{Vos91} shows an algorithm with preparation in $\Ord{N}$ and sampling in $\Ord{1}$ assuming constant time operations for addition, comparison, and random number generation, among others. However, storing and processing pointers $j\in[N]$ takes $\Ord{\log N}$ bits and operations, hence we take the slightly worse $\tOrd{N}$ and $\tOrd{1}$, respectively.

We discuss the sampling of inner products. The proof is standard and adapted from
\cite{Tang2018} which shows the $\ell_2$-sampling case.
\begin{lemma}[Inner product estimation] \label{lemmaSampleInner}
Let $\epsilon,\delta \in (0,1)$. Given query access to $x \in [-1,1]^N$ and $\ell_1$-sampling access to an $N$-dimensional probability vector $p$. We can determine $ p\cdot x$ to additive error $\epsilon$ with success probability at least $1-\delta$ with $\Ord{\frac{\Vert x \Vert_{\max}^2}{\epsilon^2} \log \frac{1}{\delta}}$ queries and samples, and $\tOrd{\frac{\Vert x \Vert_{\max}^2}{\epsilon^2} \log \frac{1}{\delta}}$ time complexity.
\end{lemma}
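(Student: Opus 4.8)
The plan is to use the textbook single-sample Monte Carlo estimator afforded by the $\ell_1$-sampling access, together with a concentration bound. First I would draw an index $j \in [N]$ from the distribution $p$ using the sampling access, query $x_j$, and set $S := x_j$. Since $j = i$ with probability $p_i$, the estimator is unbiased: $\mathbbm E[S] = \sum_{i=1}^N p_i x_i = p \cdot x$. Moreover $|S| \leq \Vert x \Vert_{\max}$ deterministically, and in particular $\mathrm{Var}(S) \leq \mathbbm E[S^2] = \sum_i p_i x_i^2 \leq \Vert x \Vert_{\max}^2$. The point of writing the variance this way is that it is bounded by $\Vert x \Vert_{\max}^2$ with no dependence on $N$, which is what removes any dimension factor from the final cost.

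Second, I would take $k$ independent copies $S_1,\dots,S_k$ and form the average $\bar S_k = \frac{1}{k}\sum_{i=1}^k S_i$. Since each $S_i \in [-\Vert x \Vert_{\max}, \Vert x \Vert_{\max}]$, Hoeffding's inequality gives $P\left[\, |\bar S_k - p\cdot x| \geq \epsilon \,\right] \leq 2\exp\!\left(-k\epsilon^2 / (2\Vert x \Vert_{\max}^2)\right)$, so choosing $k = \Ord{\frac{\Vert x \Vert_{\max}^2}{\epsilon^2}\log\frac1\delta}$ makes the failure probability at most $\delta$. (Equivalently one can use the "median of means": average $\Ord{\Vert x \Vert_{\max}^2/\epsilon^2}$ samples to get error $\epsilon$ with probability $2/3$ via Chebyshev using the variance bound above, then take the median of $\Ord{\log(1/\delta)}$ independent such estimates and apply a Chernoff bound to the indicators of the good events; this yields the same query count.) In either case the number of samples of $p$ and queries to $x$ is $\Ord{\frac{\Vert x \Vert_{\max}^2}{\epsilon^2}\log\frac1\delta}$.

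For the time complexity, each sample costs $\tOrd{1}$ by the assumed $\ell_1$-sampling access (Fact~\ref{theoremSampling}), each query to $x$ costs $\Ord{1}$, and the only further work is accumulating the sum (and, in the median variant, computing a median of $\Ord{\log(1/\delta)}$ numbers), which is $\tOrd{k}$ arithmetic operations on numbers of $\Ord{1}$ bits; hence the overall time is $\tOrd{\frac{\Vert x \Vert_{\max}^2}{\epsilon^2}\log\frac1\delta}$, as claimed. I do not expect a genuine obstacle here: the argument is the standard Monte Carlo estimator, essentially identical to the $\ell_2$-sampling inner-product estimator of \cite{Tang2018} with the simpler estimator $x_j$ in place of the importance-weighted one. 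The only points requiring care are (i) invoking the $\mathbbm E[S^2]\le \Vert x \Vert_{\max}^2$ bound rather than a naive $N$-dependent estimate, and (ii) keeping the constants in the boosting step so that the $\log(1/\delta)$ factor enters multiplicatively.
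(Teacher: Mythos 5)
Your proposal is correct and follows essentially the same route as the paper: the paper defines the identical random variable $Z$ with outcome $x_j$ sampled with probability $p_j$, uses the same variance bound $\mathbbm V[Z] \le \Vert x\Vert_{\max}^2$, and applies the median-of-means construction (median of $\Ord{\log(1/\delta)}$ averages of $\Ord{1/\epsilon^2}$ samples) that you mention as your alternative to Hoeffding. Both concentration arguments yield the stated query and time bounds, so there is nothing to add.
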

\begin{proof}
Define a random variable $Z$ with outcome $x_j$ with probability $p_j$.
Note that $\mathbbm E[Z] = \sum_j p_j x_j = p \cdot x$.
Also, $\mathbbm V[Z] \leq \sum_j x_j^2 p_j \leq \Vert x \Vert_{\max}^2$.
Take the median of $6 \log 1/\delta$ evaluations of the mean of $9/(2\epsilon^2)$ samples of $Z$
to be within $\epsilon \sqrt{\mathbbm V[Z]} \leq \epsilon \Vert x \Vert_{\max}$ of $p \cdot x$
with probability at least $1-\delta$ in $\Ord{\frac{1}{\epsilon^2} \log \frac{1}{\delta}}$ queries.
\end{proof}

\section{Quantum subroutines}
\label{appendixAmp}
\begin{proof} [Proof of Lemma \ref{lemmaUpdateBit}]
With a computational register involving $\Ord{T}$ ancilla qubits for the losses, perform $\ket j \ket{\bar 0} \to \ket j \ket{l^{(1)}_j} \dots \ket{l^{(t-1)}_j}\ket{\bar 0} \to \ket j \ket{l^{(1)}_j} \dots \ket{l^{(t-1)}_j} \ket{\sum_{t'=1}^{t-1} l_j^{(t')}}$ to sufficient accuracy, using the oracles. Uncomputing the loss registers via additional queries leads to the result. With the quantum circuits for basic arithmetic operations, the operations for $\ket j \ket{w^{(t)}_j}$ and $\ket j \ket{w^{(t)}_j/w^{(t)}_{\max}}$ can also be prepared.
\end{proof}
In the following, for the data access to the vector $u$, consider the discussion regarding the $\ket {\bar 0}$ state given after Data Input \ref{oracle1}.
\begin{lemma}[Quantum minimum finding \cite{Durr1996}]
\label{lemmaMin}
Given quantum access to a vector $u \in [0,1]^N$ via the operation $\ket j \ket{\bar 0} \to \ket j \ket{ u_j}$ on $\Ord{\log N}$ qubits, where $u_j$ is encoded to additive accuracy $\Ord{1/N}$. Then, we can find the minimum $u_{\min} = \min_{j\in[N]}  u_j$ with success probability $1-\delta$ with $\Ord{\sqrt N \log \left (\frac{1}{\delta}\right) }$ queries and $\tOrd{\sqrt N  \log \left( \frac{1}{\delta}\right )}$ quantum gates.
\end{lemma}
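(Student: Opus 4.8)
The plan is to reduce the statement to the quantum minimum-finding algorithm of D\"urr and H\o yer \cite{Durr1996}, doing two extra pieces of work on top of it: realizing the comparison oracle needed by that algorithm from the given data access $\ket j \ket{\bar 0}\to\ket j\ket{u_j}$, and amplifying its constant success probability up to $1-\delta$. First I would recall the core routine. One maintains a threshold index $y$ (initialized uniformly at random in $[N]$), repeatedly runs quantum search for a marked element of the predicate $f_y(j)=[u_j<u_y]$, updates $y$ to any returned index with a strictly smaller value, and uses the standard exponentially-growing schedule of Grover iterations to cope with the unknown (and shrinking) number of marked elements. The analysis of \cite{Durr1996} shows that the expected total number of applications of the search iterate is $\Ord{\sqrt N}$ and that after $\Ord{\sqrt N}$ iterations $y$ is the minimizer with probability at least $1/2$; equivalently, running the procedure with a fixed budget of $c\sqrt N$ iterations, for a suitable constant $c$, succeeds with constant probability.

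Next I would implement the predicate oracle. One query to the supplied unitary writes $u_j$ into an $\Ord{\log N}$-qubit ancilla, a reversible comparison circuit against the classically stored threshold value $u_y$ flips the marking qubit using $\Ord{\log N}$ gates, and a second query uncomputes $u_j$; the $\Ord{1/N}$ encoding accuracy is precisely what makes these comparisons well defined, and any two indices whose encoded values coincide are interchangeable as minimizers. Hence each search iterate costs $\Ord{1}$ queries and, including the diffusion operator on $\Ord{\log N}$ qubits, $\Ord{\log N}$ gates, so one run of the bounded-budget procedure uses $\Ord{\sqrt N}$ queries and $\tOrd{\sqrt N}$ gates. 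To reach confidence $1-\delta$, run this procedure $\Ord{\log(1/\delta)}$ times independently and output the smallest value found: the output is wrong only if every run fails, which happens with probability at most $2^{-\Omega(\log(1/\delta))}\le\delta$. Multiplying the per-run costs gives $\Ord{\sqrt N\log(1/\delta)}$ queries and $\tOrd{\sqrt N\log(1/\delta)}$ gates, as claimed.

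The only genuinely delicate ingredient is the amortized bound on the expected number of search iterations: even though the count of elements below the current threshold is unknown and decreases over the run, the random initialization of $y$ together with the exponential-search schedule keeps the total expected work at $\Ord{\sqrt N}$. This is exactly the content of \cite{Durr1996} (building on quantum search with an unknown number of solutions), so I would cite it rather than reprove it; converting its expected-time guarantee into the fixed-budget, constant-success-probability form used above is a routine application of Markov's inequality.
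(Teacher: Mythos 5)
Your proposal is correct and matches the paper's treatment: the paper simply cites D\"urr--H\o yer for this lemma without writing out a proof, and your argument (build the comparison oracle from the data-access unitary with two queries and an $\Ord{\log N}$-gate comparator, run the fixed-budget version of the exponential-search minimum finder, and boost the constant success probability to $1-\delta$ by $\Ord{\log(1/\delta)}$ independent repetitions, outputting the smallest value found) is exactly the standard derivation the citation presupposes. The complexity accounting and the handling of ties under the $\Ord{1/N}$ encoding accuracy are both sound.
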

The minimum finding can be turned straightforwardly into a maximum finding algorithm.
Next, we state the results for estimating the $\ell_1$-norm of a vector and preparing states encoding the square root of the vector elements.
\begin{lemma}[Quantum state preparation and norm estimation] \label{lemmaU}
Let $\eta >0$.
Given a non-zero vector $u \in [0,1]^N$, with $\max_j u_j = 1$. Given quantum access to $u$ via the operation $\ket j \ket{\bar 0} \to \ket j \ket{ u_j}$ on $\Ord{\log N + \log 1/ \eta}$ qubits, where $u_j$ is encoded to additive accuracy $\eta$. Then:
 \begin{enumerate}[(i)]
\item There exists a unitary operator that prepares the state
$\frac{1}{\sqrt{N}}  \sum_{j=1}^N \ket j  \left( \sqrt{ u_j } \ket{0} + \sqrt{1- u_j} \ket{1} \right)$ with two queries and number of gates $\Ord{\log N + \log 1/\eta}$.
Denote this unitary by $U_\chi$.
\item Let $\epsilon >0$ such that $\eta \leq \epsilon/ (2N)$ and $\delta \in (0,1)$. There exists a quantum algorithm that  provides an estimate $\Gamma_u$ of the $\ell_1$-norm
$\Vert u \Vert_1$  such that
$\left \vert \Vert u \Vert_1 - \Gamma_u\right \vert \leq \epsilon \Vert u \Vert_1$,
with probability at least $1-\delta$. The algorithm requires $\Ord{\frac{\sqrt{ N}}{\epsilon} \log(1/\delta)}$  queries and $\tOrd{\frac{\sqrt{ N }}{\epsilon} \log\left (1/\delta\right)}$ gates.
\item Let  $ \xi \in (0,1]$ such that $\eta \leq \xi /4N$ and $\delta \in(0,1)$. An approximation
$\ket{ \tilde p} =  \sum_{j=1}^N   \sqrt{ \tilde p_j }\ket j $ to the state
$\ket{u} := \sum_{j=1}^N   \sqrt{ \frac{u_j}{\Vert u \Vert_1}}\ket j  $ can be prepared with probability $1-\delta$, using
$\Ord{\sqrt{N} \log (1/\delta)}$ calls to the unitary of (i) and $\tOrd{\sqrt{N} \log (1/\xi)\log \left (1/\delta \right)}$  gates.
The approximation in $\ell_1$-norm of the probabilities is $\left \Vert   \tilde p -  \frac{u}{\Vert u\Vert_1} \right\Vert_1 \leq \xi$.
 \end{enumerate}
\end{lemma}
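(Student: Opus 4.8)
The plan is to base all three parts on a single state-preparation circuit $U_\chi$ together with the standard amplitude estimation and amplitude amplification routines~\cite{Brassard2002}. For part (i) I would apply Hadamards to the index register to produce $\frac1{\sqrt N}\sum_j\ket j\ket{\bar 0}$, query the oracle to append $\ket{u_j}$, compute the rotation angle $\theta_j = 2\arccos\sqrt{u_j}$ into a scratch register by reversible arithmetic (this costs $\Ord{\mathrm{poly}\log(1/\eta)}$ gates since $u_j$ has $\Ord{\log(1/\eta)}$ bits), apply a controlled $R_y(\theta_j)$ to a fresh ancilla qubit so that it becomes $\sqrt{u_j}\ket0+\sqrt{1-u_j}\ket1$, and finally uncompute $\theta_j$ and run the oracle a second time to erase $\ket{u_j}$. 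This yields the stated state with two queries and $\Ord{\log N + \log(1/\eta)}$ gates.

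For part (ii) the key observation is that in $U_\chi\ket{\bar 0}$ the probability of observing the ancilla in $\ket0$ equals $\frac1N\sum_j u_j = \Vert u\Vert_1/N =: a^2$ (using $u_j\ge 0$), and since $\max_j u_j=1$ we have $\Vert u\Vert_1\ge 1$, hence $a\ge 1/\sqrt N$. I would run amplitude estimation on $U_\chi$ with good subspace ``ancilla $=\ket0$'', targeting additive error $\epsilon/(3\sqrt N)$ on $a$, which costs $\Ord{\sqrt N/\epsilon}$ applications of $U_\chi$. Because $\epsilon/(3\sqrt N)\le \epsilon a/3$, the estimate obeys $|\tilde a - a|\le \epsilon a/3$, so $|\tilde a^2 - a^2| = |\tilde a - a|(\tilde a + a)\le \epsilon a^2$ and $\Gamma_u := N\tilde a^2$ is a relative-$\epsilon$ estimate of $\Vert u\Vert_1$. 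The bias from finite precision (each $u_j$ is off by $\le\eta$, so the true $\Vert u\Vert_1$ is off by $\le N\eta\le \epsilon/2\le \tfrac\epsilon2\Vert u\Vert_1$) is absorbed by halving the target accuracy, and a median over $\Ord{\log(1/\delta)}$ runs boosts the success probability to $1-\delta$.

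For part (iii) I would note that the ancilla-$\ket0$ branch of $U_\chi\ket{\bar 0}$, renormalised, is exactly $\sum_j\sqrt{u_j/\Vert u\Vert_1}\ket j$. Applying amplitude amplification to this branch — in the variant that needs only the a priori lower bound $a\ge1/\sqrt N$ rather than the exact value — succeeds with constant probability after $\Ord{\sqrt N}$ uses of $U_\chi$ and $U_\chi^\dagger$ and, on success, collapses onto that branch; repeating $\Ord{\log(1/\delta)}$ times gives success probability $1-\delta$. The resulting $\ket{\tilde p}$ is the renormalised good branch, so it differs from $\ket u$ only through the $\eta$-perturbation of the entries, and the hypothesis $\eta\le\xi/(4N)$ is precisely what forces the induced $\ell_1$-distance of the probability vectors to be at most $\xi$ — an elementary estimate of how an $\ell_\infty$-perturbation of the unnormalised $u$ moves $u/\Vert u\Vert_1$. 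The gate count is the number of $U_\chi$ calls times $\Ord{\log N + \log(1/\eta)} = \tOrd{\log(1/\xi)}$ per call.

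The conceptual steps are routine once $U_\chi$ is in hand; the hard part will be the error bookkeeping. In (ii) one must correctly turn the additive guarantee of amplitude estimation into a relative error on $\Vert u\Vert_1$ and check that $\eta\le\epsilon/(2N)$ is exactly what controls the precision bias; in (iii) one must verify that $\eta\le\xi/(4N)$ suffices and that it is postselection after amplification, not plain amplification, that produces an $\ell_1$-close output — if one instead left the residual ``bad'' component in the state, the infidelity would scale like the square root of the residual amplitude, so the stopping rule and the choice of amplification variant need care. I also expect some attention to be needed to confirm that the amplitude-estimation and amplitude-amplification costs collapse to exactly $\Ord{\sqrt N/\epsilon}$ and $\Ord{\sqrt N}$ applications of $U_\chi$, respectively.
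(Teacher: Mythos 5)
Your proposal is correct and follows essentially the same route as the paper's proof: the same two-query construction of $U_\chi$ with a controlled rotation and uncomputation, amplitude estimation on the ancilla-$\ket{0}$ probability combined with the lower bound $\Vert u\Vert_1\geq 1$ (from $\max_j u_j=1$) to convert the additive guarantee into a relative one for part (ii), and amplitude amplification in the unknown-normalization/exponential-search variant for part (iii), with the $N\eta$ perturbation of $\Vert u\Vert_1$ absorbed exactly as you describe. The only differences are cosmetic (you track the amplitude $\sqrt{a}$ where the paper tracks the probability $a=\Vert u\Vert_1/N$), and the error bookkeeping you flag as the delicate part is carried out in the paper just as you anticipate.
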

\begin{proof}
For (i),  prepare a uniform superposition of all $\ket j$ with $\Ord{\log N}$ Hadamard gates. With the quantum query access, perform
\be
\frac{1}{\sqrt{N}} \sum_{j=1}^N &\ket j& \ket {\bar 0} \to \frac{1}{\sqrt{N}}  \sum_{j=1}^N \ket j  \ket{ u_j} \ket { 0} \\
&\to& \frac{1}{\sqrt{N}}  \sum_{j=1}^N \ket j  \ket{ u_j}  \left( \sqrt{ u_j} \ket{0} + \sqrt{1-u_j } \ket{1} \right). \nonumber
\ee
The steps consist of an oracle query and a controlled rotation. The rotation is well-defined as $ u_j \leq 1$ and costs $\Ord{ \log 1/\eta}$ gates. Then uncompute the data register $\ket{  u_j } $ with another oracle query.

For (ii),
define a unitary $\mathcal U = U_\chi \left(\mathbbm 1 - 2 \ket{\bar 0}\bra{\bar 0}\right) \left(U_\chi\right)^\dagger$, with $U_\chi$ from (i). Define another unitary by $\mathcal V = \mathbbm 1-\mathbbm 1 \otimes \ket{0} \bra{0}$.
Using $K$ applications of $\mathcal U$ and $\mathcal V$, Amplitude Estimation \cite{Brassard2002} allows to provide an estimate $\tilde a$ of the quantity
$a = \frac{\Vert u \Vert_1}{N }$
to accuracy $\vert \tilde a - a \vert \leq 2 \pi \frac{\sqrt{a(1-a)} }{K} + \frac{\pi^2}{K^2}$.
Following \cite{Apeldoorn2017}, take
$
K> \frac{6 \pi }{\epsilon} \sqrt{ N} ,
$
 which obtains
 \be
\vert \tilde a - a \vert
 &\leq& \frac{\pi}{K}\left( 2\sqrt{a}+ \frac{\pi}{K} \right)
 < \frac{\epsilon_1}{6 }  \sqrt{ \frac{1}{N }  } \left( 2\sqrt{a}+  \frac{\epsilon }{12}\sqrt{ \frac{1}{N } } \right) \nonumber \\
 &\leq&  \frac{\epsilon}{6 }  \sqrt{ \frac{1}{N}  }  \left( 3 \sqrt{a} \right)
 = \frac{\epsilon \sqrt{\Vert u \Vert_1} }{2 N}.
\ee
Since $\Vert u \Vert_1 \geq 1$ by assumption, we have
$ \vert \tilde a - a \vert \leq \frac{\epsilon \Vert u \Vert_1 }{2 N}$.
Also, there is an inaccuracy arising from the additive error $\eta$ of each $u_j$. As it was assumed that
$\eta\leq \epsilon / (2N)$,  the overall multiplicative error $\epsilon$ is obtained for the estimation.
For performing a single run of amplitude estimation with $K$ steps,
we require
$\Ord{K} =\Ord{\frac{\sqrt{ N }}{\epsilon} }$
queries to the oracles and
$\Ord{\frac{\sqrt{ N }}{\epsilon} \left(\log N + \log (N / \epsilon)  \right) }$
gates.

For (iii), rewrite the state from (i)
as
\be \sqrt{ \frac{{\Vert  u \Vert_1} }{N}}  &\sum_{j=1}^N&    \sqrt{ \frac{u_j  }{{\Vert  u \Vert_1}}} \ket j \ket{0}  \\ &+&  \sqrt{1-\frac{{\Vert  u \Vert_1}}{N }}  \sum_{j=1}^N  \sqrt{\frac{1- u_j }{N -{\Vert  u \Vert_1}} } \ket j \ket{1}.\nonumber
\ee
Now amplify the $\ket 0$ part using Amplitude Amplification \cite{Brassard2002} via the exponential search technique without knowledge of the normalization, to prepare
$\sum_{j=1}^N \ket j   \sqrt{ \frac{u_j}{\Vert u \Vert_1}}$ with success probability $1-\delta$.
The amplification requires $\Ord{\sqrt{ \frac{N }{ \Vert u \Vert_1}}\log (1/\delta)}= \Ord{\sqrt N \log (1/\delta)}$ calls to the unitary of (i), as ${\Vert u \Vert_1} \geq 1$. The gate complexity derives from the gate complexity of (i).
Denote the  $\eta$-additive approximation to $u_j$ by $\tilde u_j$,
and evaluate the $\ell_1$-distance of the probabilities.
First, $ \left \vert \Vert u\Vert_1 - \Vert \tilde u\Vert_1\right \vert \leq N \eta$.
One obtains $\left \Vert   \tilde p -  \frac{u}{\Vert u\Vert_1} \right \Vert_1 =\left \Vert  \frac{\tilde u}{{\Vert \tilde u\Vert_1}}  -  \frac{u}{\Vert u\Vert_1} \right \Vert_1 \leq \sum_j \left \vert \frac{\tilde u_j}{{\Vert \tilde  u\Vert_1}}  -  \frac{u_j}{{\Vert \tilde u\Vert_1}} \right \vert + \sum_j \left \vert \frac{ u_j}{{\Vert \tilde u\Vert_1}}  -  \frac{u_j}{\Vert u\Vert_1}\right \vert  \leq \frac{N \eta} {{\Vert \tilde u\Vert_1}} + \frac{ N \eta } {{\Vert \tilde u\Vert_1}}  $. We also obtain $\frac{1}{ {\Vert \tilde u\Vert_1} } \leq \frac{1}{ \Vert u\Vert_1 -N\eta } \leq \frac{2}{ \Vert u\Vert_1}$ for $\eta \leq  \Vert u\Vert_1 / 2 N $. Since $\eta \leq \Vert u\Vert_1 \xi/(4N)$, the distance is $ \left \Vert   \tilde p -  \frac{u}{\Vert u\Vert_1} \right \Vert_1 \leq \xi$ as desired.
\end{proof}

\begin{lemma} [Quantum inner product estimation with relative accuracy] \label{lemmaInnerProduct1}
Let $\epsilon,\delta \in(0,1)$.
Given quantum access to two vectors $u,v \in [0,1]^N$, where $u_j$ and $v_j$ are encoded to additive accuracy $\eta= \Ord{1/N}$. Then,
an estimate $I$ for the inner product can be provided such that $\vert I - u\cdot v /\Vert u\Vert_1 \vert \leq \epsilon\  u\cdot v /\Vert u\Vert_1$ with success probability $1-\delta$.
This estimate is obtained with $\Ord{\frac{ \sqrt{N} }{\epsilon} \log \left (\frac{1}{\delta} \right )  }$ queries and $\tOrd{\frac{ \sqrt{N} }{\epsilon} \log \left (\frac{1}{\delta} \right )  }$ quantum gates.
\end{lemma}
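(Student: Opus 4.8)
The plan is to reduce the relative-accuracy estimate of $u\cdot v/\Vert u\Vert_1$ to two relative-accuracy $\ell_1$-norm estimates via Lemma~\ref{lemmaU}~(ii), using the scale-invariant identity
\be
\frac{u\cdot v}{\Vert u\Vert_1}=\frac{\Vert u\odot v\Vert_1}{\Vert u\Vert_1},
\ee
which writes the target as a ratio of $\ell_1$-norms of two vectors in $[0,1]^N$ (the entries of $u\odot v$ are $u_jv_j\in[0,1]$ since $u,v\in[0,1]^N$). First I would build quantum access to the entrywise product: from the two given oracles $\ket j\ket{\bar 0}\to\ket j\ket{u_j}$ and $\ket j\ket{\bar 0}\to\ket j\ket{v_j}$, compute $\ket j\ket{\bar 0}\to\ket j\ket{u_jv_j}$ with two queries, one multiplication circuit, and uncomputation, encoding the product to additive accuracy $\Ord{1/N}$.

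Next, to meet the hypothesis $\max_j u_j=1$ of Lemma~\ref{lemmaU}~(ii), I would find $u_{\max}=\max_j u_j$ and $(u\odot v)_{\max}$ with the maximum-finding algorithm (the max version of Lemma~\ref{lemmaMin}), each with failure probability $\delta/4$, and feed the rescaled oracles --- entries divided by the corresponding maximum --- into Lemma~\ref{lemmaU}~(ii). Running that lemma with relative accuracy $\epsilon/3$ and failure probability $\delta/4$ on the rescaled $u$ gives $\Gamma_u$ approximating $\Vert u\Vert_1/u_{\max}$ to relative error $\epsilon/3$, and likewise on the rescaled $u\odot v$ gives $\Gamma_{uv}$ approximating $\Vert u\odot v\Vert_1/(u\odot v)_{\max}$ to relative error $\epsilon/3$. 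I would output
\be
I:=\frac{(u\odot v)_{\max}}{u_{\max}}\cdot\frac{\Gamma_{uv}}{\Gamma_u},
\ee
so that, using $(1+\epsilon/3)/(1-\epsilon/3)\le 1+\epsilon$ for $\epsilon\in(0,1)$, we get $\vert I-u\cdot v/\Vert u\Vert_1\vert\le\epsilon\,u\cdot v/\Vert u\Vert_1$. A union bound over the at most four probabilistic subroutines gives success probability $1-\delta$; the query cost is $\Ord{\sqrt N\log(1/\delta)}$ for the two maxima plus $\Ord{\frac{\sqrt N}{\epsilon}\log(1/\delta)}$ for the two norm estimates, and the gate cost is the same up to the poly-logarithmic arithmetic overhead, matching the claimed $\Ord{\frac{\sqrt N}{\epsilon}\log(1/\delta)}$ queries and $\tOrd{\frac{\sqrt N}{\epsilon}\log(1/\delta)}$ gates.

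The main obstacle is the \emph{relative} (rather than additive) accuracy: $u\cdot v$ can be arbitrarily small, so a plain amplitude estimation of $u\cdot v/N$ with a fixed budget $\Theta(\sqrt N/\epsilon)$ of steps does not by itself yield a multiplicative-$\epsilon$ estimate --- the additive error $\sim\sqrt a/K$ of an amplitude-squared $a$ is only $\le\epsilon a$ once $aN\ge 1$. Dividing $u\odot v$ (and $u$) by its maximum entry forces the $\ell_1$-norm to be $\ge 1$, hence $aN\ge 1$, which is precisely why Lemma~\ref{lemmaU}~(ii) delivers relative accuracy. The remaining care points are: recompute the rescaled entries to enough bits that their encoding error stays below $\epsilon/(2N)$, as required by that lemma; and dispose of the trivial case $u\cdot v=0$ (detected when $(u\odot v)_{\max}$ rounds to $0$, in which case output $0$) separately.
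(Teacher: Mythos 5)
Your proposal is correct and follows essentially the same route as the paper's proof: the paper likewise defines $z_j = u_jv_j$, finds $u_{\max}$ and $z_{\max}$ by quantum maximum finding, applies Lemma~\ref{lemmaU}~(ii) to the rescaled vectors $u/u_{\max}$ and $z/z_{\max}$, outputs the rescaled ratio of the two norm estimates, and handles the $z_{\max}=0$ case separately (the only difference being trivial bookkeeping: the paper uses subroutine accuracy $\epsilon/2$ together with its ratio-error Lemma~\ref{lemmaErrorRatio} and a final $\epsilon\to\epsilon/2$ rescaling, where you use $\epsilon/3$ and the bound $(1+\epsilon/3)/(1-\epsilon/3)\le 1+\epsilon$).
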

\begin{proof}
Via Lemma \ref{lemmaMin}, determine $u_{\max}$ with success probability $1-\delta$ with $\Ord{\sqrt N \log \frac{1}{\delta}}$ queries and $\tOrd{\sqrt N \log \left( \frac{1}{\delta}\right )}$ quantum gates.
Apply  Lemma \ref{lemmaU} with the vector $ \frac{ u}{ u_{\max}}$ to obtain an estimate $\Gamma_u$ of the norm $\left \Vert  \frac{ u}{ u_{\max}} \right \Vert_1$ to relative accuracy $\epsilon_u= \epsilon/2$ with success probability $1-\delta$.
This estimation takes $\Ord{\frac{ \sqrt{N} }{\epsilon} \log \left (\frac{1}{\delta} \right )  }$ queries and $\tOrd{\frac{ \sqrt{N} }{\epsilon} \log \left (\frac{1}{\delta} \right )  }$ quantum gates.

Define the vector $z$ with $z_j = u_j v_j$.
Via Lemma \ref{lemmaMin}, determine $ z_{\max}$ with success probability $1-\delta$ with $\Ord{\sqrt N \log \frac{1}{\delta}}$ queries and $\tOrd{\sqrt N \log \left( \frac{1}{\delta}\right )}$ quantum gates.
If $ z_{\max} = 0$ up to numerical accuracy, the estimate is $I = 0$ and we are done.
Otherwise,
apply  Lemma \ref{lemmaU} with the vector $ \frac{ z}{ z_{\max}}$ to obtain an estimate $\Gamma_z$ of the norm $\left \Vert  \frac{ z}{ z_{\max}} \right \Vert_1$ to relative accuracy $\epsilon_u = \epsilon/2$ with success probability $1-\delta$.
This estimation takes $\Ord{\frac{ \sqrt{N} }{\epsilon} \log \left (\frac{1}{\delta} \right )  }$ queries and $\tOrd{\frac{ \sqrt{N} }{\epsilon} \log \left (\frac{1}{\delta} \right )  }$ quantum gates.

With Lemma \ref{lemmaErrorRatio}, we have
\be
\left \vert \frac{\Gamma_z}{\Gamma_u} - \frac{u_{\max}}{z_{\max}}\frac{u\cdot v}{\Vert u\Vert_1} \right \vert &\leq& \frac{u_{\max}}{z_{\max}} \frac{u\cdot v }{\Vert u\Vert_1 } \frac{\epsilon_z + \epsilon_u}{ (1-\epsilon_u)} \\
&\leq& 2\epsilon \frac{u_{\max}}{z_{\max}} \frac{u\cdot v }{\Vert u\Vert_1 },
\ee
since $\epsilon_u < 1/2$.
Set
\be
I= \frac{z_{\max}}{u_{\max}} \frac{\Gamma_z}{\Gamma_u},
\ee
and we have $\vert I - u\cdot v /\Vert u\Vert_1 \vert \leq 2\epsilon\  u\cdot v /\Vert u\Vert_1$.
The total success probability of  the four probabilistic steps is at least $1-4\delta$ via a union bound. Choosing $\epsilon \to \epsilon/2$ and $\delta \to \delta/4$ leads to the result.
\end{proof}

\begin{lemma} [Quantum inner product estimation with additive accuracy] \label{lemmaInnerProduct2}
Let $\epsilon,\delta \in(0,1)$.
Given quantum access to a non-zero vector $u \in [0,1]^N$ and another vector $v \in [-1,1]^N$, where $u_j$ and $v_j$ are encoded to additive accuracy $\eta= \Ord{1/N}$. Then,
an estimate $I$ for the inner product can be provided such that $\vert I - u\cdot v / \Vert u\Vert_1 \vert \leq \epsilon$ with success probability $1-\delta$.
This estimate is obtained with $\Ord{\frac{ \sqrt{N} }{\epsilon} \log \left (\frac{1}{\delta} \right )  }$ queries and $\tOrd{\frac{ \sqrt{N} }{\epsilon} \log \left (\frac{1}{\delta} \right )  }$ quantum gates.
\end{lemma}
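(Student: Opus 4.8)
The plan is to reduce the additive-accuracy estimation to the relative-accuracy estimation already established in Lemma~\ref{lemmaInnerProduct1}, via a simple affine shift that turns the signed vector $v$ into a nonnegative one. Since $v \in [-1,1]^N$, set $v' := \frac{1}{2}\left(v + \vec 1\right) \in [0,1]^N$. Because $u \geq 0$ we have $u\cdot v = 2\, u\cdot v' - u\cdot \vec 1 = 2\, u\cdot v' - \Vert u\Vert_1$, and therefore $\frac{u\cdot v}{\Vert u\Vert_1} = 2\,\frac{u\cdot v'}{\Vert u\Vert_1} - 1$. The key observation is that $\frac{u\cdot v'}{\Vert u\Vert_1} = \sum_{j=1}^N \frac{u_j}{\Vert u\Vert_1}\, v'_j$ is a convex combination of numbers in $[0,1]$, hence lies in $[0,1]$; consequently a \emph{relative}-accuracy estimate of this quantity is automatically an \emph{additive}-accuracy estimate with the same (or smaller) error.

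Concretely, I would first build quantum access to $v'$ from the oracle for $v$ by a single arithmetic circuit performing $\ket j\ket{\bar 0}\to\ket j\ket{(v_j+1)/2}$ (computing into a fresh register and uncomputing the queried register with a second oracle call), at cost $\tOrd{1}$ queries and $\tOrd{1}$ gates; since $v\mapsto (v+1)/2$ is $1/2$-Lipschitz this preserves the $\Ord{1/N}$ encoding precision required by Lemma~\ref{lemmaInnerProduct1}. Then I would invoke Lemma~\ref{lemmaInnerProduct1} on the pair $(u,v')$ with relative accuracy $\epsilon_1 = \epsilon/2$ and failure probability $\delta$, obtaining $I'$ with $\left\vert I' - \tfrac{u\cdot v'}{\Vert u\Vert_1}\right\vert \leq \epsilon_1 \tfrac{u\cdot v'}{\Vert u\Vert_1} \leq \epsilon_1 = \epsilon/2$, where the last inequality uses $\tfrac{u\cdot v'}{\Vert u\Vert_1}\leq 1$. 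Finally I would output $I := 2 I' - 1$, so that $\left\vert I - \tfrac{u\cdot v}{\Vert u\Vert_1}\right\vert = 2\left\vert I' - \tfrac{u\cdot v'}{\Vert u\Vert_1}\right\vert \leq \epsilon$. The query and gate complexities are those of Lemma~\ref{lemmaInnerProduct1} with $\epsilon\to\epsilon/2$, namely $\Ord{\frac{\sqrt N}{\epsilon}\log(1/\delta)}$ queries and $\tOrd{\frac{\sqrt N}{\epsilon}\log(1/\delta)}$ gates, and the overall success probability is $1-\delta$.

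There is no serious obstacle here; the two points that merit a line of care are the degenerate case and the precision bookkeeping. For the degenerate case $v = -\vec 1$ (equivalently $v'\equiv 0$), the inner product $u\cdot v'$ is exactly zero, and Lemma~\ref{lemmaInnerProduct1} returns $I'=0$ exactly in the branch where the relevant maximum vanishes, giving $I = -1 = \tfrac{u\cdot v}{\Vert u\Vert_1}$, so the bound holds trivially. For the precision, one only needs that the affine preprocessing and the fixed-point arithmetic keep all encoded quantities within the $\Ord{1/N}$ tolerance assumed downstream, which follows immediately from the Lipschitz remark above. Thus the whole statement reduces to a short wrapper around Lemma~\ref{lemmaInnerProduct1}.
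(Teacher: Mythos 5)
Your proof is correct and, at its core, uses the same idea as the paper: an affine shift of $v$ into nonnegative entries so that the signed inner product becomes a ratio of $\ell_1$-norms that amplitude estimation can handle. The difference is in the packaging. The paper does not reduce to Lemma~\ref{lemmaInnerProduct1}; instead it works with $z_j = u_j(v_j+3) \in [0,4]$, estimates $\Vert z/z_{\max}\Vert_1$ and $\Vert u/u_{\max}\Vert_1$ separately via Lemma~\ref{lemmaU}, and propagates the two relative errors through the ratio with Lemma~\ref{lemmaErrorRatio}, using $\Vert z\Vert_1/\Vert u\Vert_1 \le 4$ to land at an $8\epsilon$ additive bound before rescaling. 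You instead shift to $v' = (v+\vec 1)/2 \in [0,1]^N$ and make the key observation that $u\cdot v'/\Vert u\Vert_1$ is a convex combination of values in $[0,1]$, so the \emph{relative}-accuracy guarantee of Lemma~\ref{lemmaInnerProduct1} immediately yields an \emph{additive} guarantee; this lets you treat Lemma~\ref{lemmaInnerProduct1} as a black box and avoid redoing the error propagation. Both routes give the same $\Ord{\frac{\sqrt N}{\epsilon}\log(1/\delta)}$ complexity; yours is shorter and more modular, while the paper's shift by $+3$ keeps the shifted vector bounded away from zero (so the auxiliary norm is never degenerate), whereas your route relies on Lemma~\ref{lemmaInnerProduct1}'s handling of the $z_{\max}=0$ branch for the degenerate case, which you correctly flag. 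One small point worth noting if this were spliced in: the byproducts $u_{\max}$ and the estimate of $\Vert u/u_{\max}\Vert_1$ that the Quantum Sparsitron extracts from this subroutine are still produced inside your inner call to Lemma~\ref{lemmaInnerProduct1}, so nothing downstream breaks.
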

Note that as a byproduct, the value $u_{\max}$ and an estimate of $\Vert u /u_{\max} \Vert_1$ with relative accuracy $\epsilon$ can be provided with probability at least $1-\delta$.
\begin{proof}
Via Lemma \ref{lemmaMin}, determine $\Vert u\Vert_{\max}$ with success probability $1-\delta$ with $\Ord{\sqrt N \log \frac{1}{\delta}}$ queries and $\tOrd{\sqrt N  \log \left( \frac{1}{ \eta}\right) \log \left( \frac{1}{\delta}\right )}$ quantum gates.
Apply  Lemma \ref{lemmaU} with the vector $ \frac{ u}{ u_{\max}}$ to obtain an estimate $\Gamma_u$ of the norm $\left \Vert  \frac{ u}{ u_{\max}} \right \Vert_1$ to relative accuracy $\epsilon_u= \epsilon/2$ with success probability $1-\delta$.
This estimation takes $\Ord{\frac{ \sqrt{N} }{\epsilon} \log \left (\frac{1}{\delta} \right )  }$ queries and $\tOrd{\frac{ \sqrt{N} }{\epsilon} \log \left (\frac{1}{\delta} \right )  }$ quantum gates.

Similarily, consider the vector $z$ with elements $z_j := u_j \left (v_j+3\right) \in [0,4]$.
Determine $\Vert z\Vert_{\max}$ with success probability $1-\delta$ with $\Ord{\sqrt N \log \frac{1}{\delta}}$ queries and $\tOrd{\sqrt N  \log \left( \frac{1}{\delta}\right )}$ quantum gates.
Apply  Lemma \ref{lemmaU} with the vector $z/ z_{\max}$ to obtain an estimate $\Gamma_z$ of the norm $\Vert z/ z_{\max} \Vert_1$ to relative accuracy $\epsilon_z =  \epsilon/2$ with success probability $ 1-\delta$.
This estimation takes $\Ord{\frac{ \sqrt{N} }{\epsilon} \log \left (\frac{1}{\delta} \right )  }$ queries and $\tOrd{\frac{ \sqrt{N} }{\epsilon} \log \left (\frac{1}{\delta} \right )  }$.

The exact quantities are related via
\be
\frac{u\cdot v}{\Vert u\Vert_1}  = \frac{z_{\max} }{u_{\max} } \frac{ \Vert \frac{z}{ z_{\max}} \Vert_1 } { \Vert \frac{ u}{ u_{\max}} \Vert_1}  - 3.
\ee
Considering the estimator $I = \frac{z_{\max} }{u_{\max} } \frac{ \Gamma_z } { \Gamma_u}  - 3$, from Lemma \ref{lemmaErrorRatio}, we have
\be
\left  \vert I -  \frac{u\cdot v}{ \Vert u\Vert_1 } \right \vert &=& \frac{z_{\max} }{u_{\max} } \left \vert   \frac{ \Gamma_z } { \Gamma_u} - \frac{ \Vert \frac{z}{ z_{\max}} \Vert_1 } { \Vert \frac{ u}{ u_{\max}} \Vert_1}  \right \vert \\ &\leq&
\frac{ \epsilon_u + \epsilon_{z}}{1-  \epsilon_u}  \frac{ \Vert z \Vert_1 } { \Vert u\Vert_1} \leq 8 \epsilon. \nonumber
\ee
In the last steps we have used
that
\be
 \frac{ \Vert z \Vert_1 } { \Vert u\Vert_1} \equiv \frac{\sum_j u_j  (v_j+3)}{\sum_j u_j } \leq\frac{4 \sum_j  u_j }{\sum_j u_j } = 4,
\ee
and $\epsilon_u < 1/2$.

All steps together take $\Ord{\frac{\sqrt N}{\epsilon} \log \frac{1}{\delta}}$ queries and $\tOrd{\frac{\sqrt N}{\epsilon}  \log \left( \frac{1}{\delta}\right )}$ gates.
The total success probability of all the probabilistic steps is at least $1-4\delta$ via a union bound. Choosing $\epsilon \to \epsilon/8$ and $\delta \to \delta/4$ leads to the result.

\end{proof}

\begin{lemma}\label{lemmaErrorRatio}
Let $\tilde a$ be an estimate of $a>0$ such that
$\vert \tilde a- a \vert \leq \epsilon_a a$.
with $\epsilon_a \in (0,1)$.
Similarly, let $\tilde b$ be an estimate of $b>0$ and $\epsilon_b \in (0,1)$ such that
$\vert \tilde b - b \vert \leq \epsilon_b b$.
Then the ratio $a/b$ is estimated to relative error
$
 \left \vert \frac{\tilde a}{\tilde b} -  \frac{a}{b}  \right\vert \leq \left ( \frac{\epsilon_a + \epsilon_b}{1-\epsilon_b} \right) \frac{a}{b}$.
\end{lemma}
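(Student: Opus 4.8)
The plan is a short algebraic computation; no quantum or probabilistic machinery is involved. First I would put the difference of the two ratios over a common denominator,
\[
\frac{\tilde a}{\tilde b} - \frac{a}{b} = \frac{\tilde a\, b - a\, \tilde b}{\tilde b\, b},
\]
and then bound numerator and denominator separately, using only the hypotheses $\vert \tilde a - a\vert \leq \epsilon_a a$ and $\vert \tilde b - b\vert \leq \epsilon_b b$ together with $a,b>0$.

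For the numerator I would add and subtract $ab$ to separate the two sources of error,
\[
\tilde a\, b - a\, \tilde b = (\tilde a - a)\, b - a\,(\tilde b - b),
\]
so that the triangle inequality gives $\vert \tilde a\, b - a\, \tilde b\vert \leq \epsilon_a a b + a \epsilon_b b = (\epsilon_a + \epsilon_b)\, a b$. For the denominator, the assumption $\epsilon_b \in (0,1)$ is exactly what is needed to keep $\tilde b$ positive and bounded away from zero: $\tilde b \geq b - \epsilon_b b = (1-\epsilon_b) b > 0$, hence $\tilde b\, b \geq (1-\epsilon_b)\, b^2$. Dividing the numerator bound by the denominator bound then yields
\[
\left\vert \frac{\tilde a}{\tilde b} - \frac{a}{b} \right\vert \leq \frac{(\epsilon_a + \epsilon_b)\, a b}{(1-\epsilon_b)\, b^2} = \left(\frac{\epsilon_a + \epsilon_b}{1-\epsilon_b}\right)\frac{a}{b},
\]
which is the claimed inequality.

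The only step deserving a moment of care — and the nearest thing to an obstacle — is the lower bound on the denominator, which would fail without the hypothesis $\epsilon_b < 1$; everything else follows immediately from the triangle inequality. I would also note in passing that the hypothesis $\epsilon_a<1$ is not actually used in the argument, so the bound holds for any $\epsilon_a \geq 0$.
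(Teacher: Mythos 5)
Your proof is correct and follows essentially the same route as the paper's: the same common-denominator decomposition $\tilde a b - a\tilde b = (\tilde a - a)b - a(\tilde b - b)$, the triangle inequality, and the lower bound $\tilde b \geq (1-\epsilon_b)b$. Your side remark that $\epsilon_a < 1$ is never used is also accurate.
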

\begin{proof}
Note that
$ b -  \tilde b \leq \vert \tilde b - b\vert \leq \epsilon_b b$, from which we
deduce $\frac{1}{ \tilde b} \leq \frac{1}{ b (1-\epsilon_b)}$.
In addition,
$
 \left \vert \frac{\tilde a}{\tilde b} -  \frac{a}{b}  \right\vert =
  \left \vert \frac{\tilde a b - a \tilde b}{\tilde b b}  \right\vert = \left \vert \frac{\tilde a b - ab + ab - a \tilde b}{\tilde b b}  \right\vert = \left \vert \frac{\tilde a  - a}{\tilde b} + \frac{a}{\tilde b} \frac{b - \tilde b}{ b}  \right\vert
  \leq \left \vert \frac{\tilde a  - a}{\tilde b} \right\vert+  \frac{a}{\tilde b}  \left \vert \frac{b - \tilde b}{ b}  \right\vert \leq \frac{\epsilon_a a + \epsilon_b a }{\tilde b}   \leq  \frac{a}{b}\frac{\epsilon_a +\epsilon_b}{ (1-\epsilon_b)}.
$
\end{proof}

\bibliographystyle{apsrev}
\bibliography{Mult,Qfin,Subm}

\end{document}